\pdfoutput=1

\documentclass[letter,journal,draftcls,onecolumn,12pt]{IEEEtran}
\IEEEoverridecommandlockouts

\usepackage[utf8]{inputenc}
\usepackage[T1]{fontenc}
\usepackage{url}
\usepackage{ifthen}
\usepackage[cmex10]{amsmath} 
\interdisplaylinepenalty=2500

\usepackage{cite}
\usepackage{psfrag,graphicx}
\usepackage[tight,footnotesize]{subfigure}
\usepackage[cmex10]{amsmath}
\usepackage{amsthm}
\usepackage{bbm}
\usepackage{vmr-symbols-vecbold}
\usepackage{standard-macros}
\usepackage{subfigure}

\definecolor{Red}{rgb}{1.0, 0, 0}

\hyphenation{op-tical net-works semi-conduc-tor}

\newtheorem{theorem}{Theorem}
\newtheorem{corollary}{Corollary}

\newtheorem{definition}{Definition}

\begin{document}
\title{Testing Optimality of\\Sequential Decision-Making}

\author{Meik~D\"orpinghaus,~\IEEEmembership{Member,~IEEE}, Izaak Neri, \'{E}dgar Rold\'{a}n, Heinrich~Meyr,~\IEEEmembership{Life~Fellow,~IEEE}, and Frank J\"ulicher
\thanks{This work has been partly supported by the German Research Foundation (DFG) within the Cluster of Excellence EXC 1056 'Center for Advancing Electronics Dresden (cfaed)' and within the CRC 912 'Highly Adaptive Energy-Efficient Computing (HAEC)'. The material in this paper has been presented in part at the IEEE International Symposium on Information Theory (ISIT), Aachen, Germany, June 2017 \cite{dorpinghaus2017information}.}
\thanks{M.~D\"orpinghaus is with the Vodafone Chair Mobile Communications Systems and with the Center for Advancing Electronics Dresden (cfaed), Technische Universit\"at Dresden, 01062 Dresden, Germany (e-mail: meik.doerpinghaus@tu-dresden.de).}
\thanks{Izaak Neri is with the Max-Planck-Institute for the Physics of Complex Systems, Dresden, Germany, with the Max-Planck-Institute of Molecular Cell Biology and Genetics, Dresden, Germany, and with the Center for Advancing Electronics Dresden (cfaed), Technische Universit\"at Dresden, 01062 Dresden, Germany (e-mail: izaakneri@posteo.net).}
\thanks{\'{E}dgar Rold\'{a}n is with the Max-Planck-Institute for the Physics of Complex Systems, Dresden, Germany,  with the Center for Advancing Electronics Dresden (cfaed), Technische Universit\"at Dresden, 01062 Dresden, Germany, and with GISC -- Grupo Interdisciplinar de Sistemas Complejos, Madrid, Spain (e-mail: edgar@edgarroldan.com).}
\thanks{H.~Meyr is an emeritus of the Institute for Integrated Signal Processing Systems, RWTH Aachen University, 52056 Aachen, Germany and is now a grand professor of the Center for Advancing Electronics Dresden (cfaed) at Technische Universit\"at Dresden, 01062 Dresden, Germany (e-mail: meyr@iss.rwth-aachen.de).}
\thanks{Frank J\"ulicher is with the Max-Planck-Institute for the Physics of Complex Systems, Dresden and with the Center for Advancing Electronics Dresden (cfaed), Technische Universit\"at Dresden, 01062 Dresden, Germany, Germany (e-mail: julicher@pks.mpg.de).}}

\maketitle

\begin{abstract}
This paper provides a statistical method to test whether a system that performs a binary sequential hypothesis test is optimal in the sense of minimizing the average decision times while taking decisions with given reliabilities. The proposed method requires samples of the decision times, the decision outcomes, and the true hypotheses, but does not require knowledge on the statistics of the observations or the properties of the decision-making system. The method is based on fluctuation relations for decision time distributions which are proved for sequential probability ratio tests. These relations follow from the martingale property of probability ratios and hold under fairly general conditions. We illustrate these tests with numerical experiments and discuss potential applications.
\end{abstract}

\begin{IEEEkeywords}
Sequential hypothesis testing, sequential probability ratio test, sequential analysis, decision-making, mutual information.
\end{IEEEkeywords}

\section{Introduction}
In sequential decision-making it is important to make fast and reliable decisions. In this regard, consider, e.g., an autonomous car which has to decide whether an obstacle is present or not on the road. Such decisions are executed by dedicated signal processing algorithms. These algorithms should use the available measurements in an optimal way such that the average time to take a decision is minimized. For practical use it is key to test if the implemented decision algorithms achieve the optimum performance, i.e., if decisions are made as fast as possible with a given reliability. 

Sequential decision-making has been first mathematically formulated in the seminal work by A.\ Wald who introduced a sequential probability ratio test \cite{Wald1945}. Wald's test takes binary decisions on two hypotheses based on sequential observations of a stochastic process. For independent and identically distributed (i.i.d.) observations this test yields the minimum mean decision time for decisions with a given probability of error and a given hypothesis \cite{wald1948optimum}. Wald's test accumulates the likelihood ratio given by the sequence of observations and decides as soon as this cumulative likelihood ratio exceeds or falls below two given thresholds which depend on the required reliability of the decision. A key characteristic of such a sequential test is that its termination time is a random quantity depending on the actual realization of the observation sequence. The Wald test has been applied to non i.i.d.\ observation processes,  nonhomogeneous and correlated continuous-time processes, and has been generalized for multiple hypotheses  \cite{tartakovsky2014sequential};  general optimality criteria for sequential probability ratio tests have been proved when probabilities of errors tend to zero, see e.g.~\cite{lai1981asymptotic, tartakovsky1998asymptotically, tartakovsky1998asymptotic, draglia1999multihypothesis, tartakovsky2014sequential}.

Now we consider the decision-making device as a {\em black box} which takes as input the observation process, corresponding to one of two hypotheses, and gives as output a binary decision variable at a random decision time. Can we determine whether this decision-making device is optimal based on the statistics of the output of the device --- the decisions and the decision times --- and the knowledge of the true hypothesis? Indeed, in the present paper we introduce a test for optimality of sequential decision-making based on necessary conditions for optimality. Notably, this test does \emph{not} require knowledge of the realizations or the statistics of the observation processes.

We first consider a device which takes as input the realization of a  {\it continuous} stochastic process corresponding to one of the two hypothesis $H_1$ or $H_2$, and gives as output a binary decision variable $\mathsf{D}\in\{1,2\}$ (corresponding to the hypotheses $H_1$ and $H_2$, respectively) at the random decision time $\mathsf{T}$ elapsed since the beginning of the observations. We will show that optimality of sequential probability ratio tests --- in the sense that the mean decision time is minimized while fulfilling given reliability constraints --- requires that the following conditions on the distribution of the decision time $\mathsf{T}$ hold
\begin{IEEEeqnarray}{rCL}
p_{\mathsf{T}}(t|\mathsf{H}=1,\mathsf{D}=1)&=&p_{\mathsf{T}}(t|\mathsf{H}=2,\mathsf{D}=1)\label{OptCond1}\\
p_{\mathsf{T}}(t|\mathsf{H}=1,\mathsf{D}=2)&=&p_{\mathsf{T}}(t|\mathsf{H}=2,\mathsf{D}=2)\label{OptCond2}.
\end{IEEEeqnarray}
Here, $p_{\mathsf{T}}$ is the probability density of the decision time and $\mathsf{H}\in\{1,2\}$ (corresponding to the hypotheses $H_1$ and $H_2$) denotes the random binary hypothesis. The necessary conditions (\ref{OptCond1}) and~(\ref{OptCond2}) for optimality imply that the distribution of the decision time $\mathsf{T}$ given a certain outcome are independent of the actual hypothesis. Moreover, this implies that the decision time $\mathsf{T}$ of the optimal sequential test does not contain any information on which hypothesis is true beyond the decision outcome $\mathsf{D}$. As a consequence, we can quantify the optimality of a given black-box test by measuring the mutual information between the hypothesis $\mathsf{H}$ and the decision time $\mathsf{T}$ conditioned on the output of the test $\mathsf{D}$, i.e., $I(\mathsf{H};\mathsf{T}|\mathsf{D})$. In case the test is optimal it must hold that 
\begin{IEEEeqnarray}{rCL}
I(\mathsf{H};\mathsf{T}|\mathsf{D})&=&0.\label{MutInfCondOpt_Intro}
\end{IEEEeqnarray}

Based on the following example it can be seen that (\ref{MutInfCondOpt_Intro}) is a necessary but not a sufficient condition for optimality in the sense of minimizing the mean decision time given a certain reliability. Consider we have an optimal decision device using the Wald test. Now we delay all decisions by a constant time $t_{\rm delay}$. Still $I(\mathsf{H};\mathsf{T}+t_{\rm delay}|\mathsf{D})=0$ with $\mathsf{T}$ being the decision time of the Wald test. Indeed, (\ref{MutInfCondOpt_Intro}) is not a sufficient condition for the minimal mean decision time, but rather a measure for the optimal usage of information by the decision device. If $I(\mathsf{H};\mathsf{T}|\mathsf{D})>0$ this means that the decision time $\mathsf{T}$ contains additional information on the hypothesis $\mathsf{H}$ beyond the actual decision $\mathsf{D}$ implying that the decision device does not exploit all the available information. Hence, $I(\mathsf{H};\mathsf{T}|\mathsf{D})$ measures the degree of divergence from optimality in the sense of optimal usage of information. For practical purposes it is easier to test whether the black-box decision device fulfills the optimality condition in (\ref{MutInfCondOpt_Intro}) rather than testing if the decision device decides with the minimum mean decision time, since the minimum mean decision time is in general not known. Furthermore, consider that in experimental setups we can measure a decision output $\mathsf{D}$ at a certain time $\mathsf{T}+\mathsf{T}_{\rm delay}$, with $\mathsf{T}_{\rm delay}$ a random delay time, but in general we do not know at which time $\mathsf{T}$ the decision has been taken, as the decision device is a black box device and we cannot clearly separate the actual decision making process from nondecision processes.    If the decision time $\mathsf{T}_{\rm delay}$ is statistical independent of $\mathsf{H}$ when conditioned on $\mathsf{D}$ and $\mathsf{T}$, then $I(\mathsf{H};\mathsf{T}|\mathsf{D})=0$ implies that $I(\mathsf{H};\mathsf{T}+\mathsf{T}_{\rm delay}|\mathsf{D})=0$, and $I(\mathsf{H};\mathsf{T}+\mathsf{T}_{\rm delay}|\mathsf{D})$ can be used as a necessary condition to test optimality of decision devices. If additionally $I(\mathsf{H};\mathsf{T}_{\rm delay}|\mathsf{D}, \mathsf{T}+\mathsf{T}_{\rm delay})=0$, then $I(\mathsf{H};\mathsf{T}|\mathsf{D}) = I(\mathsf{H};\mathsf{T}+\mathsf{T}_{\rm delay}|\mathsf{D})$.  In this paper we derive the optimality conditions (\ref{OptCond1}) - (\ref{MutInfCondOpt_Intro}) and generalize them to  {\it discrete-time} observation processes. We furthermore formulate tests for optimal sequential hypothesis testing (sequential decision-making) based on (\ref{OptCond1}) - (\ref{MutInfCondOpt_Intro}). Finally,  we illustrate our results in computer experiments.

The optimality conditions (\ref{OptCond1}) - (\ref{MutInfCondOpt_Intro}) are of interest in different contexts. For example these conditions allow to test optimality of sequential decision-making in engineered devices. An example are decision-making devices based on machine learning such as deep neural networks. Such algorithms have the power to solve very complex tasks and adapt to specific environments by learning. The advantage of machine learning is that not all environmental situations need to be learned at design time, which for many applications like self driving cars is not practical. However, while neural networks exhibit best performance in comparison to other approaches the principles which lead to their successful operation are yet unclear. It would be very useful to quantify if decisions made on such deep learning approaches are close to optimal. In addition to engineering, the tests proposed in this paper could also allow to understand if specific biological systems use all available information optimally to make reliable decisions on the fly. In this regard, consider the example of sequential decision-making by humans in  two-choice decision tasks based on perceptual stimuli or biological cells making decisions on their fate based on extracellular cues. We discuss the application of the tests for optimality to these examples in more detail below.

\paragraph*{Notation}
We denote random variables by upper case sans serif letters, e.g., $\mathsf{X}$. All random quantities are defined on the measurable space $(\Omega,\mathcal{F})$ and are governed by the probability measure $\mathbb{P}$. The probability density function of a random variable $\mathsf{X}$ given $\mathsf{Y}=y$ is written $p_{\mathsf{X}}(x|\mathsf{Y}=y)$.        Moreover, for discrete random variables $P(\mathsf{X}=x|\mathsf{Y}=y)$ denotes the probability of $\mathsf{X}=x$ given $\mathsf{Y}=y$.  The restriction of the measure $\mathbb{P}$ to a sub-$\sigma$ algebra $\mathcal{G} \subseteq\mathcal{F}$ is written as $\left.\mathbb{P}\right|_{\mathcal{G}}$.  
Finally, $\log$ denotes the natural logarithm and $\log_2$ is the logarithm w.r.t.\ base $2$.       The mutual information and the conditional mutual information are defined by $I(\mathsf{X};\mathsf{Y}) = \mathrm{E}\left[\log_2 \frac{p_{\mathsf{X}}\left(\mathsf{X}|\mathsf{Y}\right)}{p_{\mathsf{X}}\left(\mathsf{X}\right)}\right]$ and $I(\mathsf{X};\mathsf{Y}|\mathsf{Z}) = \mathrm{E}\left[ \log_2 \frac{p_{\mathsf{X}}\left(\mathsf{X}|\mathsf{Y}, \mathsf{Z}\right)}{p_{\mathsf{X}}\left(\mathsf{X}|\mathsf{Z}\right)}\right]$, respectively, where the mathematical expectation $\mathrm{E}[\cdot]$ is taken with respect to the measure $\mathbb{P}$.  

\paragraph*{Organization of the Paper}

The paper is organized as follows.  After the introduction, we describe the system setup in detail in Section \ref{Sect_System_Cont} where we also give a precise problem formulation including definitions of optimality for sequential decision-making.    Subsequently, in Section \ref{sec:3} we derive the main theorems and corollaries describing properties of optimal  sequential probability ratio tests for   the case of continuous observation processes.   In Section \ref{sec:4} for certain conditions we extend  these theorems and corollaries to the discrete-time scenario.  In Section \ref{SectionTests} we formulate statistical tests to decide whether a black-box decision device performs optimal sequential decision-making based on the theorems and corollaries derived in Section  \ref{sec:3} and   \ref{sec:4}, and we also discuss how to measure the distance of the black-box decision device to optimality.   We illustrate the application of these tests based on numerical experiments in Section \ref{SectNumericalExperminents}.   In Section  \ref{sec:7}  we discuss the applicability and the limitations of the provided tests for optimality.
Readers who are mainly  interested in the application of the statistical tests for optimality may skip Sections \ref{sec:3} and \ref{sec:4}.

\section{System Setup and Problem Formulation}\label{Sect_System_Cont}
\subsection{System Setup}
We consider a sequential binary decision problem based on an observation process $\mathsf{X}_t$ with the time index $t$ either discrete, $t\in\mathbb{Z}_+$, or continuous, $t\in\mathbb{R}_+$. The stochastic process $\mathsf{X}_t$ is generated by one of two possible models corresponding to two hypotheses $H_1$ and $H_2$. To describe the statistics of the process $\mathsf{X}_t$ we consider the filtered probability space $(\Omega,\mathcal{F},\{\mathcal{F}_t\}_{t\ge 0},\mathbb{P})$ with  $\{\mathcal{F}_t\}_{t\ge 0}$ the natural filtration generated by the observation process $\mathsf{X}_t$ and the hypothesis $\mathsf{H}$.  We consider $\mathsf{H}$ to be a time independent random variable.  The statistics of the observation process under the two hypothesis are described by the conditional probability measures given the hypothesis $\mathbb{P}_{l}\left[\Phi\right] = \mathrm{E}\left[1_{\Phi}|\mathsf{H}=l\right]$ with $l\in\left\{1,2\right\}$ corresponding to the hypothesis $H_1$ and $H_2$, respectively \cite{liptser2001statistics}; here $1_{\Phi}(\omega)$ is the indicator function on the set $\Phi$.   We also consider the filtered probability spaces  $(\Omega,\mathcal{F},\{\mathcal{F}_t\}_{t\ge 0},\mathbb{P}_l)$ with $l\in\left\{1,2\right\}$  associated with the two hypotheses. 
 We consider for continuous-time processes that the filtration $\{\mathcal{F}_t\}_{t\ge 0}$ is right-continuous \cite{tartakovsky2014sequential}, i.e., $\mathcal{F}_t=\cap_{s>t}\mathcal{F}_s$ for all times $t\in\mathbb{R}_+$.

A sequential test makes binary decisions based on sequential observations of the process $\mathsf{X}_t$ and tries to guess which of the hypotheses $H_1$ and $H_2$ is true.     A sequential test $\delta = (\mathsf{D}, \mathsf{T})$ returns 
 a binary output $\mathsf{D}$ at a random time $\mathsf{T}$.   The decision time $\mathsf{T}$ is a stopping time,  which is determined by the time when $\mathsf{X}_t$ satisfies for the first time a certain criterion.  This stopping rule is non-anticipating in the sense that it depends only on observations of the input sequence up to the current time, i.e., the decision causally depends on the observation process. The decision function is a map from the trajectory $\{\mathsf{X}_t\}_0^{\mathsf{T}}$ to $\{1,2\}$, which determines the  decision of the test.

 We now consider the following class of sequential tests with given reliabilities
\begin{IEEEeqnarray}{rCL}
\mathcal{C}(\alpha_1,\alpha_2)&=&\{\delta\!: P(\mathsf{D}=2|\mathsf{H}=1)\!\le\! \alpha_2, P(\mathsf{D}=1|\mathsf{H}=2)\!\le \!\alpha_1,\mathrm{E}[\mathsf{T}|\mathsf{H}=i]<\infty, i\in\{1,2\}\!\}\nonumber\\\label{TestReq}
\end{IEEEeqnarray}
where $\mathrm{E}[\mathsf{T}|\mathsf{H}=i]$ denotes the expected termination time in case hypothesis $i$ is true and where the expectation is taken over the observation sequences $\mathsf{X}_t$. Moreover, $\alpha_1$ and $\alpha_2$ are the maximum allowed error probabilities of the two error types. We assume that $\alpha_1,\alpha_2<0.5$. Notice that we restrict ourselves to tests which terminate almost surely. This assumption is fulfilled in many cases like the case of i.i.d.\ observation processes  \cite[Th.~6.2-1]{Melsa1978} and stationary observation processes.   Note that the class of sequential tests given by $\mathcal{C}(\alpha_1,\alpha_2)$ does not consider prior knowledge on the statistics of $\mathsf{H}$.

We define the following optimality criterion. 
\begin{definition}[Optimality in terms of mean decision times]\label{DefinitionMinimumMeanTime}
An optimal test $\delta^{*}=(\mathsf{D}^{*},\mathsf{T}^{*})$ minimizes the two mean decision times $\mathrm{E}[\mathsf{T}^{*}|\mathsf{H}=i]$ corresponding to the hypothesis $i=1$ and $i=2$ for a given reliability, i.e.
\begin{IEEEeqnarray}{rCL}
\mathrm{E}[\mathsf{T}^{*}|\mathsf{H}=i]&=&\inf_{\delta\in\mathcal{C}(\alpha_1,\alpha_2)}\mathrm{E}[\mathsf{T}|\mathsf{H}=i],\quad i=1,2.\label{OptDecisionTime}
\end{IEEEeqnarray}
\end{definition}

Note that in Definition~\ref{DefinitionMinimumMeanTime} we assume that there exists a test for which the infimum is attained. If such a test does not exist than we can still find a test for which the two mean decision times are arbitrarily close to their infimum values.

Sequential probability ratio tests or \emph{Wald}-tests are optimal in the sense of Definition~\ref{DefinitionMinimumMeanTime}  \cite{Wald1945}. It has been proved that for the case of time-discrete i.i.d.\ observation processes the Wald test is optimal in the sense of Definition~\ref{DefinitionMinimumMeanTime}  \cite{wald1948optimum}. Furthermore, under broad conditions for the observation process it has been proved that sequential probability ratio tests are optimal in the sense of Definition~\ref{DefinitionMinimumMeanTime} in the limit of small error probabilities \cite{lai1981asymptotic, tartakovsky1998asymptotically, tartakovsky1998asymptotic, draglia1999multihypothesis, tartakovsky2014sequential}. 

For discrete-time and i.i.d.\ processes the Wald test collects observations $\mathsf{X}_t$ (which can be understood as samples of a corresponding continuous-time process) until the cumulated log-likelihood ratio 
\begin{IEEEeqnarray}{rCL}
\mathsf{S}_k&=&\sum_{n=1}^{k}\mathsf{\Delta}_n=\sum_{n=1}^{k}\log\left(\frac{p_{\mathsf{X}}(\mathsf{X}_n|\mathsf{H}=1)}{p_{\mathsf{X}}(\mathsf{X}_n|\mathsf{H}=2)}\right)\quad\textrm{for } k\ge 1\label{Cumulated_LLR}
\end{IEEEeqnarray}
exceeds (falls below) a prescribed threshold $L_1$ ($L_2$) for the first time. In (\ref{Cumulated_LLR}) $\mathsf{\Delta}_n$ are the increments of the log-likelihood ratio at time instant $n$. The test decides $\mathsf{D}=1$ ($\mathsf{D}=2$), i.e., for $H_1$ ($H_2$), when $\mathsf{S}_k$ first crosses $L_1$ ($L_2$). In (\ref{Cumulated_LLR}), $p_{\mathsf{X}}(\cdot|\mathsf{H})$ denotes the probability density function of the observations $\mathsf{X}_{k}$ conditioned on the event $\mathsf{H}$. The thresholds $L_1$ and $L_2$ depend on the maximum allowed probabilities for making a wrong decision $\alpha_1$ and $\alpha_2$. A decision with the given reliability constraints $\alpha_1$ and $\alpha_2$ can be made when the cumulative log-likelihood ratio $\mathsf{S}_k$ for the first time crosses one of the thresholds before crossing the opposite one. The thresholds are functions of $\alpha_1$ and $\alpha_2$. In general, the thresholds $L_1$ and $L_2$ are difficult to obtain. However, the optimal thresholds yielding the minimum mean decision time can be approximated by \cite[p.~148]{Melsa1978}
\begin{IEEEeqnarray}{rCL}
L_1&=&\log\frac{1-\alpha_2}{\alpha_1}\label{Def_T1}\\
L_2&=&\log\frac{\alpha_2}{1-\alpha_1}\label{Def_T2}.
\end{IEEEeqnarray}
The choice in (\ref{Def_T1}) and (\ref{Def_T2}) still guarantees that the error constraints in (\ref{TestReq}) are fulfilled. In summary, the sequential probability ratio test decides at the time 
\begin{IEEEeqnarray}{rCL}
\mathsf{T}_{\textrm{Wald}}&=&\min\{k\in\mathbb{N} : \mathsf{S}_k\notin (L_2,L_1)\}\label{Def_T_Wald}  
\end{IEEEeqnarray}
for the decision 
\begin{IEEEeqnarray}{rCL}
\mathsf{D}_{\rm Wald}&=&\left\{\begin{array}{ll}
1 & \textrm{if } \mathsf{S}_{\mathsf{T}_{\mathrm{Wald}}}\ge L_1\\
2 & \textrm{if } \mathsf{S}_{\mathsf{T}_{\mathrm{Wald}}}\le L_2.
\end{array}\right. \label{Def_D_Wald}
\end{IEEEeqnarray}
Analogously, the Wald test for non-i.i.d.\ observation processes is given by (\ref{Def_T_Wald}) and (\ref{Def_D_Wald}) with the log-likelihood ratio
\begin{IEEEeqnarray}{rCL}
\mathsf{S}_k&=&\sum_{n=1}^{k}\mathsf{\Delta}_n=
\sum_{n=1}^{k}\log\frac{p_{\mathsf{X}_n}(\mathsf{X}_n|\mathsf{X}_1^{n-1},\mathsf{H}=1)}{p_{\mathsf{X}_n}(\mathsf{X}_n|\mathsf{X}_1^{n-1},\mathsf{H}=2)}\quad\textrm{for } k\ge 1\label{Cumulated_LLR_noniid}
\end{IEEEeqnarray}
where $\mathsf{X}_1^{n-1}=[\mathsf{X}_1,\hdots,\mathsf{X}_{n-1}]$.

The Wald test can also be formulated for continuous-time observation processes. In this case probability densities of the observation trajectories do not always exist. However, the likelihood ratio $e^{\mathsf{S}_t}$ can be defined in terms of the Radon-Nikod\'ym derivative of the probability space $(\Omega,\mathcal{F},\{\mathcal{F}_t\}_{t\ge 0},\mathbb{P}_1)$ with respect to the probability space $(\Omega,\mathcal{F},\{\mathcal{F}_t\}_{t\ge 0},\mathbb{P}_2)$:
\begin{IEEEeqnarray}{rCL}
e^{\mathsf{S}_t}&=&\frac{\mathrm{d}\mathbb{P}_1|_{\mathcal{F}_t}}{\mathrm{d}\mathbb{P}_2|_{\mathcal{F}_t}}\label{DefLikelihoodCont}
\end{IEEEeqnarray}
with $t\ge 0$. Here, the process $\mathsf{S}_t$ is the cumulative log-likelihood ratio and $\mathbb{P}_i|_{\mathcal{F}_t}$ ($i=1,2$) are restricted measures of $\mathbb{P}_i$ w.r.t.\ the $\sigma$-algebra $\mathcal{F}_t$. A decision with the given reliability constraints $\alpha_1$ and $\alpha_2$ can be made when the cumulative log-likelihood ratio $\mathsf{S}_t$ for the first time crosses one of the thresholds before crossing the opposite one. I.e., the test decides $\mathsf{D} = 1$ ($\mathsf{D} = 2$) in case it crosses $L_1$ ($L_2$) for the first time before crossing $L_2$ ($L_1$) where the thresholds are functions of $\alpha_1$ and $\alpha_2$. For continuous observation processes  $\mathsf{S}_t$ is continuous and the thresholds are exactly given by (\ref{Def_T1}) and (\ref{Def_T2}), see, e.g., \cite[p.~148]{Melsa1978}. Therefore, the Wald test for continuous-time observation processes is defined by
\begin{IEEEeqnarray}{rCL}
\mathsf{T}_{\rm dec}&=&\inf\{t\in\mathbb{R}_+ : \mathsf{S}_t\notin (L_2,L_1)\}  \label{Tau_dec}
\end{IEEEeqnarray}
with the decision output given by
\begin{IEEEeqnarray}{rCL}
\mathsf{D}_{\rm dec}&=&\left\{\begin{array}{ll}
1 & \textrm{if } \mathsf{S}_{\mathsf{T}_{\mathrm{dec}}}\ge L_1\\
2 & \textrm{if } \mathsf{S}_{\mathsf{T}_{\mathrm{dec}}}\le L_2.
\end{array}\right. \label{Def_D_Dec}
\end{IEEEeqnarray}

\subsection{Problem Statement} 
Consider now the black-box decision device as illustrated in Fig.~\ref{IzaakFig1} for which the stochastic observation process $\mathsf{X}_t$ and the algorithm of the decision device are both unknown.    Such a  black-box  decision device is a sequential test $\delta$ for which the function  $(\mathsf{D}, \mathsf{T})$ is unknown.   
We ask now the question: Is it possible to determine whether such a black-box decision device is optimal in the sense of Definition \ref{DefinitionMinimumMeanTime} based on many outcomes $\mathsf{D}$ and $\mathsf{T}$ of the device? 

Having access to the decision outcomes and decision times it is impossible to verify optimality in terms of Definition~\ref{DefinitionMinimumMeanTime}. In this regard consider that the value of the minimum mean decision time is typically unknown since the observed process $\mathsf{X}_t$ and its statistics are often not known.   We thus introduce the  following alternative definition of optimality, which is based on the idea that optimal sequential decision-making needs to exploit the available information optimally.   

\begin{definition}[Optimality in terms of information]\label{DefinitionOptInfUse}
An optimal test $\delta^{*}=(\mathsf{D}^{*},\mathsf{T}^{*})$ minimizes the mutual information $I(\mathsf{H};\mathsf{T}|\mathsf{D})$, i.e.
\begin{IEEEeqnarray}{rCL}
I(\mathsf{H};\mathsf{T}^{*}|\mathsf{D}^{*})&=&\inf_{\delta=(\mathsf{D},\mathsf{T})\in\mathcal{C}(\alpha_1,\alpha_2)}I(\mathsf{H};\mathsf{T}|\mathsf{D}).\label{CondOptimality InfUsageEq}
\end{IEEEeqnarray}
\end{definition}

 Later we will show that for continuous observation processes optimality in the sense of Definition~\ref{DefinitionMinimumMeanTime} implies optimality in the sense of Definition~\ref{DefinitionOptInfUse} but not vise versa.   In this regard, consider that (\ref{CondOptimality InfUsageEq}) is invariant w.r.t.\ time delays $\mathsf{T}_{\rm delay}$ in the decisions, i.e.,  $I(\mathsf{H};\mathsf{T}^{*}|\mathsf{D}^{*})=I(\mathsf{H};\mathsf{T}^{*}+\mathsf{T}_{\rm delay}|\mathsf{D}^{*})$, if $\mathsf{T}_{\rm delay}$ is statistically independent of $\mathsf{H}$ conditioned on $\mathsf{D}$ and $\mathsf{T}$ and if additionally $\mathsf{T}_{\rm delay}$ satisfies that $I(\mathsf{H};\mathsf{T}_{\rm delay}|\mathsf{D}^{*},\mathsf{T}^{*}+\mathsf{T}_{\rm delay})=0$.   Moreover, 
 we will show that for continuous observation processes optimal information usage implies that $I(\mathsf{H};\mathsf{T}^{*}|\mathsf{D}^{*})=0$, because a test achieving $I(\mathsf{H};\mathsf{T}|\mathsf{D})=0$ always exists and $I(\mathsf{H};\mathsf{T}|\mathsf{D})$ is nonnegative. For these reasons Definition~\ref{DefinitionOptInfUse} will allows us to formulate practical tests for optimality of sequential decision-making in black-box decision devices. In general, for the discrete-time setting $I(\mathsf{H};\mathsf{T}^{*}|\mathsf{D}^{*})> 0$, as the information on the hypothesis does not arrive continuously but in chunks, which makes it more difficult to test optimality in discrete-time settings.

\begin{figure*}[t!] 
	\centering  \includegraphics[width=1.05\columnwidth]{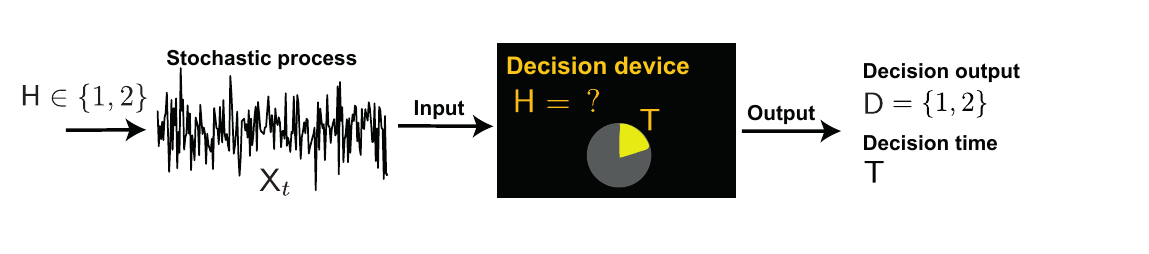}
	\caption{Black-box model of binary sequential decision-making: a decision device observes sequentially a stochastic process $\mathsf{X}_t$ until it takes a decision $\mathsf{D}=1$ ($\mathsf{D}=2$)  corresponding to  the hypothesis $\mathsf{H}=1$ ($\mathsf{H}=2$) at a random decision time~$\mathsf{T}$. }\label{IzaakFig1}
	
\end{figure*}

\section{Optimality Conditions for Continuous Observation Processes} \label{sec:3}
To understand the conditions on optimal sequential decision-making we will derive relations between decision time distributions of optimal binary sequential probability ratio tests.  In this section we consider optimal sequential probability ratio tests for continuous observation processes, which are given by $(\mathsf{T}_{\rm dec},\mathsf{D}_{\rm dec})$ in (\ref{Tau_dec}) - (\ref{Def_D_Dec}). We call these relations decision time fluctuation relations for their reminiscence to stopping time fluctuation relations in non-equilibrium statistical physics, in particular stochastic thermodynamics \cite{Roldan_etal15,Neri_etal17}. In order to derive these relations we use a key property of the exponential of the cumulative log-likelihood ratio $e^{\mathsf{S}_t}$ defined in (\ref{DefLikelihoodCont}), namely that it is a positive and uniformly integrable martingale process with respect to the probability measure $\mathbb{P}_2$ and the filtration generated by the observation process \cite{liptser2001statistics}. An $\mathcal{F}_t$-adapted and integrable  process is called a  martingale w.r.t.\ $\mathcal{F}_t$ and a measure $\mathbb{P}$ if its expected value at time $t$ equals to its value at a previous time $\tilde{t}$, when the expected value is conditioned on observations up to the time $\tilde{t}$. For $e^{\mathsf{S}_t}$, $\mathcal{F}_t$, and $\mathbb{P}_2$ this implies that
\begin{IEEEeqnarray}{rCL}
\mathrm{E}\left[e^{\mathsf{S}_t}\Big|\mathcal{F}_{\tilde{t}}, \mathsf{H}=2\right]&=&e^{\mathsf{S}_{\tilde{t}}}\label{MartingaleProp}
\end{IEEEeqnarray}
$\mathbb{P}_2$-almost surely and with $\tilde{t}<t$. Integrability of $e^{\mathsf{S}_t}$ implies that $\mathrm{E}[e^{\mathsf{S}_t}|\mathsf{H}=2]=1<\infty$.

\subsection{Decision Time Fluctuation Relation for Optimal Decision Devices}\label{Sect_StoppTwoHypo}

\begin{theorem}\label{Theorem1}
We consider a binary sequential hypothesis testing problem with the hypotheses $\mathsf{H}\in\{1,2\}$. Let $\mathbb{P}_1$ and $\mathbb{P}_2$ be two probability measures on the same filtered probability space $(\Omega,\mathcal{F},\{\mathcal{F}_t\}_{t\ge 0})$ corresponding to the hypothesis $\mathsf{H}=1$ and $\mathsf{H}=2$, respectively. We assume that $\{\mathcal{F}_t\}_{t\ge 0}$ is right continuous. We consider that on $\mathcal{F}_\infty =\cap_{t\ge 0}\mathcal{F}_t$, the probability measure $\mathbb{P}_2$ is absolutely continuous with respect to $\mathbb{P}_1$. Furthermore, we consider that the realization of the process $e^{\mathsf{S}_t}=\frac{\mathrm{d}\mathbb{P}_1|_{\mathcal{F}_t}}{\mathrm{d}\mathbb{P}_2|_{\mathcal{F}_t}}$ is $\mathbb{P}_2$ almost surely continuous. Let $\mathsf{T}_{\rm dec}$ and $\mathsf{D}_{\rm dec}$ be as in (\ref{Tau_dec}) and (\ref{Def_D_Dec}) with $\mathrm{E}[\mathsf{T}_{\rm{dec}}|\mathsf{H}=i]<\infty\; (i=1,2)$. We also assume that $\mathsf{T}_{\rm dec}$ has a density function. Under these assumptions the following holds 
\begin{IEEEeqnarray}{rCL}
p_{\mathsf{T}_{\rm dec}}(t|\mathsf{H}=1,\mathsf{D}_{\rm dec}=1)&=&p_{\mathsf{T}_{\rm dec}}(t|\mathsf{H}=2,\mathsf{D}_{\rm dec}=1)\label{CondInvolution_Dec}\\
p_{\mathsf{T}_{\rm dec}}(t|\mathsf{H}=1,\mathsf{D}_{\rm dec}=2)&=&p_{\mathsf{T}_{\rm dec}}(t|\mathsf{H}=2,\mathsf{D}_{\rm dec}=2)\label{CondInvolution_Dec2}
\end{IEEEeqnarray}
where $p_{\mathsf{T}_{\rm dec}}(t|\mathsf{H},\mathsf{D}_{\rm dec})$ is the decision time distribution conditioned on the hypothesis $\mathsf{H}$ and the decision output $\mathsf{D}_{\mathrm{dec}}$.  
\end{theorem}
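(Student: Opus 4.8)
The plan is to use the likelihood ratio $e^{\mathsf{S}_t}$ as a change of measure between $\mathbb{P}_1$ and $\mathbb{P}_2$, exploiting the decisive fact that a \emph{continuous} cumulative log-likelihood ratio reaches a threshold without overshoot, so that $e^{\mathsf{S}_{\mathsf{T}_{\rm dec}}}$ is a deterministic constant fixed by the decision $\mathsf{D}_{\rm dec}$ alone. Concretely, I would first establish a change-of-measure identity at the random decision time: since $e^{\mathsf{S}_t}$ is a uniformly integrable $\mathbb{P}_2$-martingale and $\mathsf{T}_{\rm dec}<\infty$ almost surely, the optional sampling theorem yields that $e^{\mathsf{S}_{\mathsf{T}_{\rm dec}}}$ is the Radon-Nikod\'ym derivative of $\mathbb{P}_1|_{\mathcal{F}_{\mathsf{T}_{\rm dec}}}$ with respect to $\mathbb{P}_2|_{\mathcal{F}_{\mathsf{T}_{\rm dec}}}$, so that $\mathbb{P}_1[A]=\mathrm{E}_{\mathbb{P}_2}[e^{\mathsf{S}_{\mathsf{T}_{\rm dec}}}1_A]$ for every $A\in\mathcal{F}_{\mathsf{T}_{\rm dec}}$, where $\mathrm{E}_{\mathbb{P}_2}$ denotes expectation under $\mathbb{P}_2$. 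The events $\{\mathsf{T}_{\rm dec}\le t,\mathsf{D}_{\rm dec}=j\}$, $j\in\{1,2\}$, lie in $\mathcal{F}_{\mathsf{T}_{\rm dec}}$ and are the admissible choices of $A$.

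Next I would invoke continuity of $\mathsf{S}_t$. Because $\mathsf{S}_t$ is $\mathbb{P}_2$-almost surely continuous and $\mathsf{T}_{\rm dec}$ is the first exit from $(L_2,L_1)$, the path meets the boundary exactly, so $\mathsf{S}_{\mathsf{T}_{\rm dec}}=L_1$ on $\{\mathsf{D}_{\rm dec}=1\}$ and $\mathsf{S}_{\mathsf{T}_{\rm dec}}=L_2$ on $\{\mathsf{D}_{\rm dec}=2\}$; hence $e^{\mathsf{S}_{\mathsf{T}_{\rm dec}}}$ equals the constant $e^{L_1}$ on $\{\mathsf{D}_{\rm dec}=1\}$. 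Taking $A=\{\mathsf{T}_{\rm dec}\le t,\mathsf{D}_{\rm dec}=1\}$ and pulling this constant out of the expectation gives $\mathbb{P}_1[\mathsf{T}_{\rm dec}\le t,\mathsf{D}_{\rm dec}=1]=e^{L_1}\mathbb{P}_2[\mathsf{T}_{\rm dec}\le t,\mathsf{D}_{\rm dec}=1]$, and sending $t\to\infty$ gives the normalization $\mathbb{P}_1[\mathsf{D}_{\rm dec}=1]=e^{L_1}\mathbb{P}_2[\mathsf{D}_{\rm dec}=1]$. Dividing the two relations cancels $e^{L_1}$ and leaves $\mathbb{P}_1[\mathsf{T}_{\rm dec}\le t\,|\,\mathsf{D}_{\rm dec}=1]=\mathbb{P}_2[\mathsf{T}_{\rm dec}\le t\,|\,\mathsf{D}_{\rm dec}=1]$; as $\mathsf{T}_{\rm dec}$ admits a density, differentiation in $t$ gives (\ref{CondInvolution_Dec}). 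Repeating the argument verbatim with $e^{L_2}$ and $\{\mathsf{D}_{\rm dec}=2\}$ gives (\ref{CondInvolution_Dec2}).

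The only genuinely technical step, and the place where the hypotheses earn their keep, is the change-of-measure identity at the \emph{unbounded} random time $\mathsf{T}_{\rm dec}$: optional sampling there rests on the uniform integrability of $e^{\mathsf{S}_t}$ under $\mathbb{P}_2$ noted above --- which closes the martingale by an $L^1$-limit $e^{\mathsf{S}_\infty}$ acting as the Radon-Nikod\'ym derivative on $\mathcal{F}_\infty$ --- together with the almost-sure finiteness of $\mathsf{T}_{\rm dec}$ ensured by $\mathrm{E}[\mathsf{T}_{\rm dec}|\mathsf{H}=i]<\infty$, while the absolute-continuity hypothesis guarantees that $e^{\mathsf{S}_\infty}$ is strictly positive so the involved densities are well defined. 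Once this identity is secured, the no-overshoot property and the cancellation of the boundary constant are elementary, and the remainder is bookkeeping.
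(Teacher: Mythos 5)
Your proposal is correct and follows essentially the same route as the paper's proof: change of measure via the $\mathbb{P}_2$-martingale $e^{\mathsf{S}_t}$, Doob's optional sampling theorem justified by uniform integrability, the no-overshoot property from path continuity so that $e^{\mathsf{S}_{\mathsf{T}_{\rm dec}}}=e^{L_1}$ (resp.\ $e^{L_2}$) on the decision event, the normalization $t\to\infty$ to cancel the boundary constant, and differentiation of the conditional distribution functions. The only cosmetic difference is that you state the change-of-measure identity directly on $\mathcal{F}_{\mathsf{T}_{\rm dec}}$, whereas the paper applies Radon--Nikod\'ym on $\mathcal{F}_t$ and then invokes optional sampling inside the integral over $\Phi_1(t)$; these are the same argument.
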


\begin{proof}
Let 
\begin{IEEEeqnarray}{rCL}
\Phi_1(t)&=&\{\omega\in\Omega:\ \mathsf{T}_{\mathrm{dec}}(\omega)\le t \textrm{ and } \mathsf{D}_{\rm dec}(\omega)=1\}\label{SetPi_cont}
\end{IEEEeqnarray}
be the set of trajectories for which the decision time does not exceed $t$ and the test decides for $\mathsf{D}_{\rm dec}=1$. The probability of the event 
$\Phi_1(t)$ with respect to the measures $\mathbb{P}_1$ or $\mathbb{P}_2$ is equal to the cumulative distribution of the decision time $\mathsf{T}_{\rm dec}$ conditioned on the hypothesis $\mathsf{H}=1$ or $\mathsf{H}=2$, respectively, and conditioned on the decision outcome $\mathsf{D}_{\rm dec}=1$.   We find the following identity between $\mathbb{P}_1(\Phi_{1}(t))$ and  $\mathbb{P}_2(\Phi_{1}(t))$:
\begin{IEEEeqnarray}{rCL}
\mathbb{P}_1(\Phi_{1}(t))
&=&\int_{\omega\in\Phi_{1}(t)}\mathrm{d}\mathbb{P}_1|_{\mathcal{F}_t}\label{Eq19}\\
&=&\int_{\omega\in\Phi_{1}(t)}e^{\mathsf{S}_t}\mathrm{d}\mathbb{P}_2|_{\mathcal{F}_t} \label{ProofFluc1_1}\\
&=&\int_{\omega\in\Phi_{1}(t)}e^{\mathsf{S}_{\mathsf{T}_{\mathrm{dec}}}}\mathrm{d}\mathbb{P}_2|_{\mathcal{F}_t} \label{ProofFluc1_2}\\
&=&e^{L_1}\int_{\omega\in\Phi_{1}(t)}\mathrm{d}\mathbb{P}_2|_{\mathcal{F}_t}\label{Eq21}\\
&=&e^{L_1}\ \mathbb{P}_2(\Phi_{1}(t))\label{Eq23}
\end{IEEEeqnarray}
where for (\ref{ProofFluc1_1}) we have used the Radon-Nikod\'ym theorem and the definition  (\ref{DefLikelihoodCont}). For equality (\ref{ProofFluc1_2}) we have applied \emph{Doob's optional sampling theorem} \cite{liptser2001statistics,williams1991probability}   
to the uniformly integrable $\mathbb{P}_2$-martingale process $e^{\mathsf{S}_t}$. For (\ref{Eq21}) we have used that $e^{\mathsf{S}_t}$ is a continuous process and achieves the value $e^{L_1}$ at time $\mathsf{T}_{\mathrm{dec}}$.

The probability density functions of $\mathsf{T}_{\rm dec}$ can be expressed in terms of the derivatives of the cumulative distributions $\mathbb{P}_{k}(\Phi_1(t))$ ($k=1,2$)
\begin{IEEEeqnarray}{rCL}
p_{\mathsf{T}_{\rm dec}}(t|\mathsf{H}=1,\mathsf{D}_{\rm dec}=1)P(\mathsf{D}_{\rm dec}=1|\mathsf{H}=1)&=&\frac{\mathrm{d}}{\mathrm{d}t}{\mathbb{P}_1(\Phi_1(t))}\label{DistribDef1}\\
p_{\mathsf{T}_{\rm dec}}(t|\mathsf{H}=2,\mathsf{D}_{\rm dec}=1)P(\mathsf{D}_{\rm dec}=1|\mathsf{H}=2)&=&\frac{\mathrm{d}}{\mathrm{d}t}{\mathbb{P}_2(\Phi_1(t))}.\label{DistribDef2}
\end{IEEEeqnarray}
 The ratio of the decision probabilities is
\begin{IEEEeqnarray}{rCL}
\frac{P(\mathsf{D}_{\rm dec}=1|\mathsf{H}=1)}{P(\mathsf{D}_{\rm dec}=1|\mathsf{H}=2)}&=&\frac{1-\alpha_2}{\alpha_1}=e^{L_1} \label{eq:27}
\end{IEEEeqnarray}
which follows from $P(\mathsf{D}_{\rm dec}=1|\mathsf{H}=1) = \lim_{t\to\infty} \mathbb{P}_1(\Phi_{1}(t))$, $P(\mathsf{D}_{\rm dec}=1|\mathsf{H}=2) =$\linebreak $\lim_{t\to\infty} \mathbb{P}_2(\Phi_{1}(t))$, Eq.~(\ref{Eq23}), and from the assumption that the test terminates almost surely. Taking the derivative of the left hand side (LHS) of (\ref{Eq19}) and the right hand side (RHS) of (\ref{Eq23}), and using Eqs.~(\ref{DistribDef1}) to (\ref{eq:27}), we prove Eq.~(\ref{CondInvolution_Dec}). Analogously, Eq.~(\ref{CondInvolution_Dec2}) can be proved.
\end{proof}

\subsection{Decision Time Fluctuation Relation for Optimal Decision Devices with 
 Unknown Hypotheses}
\label{DTFTIH}
In the following, we derive a second fluctuation relation, which we will apply to test optimality of sequential decision-making with less
information than required for Theorem 1 (see Section~\ref{SectOptTestUnknowHyp}), but holds only if the
maximal allowed error probabilities are symmetric, i.e., $\alpha_1=\alpha_2$, and the measures
$\mathbb{P}_1$ and $\mathbb{P}_2$ on $(\Omega, \mathcal{F},
\left\{\mathcal{F}_t\right\}_{t\geq0})$
are related by a measurable involution  $\Theta$.   We consider that
\begin{IEEEeqnarray}{rCL}
\mathbb{P}_2&=&\mathbb{P}_1\circ \Theta\label{InvolutionCond}
\end{IEEEeqnarray}
with $\Theta:\Omega\rightarrow \Omega$  a measurable involution,
i.e., $\Theta$ is invertible with inverse $\Theta^{-1} = \Theta$ and  with $\Theta(\Phi)\in\mathcal{F}$ for all $\Phi\in  \mathcal{F}$.
\begin{theorem}\label{Theorem2}
Under the same conditions as in Theorem~\ref{Theorem1}, with the additional assumption that $\mathbb{P}_2=\mathbb{P}_1\circ \Theta$ with $\Theta$ a measurable involution, and with the additional assumption that the maximal allowed error probabilities fulfill $\alpha_1=\alpha_2$, the following holds 
\begin{IEEEeqnarray}{rCL}
p_{\mathsf{T}_{\rm dec}}(t|\mathsf{H}=1,\mathsf{D}_{\rm dec}=1)&=&p_{\mathsf{T}_{\rm dec}}(t|\mathsf{H}=1,\mathsf{D}_{\rm dec}=2)\label{eq:29}\\
p_{\mathsf{T}_{\rm dec}}(t|\mathsf{H}=2,\mathsf{D}_{\rm dec}=1)&=&p_{\mathsf{T}_{\rm dec}}(t|\mathsf{H}=2,\mathsf{D}_{\rm dec}=2)\label{eq:30}.
\end{IEEEeqnarray}
Furthermore, it holds that 
\begin{IEEEeqnarray}{rCL}
p_{\mathsf{T}_{\rm dec}}(t|\mathsf{D}_{\rm dec}=1) = p_{\mathsf{T}_{\rm dec}}(t|\mathsf{D}_{\rm dec}=2). \label{eq:31}
\end{IEEEeqnarray}
\end{theorem}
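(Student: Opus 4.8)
The plan is to combine Theorem~\ref{Theorem1} with an additional symmetry coming from the involution $\Theta$. Since Theorem~\ref{Theorem2} assumes all the hypotheses of Theorem~\ref{Theorem1}, I would first invoke (\ref{CondInvolution_Dec}) and (\ref{CondInvolution_Dec2}), which already say that, for a \emph{fixed} outcome $\mathsf{D}_{\rm dec}$, the conditional decision-time density is the same under both hypotheses. What is still missing is a second, \emph{cross} relation that exchanges the two outcomes; chaining it with (\ref{CondInvolution_Dec})--(\ref{CondInvolution_Dec2}) will then collapse all four conditional densities onto a single function, from which (\ref{eq:29})--(\ref{eq:31}) follow.

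To produce the cross relation, I would first show that the involution negates the log-likelihood ratio, $\mathsf{S}_t\circ\Theta=-\mathsf{S}_t$. This follows from $\mathbb{P}_2=\mathbb{P}_1\circ\Theta$ together with (\ref{DefLikelihoodCont}): a change-of-variables argument applied to the identity $\mathbb{P}_1(\Phi)=\int_{\Phi}e^{\mathsf{S}_t}\,\mathrm{d}\mathbb{P}_2|_{\mathcal{F}_t}$ gives $\frac{\mathrm{d}\mathbb{P}_2|_{\mathcal{F}_t}}{\mathrm{d}\mathbb{P}_1|_{\mathcal{F}_t}}=e^{\mathsf{S}_t}\circ\Theta$, while positivity of the likelihood ratio gives $\frac{\mathrm{d}\mathbb{P}_2|_{\mathcal{F}_t}}{\mathrm{d}\mathbb{P}_1|_{\mathcal{F}_t}}=e^{-\mathsf{S}_t}$, and comparing the two yields the claim. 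Because $\alpha_1=\alpha_2$, the thresholds (\ref{Def_T1}) and (\ref{Def_T2}) are antisymmetric, $L_2=-L_1$, so the interval $(L_2,L_1)$ is symmetric about the origin. Using the $\mathbb{P}_2$-almost-sure path continuity of $\mathsf{S}_t$, I would then argue that $\Theta$ sends a trajectory whose path first leaves $(L_2,L_1)$ through $L_1$ at time $\tau$ to one whose (negated) path first leaves through $L_2=-L_1$ at the same time $\tau$. Hence $\Theta$ preserves the decision time, $\mathsf{T}_{\rm dec}\circ\Theta=\mathsf{T}_{\rm dec}$, and swaps the outcomes, so that $\Theta(\Phi_1(t))=\Phi_2(t)$, where $\Phi_1(t)$ is as in (\ref{SetPi_cont}) and $\Phi_2(t)$ is its analogue for $\mathsf{D}_{\rm dec}=2$.

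Next I would apply $\mathbb{P}_2(\Phi)=\mathbb{P}_1(\Theta(\Phi))$ to the two sets, obtaining $\mathbb{P}_2(\Phi_1(t))=\mathbb{P}_1(\Phi_2(t))$ and $\mathbb{P}_2(\Phi_2(t))=\mathbb{P}_1(\Phi_1(t))$. Differentiating each identity in $t$, writing the derivative as a conditional density times a crossing probability exactly as in (\ref{DistribDef1})--(\ref{DistribDef2}), and cancelling the crossing probabilities using $P(\mathsf{D}_{\rm dec}=1|\mathsf{H}=2)=P(\mathsf{D}_{\rm dec}=2|\mathsf{H}=1)=\alpha_1=\alpha_2$ and $P(\mathsf{D}_{\rm dec}=2|\mathsf{H}=2)=P(\mathsf{D}_{\rm dec}=1|\mathsf{H}=1)=1-\alpha_1$, I get the cross relations $p_{\mathsf{T}_{\rm dec}}(t|\mathsf{H}=2,\mathsf{D}_{\rm dec}=1)=p_{\mathsf{T}_{\rm dec}}(t|\mathsf{H}=1,\mathsf{D}_{\rm dec}=2)$ and $p_{\mathsf{T}_{\rm dec}}(t|\mathsf{H}=1,\mathsf{D}_{\rm dec}=1)=p_{\mathsf{T}_{\rm dec}}(t|\mathsf{H}=2,\mathsf{D}_{\rm dec}=2)$. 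Combined with (\ref{CondInvolution_Dec}) and (\ref{CondInvolution_Dec2}), these show that all four conditional densities agree, which is precisely (\ref{eq:29}) and (\ref{eq:30}). Finally (\ref{eq:31}) follows by marginalising over $\mathsf{H}$: since $p_{\mathsf{T}_{\rm dec}}(t|\mathsf{H},\mathsf{D}_{\rm dec})$ no longer depends on $\mathsf{H}$, the mixture $p_{\mathsf{T}_{\rm dec}}(t|\mathsf{D}_{\rm dec}=d)=\sum_{h}p_{\mathsf{T}_{\rm dec}}(t|\mathsf{H}=h,\mathsf{D}_{\rm dec}=d)\,P(\mathsf{H}=h|\mathsf{D}_{\rm dec}=d)$ equals the common value for both $d=1$ and $d=2$.

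I expect the main obstacle to lie in rigorously justifying $\mathsf{S}_t\circ\Theta=-\mathsf{S}_t$ and $\Theta(\Phi_1(t))=\Phi_2(t)$. The change-of-variables step tacitly requires $\Theta$ to be compatible with the filtration so that $e^{\mathsf{S}_t}\circ\Theta$ is $\mathcal{F}_t$-measurable, and the identification of outcomes relies on path continuity of $\mathsf{S}_t$ to ensure that the first crossing of $L_1$ is genuinely mapped to the first crossing of $-L_1$, with no earlier crossing of the opposite threshold. Once the set identity $\Theta(\Phi_1(t))=\Phi_2(t)$ is in place, everything that follows is routine differentiation and cancellation of densities.
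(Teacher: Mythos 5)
Your proof is correct, and it reaches (\ref{eq:29})--(\ref{eq:31}) by a route that genuinely differs from the paper's in one respect. The shared core is the pair of symmetry lemmas: like the paper (Eqs.~(\ref{eq:31+})--(\ref{eq:36}) and the identity $\Phi_1(t)=\Theta(\Phi_2(t))$), you prove that the involution negates the log-likelihood ratio, $\mathsf{S}_t\circ\Theta=-\mathsf{S}_t$, and that with symmetric thresholds $L_1=-L_2$ it preserves the exit time of $(L_2,L_1)$ while swapping the exit boundary. The difference is in what happens next. The paper stays under the single measure $\mathbb{P}_1$ and applies Doob's optional sampling theorem a \emph{second} time, now to the $\mathbb{P}_1$-martingale $e^{-\mathsf{S}_t}$, to obtain $\mathbb{P}_1(\Phi_1(t))=e^{-L_2}\,\mathbb{P}_1(\Phi_2(t))$ (Eqs.~(\ref{ProofFluc2_1})--(\ref{ProofFluc2_7})); differentiating and using (\ref{eq:56}) then gives (\ref{eq:29})--(\ref{eq:30}) without invoking Theorem~\ref{Theorem1}'s conclusions, which enter only for (\ref{eq:31}). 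You avoid any further martingale argument: the change of variables $\mathbb{P}_2(\Phi)=\mathbb{P}_1(\Theta(\Phi))$ alone yields the cross identities $\mathbb{P}_2(\Phi_1(t))=\mathbb{P}_1(\Phi_2(t))$ and $\mathbb{P}_2(\Phi_2(t))=\mathbb{P}_1(\Phi_1(t))$, hence the cross-density relations $p_{\mathsf{T}_{\rm dec}}(t|\mathsf{H}=2,\mathsf{D}_{\rm dec}=1)=p_{\mathsf{T}_{\rm dec}}(t|\mathsf{H}=1,\mathsf{D}_{\rm dec}=2)$ and $p_{\mathsf{T}_{\rm dec}}(t|\mathsf{H}=1,\mathsf{D}_{\rm dec}=1)=p_{\mathsf{T}_{\rm dec}}(t|\mathsf{H}=2,\mathsf{D}_{\rm dec}=2)$, and chaining these with (\ref{CondInvolution_Dec})--(\ref{CondInvolution_Dec2}) collapses all four conditional densities onto a single function, giving (\ref{eq:29})--(\ref{eq:31}) at once. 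Your route buys economy: optional sampling is used exactly once, inside Theorem~\ref{Theorem1} cited as a black box, and it becomes transparent that the extra content of Theorem~\ref{Theorem2} relative to Theorem~\ref{Theorem1} is a pure measure symmetry; the price, that your (\ref{eq:29})--(\ref{eq:30}) are not self-contained, costs nothing because Theorem~\ref{Theorem2} assumes all of Theorem~\ref{Theorem1}'s hypotheses anyway. Two minor remarks: your cancellation of the crossing probabilities need not rely on the error constraints being met with equality, since letting $t\to\infty$ in your own identities (together with almost-sure termination) gives $P(\mathsf{D}_{\rm dec}=1|\mathsf{H}=2)=P(\mathsf{D}_{\rm dec}=2|\mathsf{H}=1)$ and $P(\mathsf{D}_{\rm dec}=1|\mathsf{H}=1)=P(\mathsf{D}_{\rm dec}=2|\mathsf{H}=2)$ directly (the paper's (\ref{eq:56}) makes the same exactness assumption you do); and the filtration-compatibility of $\Theta$ that you flag as the main obstacle is a genuine caveat, but the paper's change-of-variables step (\ref{ProofFluc2_3}) requires it equally, so your proof is no less rigorous on this point.
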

The proof of Theorem~\ref{Theorem2} is given in Appendix~\ref{App_ProofTheo2}.

A special case of the result in Theorem~\ref{Theorem2} has been found in the context of nonequilibrium statistical physics \cite{Roldan_etal15, Neri_etal17}: the two hypotheses correspond to a forward and a backward direction of the arrow of time, and $\Theta$  corresponds to the time-reversal operation.   The Radon-Nikod\'{y}m derivative $\mathsf{S}_t$ is then the stochastic entropy production, and the decision time $\mathsf{T}_{\rm dec}$ is its two-boundary first-passage time to cross one of two given symmetric values.       Moreover, in communication theory such a symmetry has been found to show that the probability of cycle slips to the
positive/negative boundary in phase-locked loops used for
synchronization is independent of time \cite[Eq. (74)]{LindseyMeyr77}.

\subsection{Information Theoretic Implications of Optimal Sequential Decision-Making}
Theorem~\ref{Theorem1} and Theorem~\ref{Theorem2} express statistical dependencies of different random quantities involved in optimal sequential decision-making.  Based on Theorem~\ref{Theorem1} we will now show the following.  
\begin{corollary}\label{Corollary3}
Under the same conditions as in Theorem~\ref{Theorem1}, the following equality for mutual information holds 
\begin{IEEEeqnarray}{rCL}
I(\mathsf{H};\mathsf{T}_{\rm dec}|\mathsf{D}_{\rm dec})&=&0
\label{MutInfIndepTermTime_Cont}
\end{IEEEeqnarray}
i.e., $I(\mathsf{H};\mathsf{T}_{\rm dec},\mathsf{D}_{\rm dec})=I(\mathsf{H};\mathsf{D}_{\rm dec})$.
\end{corollary}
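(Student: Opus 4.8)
The plan is to deduce from Theorem~\ref{Theorem1} that $\mathsf{H}$ and $\mathsf{T}_{\rm dec}$ are conditionally independent given $\mathsf{D}_{\rm dec}$, and then to use that conditional mutual information vanishes precisely under conditional independence. First I would rewrite the two identities (\ref{CondInvolution_Dec}) and (\ref{CondInvolution_Dec2}) as the single statement that, for each fixed decision outcome $d\in\{1,2\}$, the conditional density $p_{\mathsf{T}_{\rm dec}}(t|\mathsf{H}=h,\mathsf{D}_{\rm dec}=d)$ does not depend on $h\in\{1,2\}$.

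Next I would marginalize over the hypothesis. By the law of total probability,
\begin{IEEEeqnarray}{rCL}
p_{\mathsf{T}_{\rm dec}}(t|\mathsf{D}_{\rm dec}=d)&=&\sum_{h=1}^{2}P(\mathsf{H}=h|\mathsf{D}_{\rm dec}=d)\,p_{\mathsf{T}_{\rm dec}}(t|\mathsf{H}=h,\mathsf{D}_{\rm dec}=d).\nonumber
\end{IEEEeqnarray}
Since the factor $p_{\mathsf{T}_{\rm dec}}(t|\mathsf{H}=h,\mathsf{D}_{\rm dec}=d)$ is the same for $h=1$ and $h=2$ while the weights $P(\mathsf{H}=h|\mathsf{D}_{\rm dec}=d)$ sum to one, it follows that $p_{\mathsf{T}_{\rm dec}}(t|\mathsf{D}_{\rm dec}=d)=p_{\mathsf{T}_{\rm dec}}(t|\mathsf{H}=h,\mathsf{D}_{\rm dec}=d)$ for either value of $h$; that is, the decision time is conditionally independent of the hypothesis given the outcome.

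To match the definition of $I(\mathsf{H};\mathsf{T}_{\rm dec}|\mathsf{D}_{\rm dec})$ stated in the Notation paragraph, which is written in terms of the conditional law of $\mathsf{H}$, I would apply Bayes' rule: on the support of $\mathsf{T}_{\rm dec}$ given $\mathsf{D}_{\rm dec}=d$,
\begin{IEEEeqnarray}{rCL}
P(\mathsf{H}=h|\mathsf{T}_{\rm dec}=t,\mathsf{D}_{\rm dec}=d)&=&\frac{p_{\mathsf{T}_{\rm dec}}(t|\mathsf{H}=h,\mathsf{D}_{\rm dec}=d)}{p_{\mathsf{T}_{\rm dec}}(t|\mathsf{D}_{\rm dec}=d)}\,P(\mathsf{H}=h|\mathsf{D}_{\rm dec}=d)\nonumber\\
&=&P(\mathsf{H}=h|\mathsf{D}_{\rm dec}=d),\nonumber
\end{IEEEeqnarray}
where the last step uses the conditional independence just established. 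Substituting this into $I(\mathsf{H};\mathsf{T}_{\rm dec}|\mathsf{D}_{\rm dec})=\mathrm{E}\!\left[\log_2\frac{P(\mathsf{H}|\mathsf{T}_{\rm dec},\mathsf{D}_{\rm dec})}{P(\mathsf{H}|\mathsf{D}_{\rm dec})}\right]$ makes the argument of the logarithm equal to one almost surely, so the expectation is zero, proving (\ref{MutInfIndepTermTime_Cont}). Finally, the chain rule $I(\mathsf{H};\mathsf{T}_{\rm dec},\mathsf{D}_{\rm dec})=I(\mathsf{H};\mathsf{D}_{\rm dec})+I(\mathsf{H};\mathsf{T}_{\rm dec}|\mathsf{D}_{\rm dec})$ combined with (\ref{MutInfIndepTermTime_Cont}) yields the equivalent form $I(\mathsf{H};\mathsf{T}_{\rm dec},\mathsf{D}_{\rm dec})=I(\mathsf{H};\mathsf{D}_{\rm dec})$. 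I do not expect a genuine obstacle here: the only point requiring care is the Bayes'-rule bridge between Theorem~\ref{Theorem1}, which is phrased through the conditional density of $\mathsf{T}_{\rm dec}$, and the definition of conditional mutual information, which is phrased through the conditional law of $\mathsf{H}$, together with the harmless restriction to the support on which the denominator $p_{\mathsf{T}_{\rm dec}}(t|\mathsf{D}_{\rm dec}=d)$ is positive.
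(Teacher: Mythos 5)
Your proposal is correct and follows essentially the same route as the paper: both arguments use Theorem~\ref{Theorem1} to collapse the mixture $p_{\mathsf{T}_{\rm dec}}(t|\mathsf{D}_{\rm dec}=d)=\sum_{h}P(\mathsf{H}=h|\mathsf{D}_{\rm dec}=d)\,p_{\mathsf{T}_{\rm dec}}(t|\mathsf{H}=h,\mathsf{D}_{\rm dec}=d)$ so that the log-ratio in the conditional mutual information equals one, and then invoke the chain rule for the second form. The only difference is presentational: the paper writes $I(\mathsf{H};\mathsf{T}_{\rm dec}|\mathsf{D}_{\rm dec})$ via the conditional density of $\mathsf{T}_{\rm dec}$ directly, whereas you pass through Bayes' rule to the conditional law of $\mathsf{H}$ to match the Notation paragraph's definition --- an equivalent bookkeeping step, not a different argument.
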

\begin{proof}
By the chain rule for mutual information, $I(\mathsf{H};\mathsf{T}_{\rm dec}|\mathsf{D}_{\rm dec})$ can be expressed by
\begin{IEEEeqnarray}{rCL}
I(\mathsf{H};\mathsf{T}_{\rm dec},\mathsf{D}_{\rm dec})&=&I(\mathsf{H};\mathsf{D}_{\rm dec})+I(\mathsf{H};\mathsf{T}_{\rm dec}|\mathsf{D}_{\rm dec}).\label{ChainRuleMutSepCont}
\end{IEEEeqnarray}
The second term on the RHS of (\ref{ChainRuleMutSepCont}) is given by
\begin{IEEEeqnarray}{rCL}
&&I(\mathsf{H};\mathsf{T}_{\rm dec}|\mathsf{D}_{\rm dec})=\mathrm{E}\left[\log_2\left(\frac{p_{\mathsf{T}_{\rm dec}}(\mathsf{T}_{\rm dec}|\mathsf{D}_{\rm dec},\mathsf{H})}{p_{\mathsf{T}_{\rm dec}}(\mathsf{T}_{\rm dec}|\mathsf{D}_{\rm dec})}\right)\right]\\
&&=\mathrm{E}\left[\log_2\left(\frac{p_{\mathsf{T}_{\rm dec}}(\mathsf{T}_{\rm dec}|\mathsf{D}_{\rm dec},\mathsf{H})}{p_{\mathsf{T}_{\rm dec},\mathsf{H}}(\mathsf{T}_{\rm dec},\mathsf{H}=1|\mathsf{D}_{\rm dec})+p_{\mathsf{T}_{\rm dec},\mathsf{H}}(\mathsf{T}_{\rm dec},\mathsf{H}=2|\mathsf{D}_{\rm dec})}\right)\right]\\
&&=\mathrm{E}\left[\log_2\left(\frac{p_{\mathsf{T}_{\rm dec}}(\mathsf{T}_{\rm dec}|\mathsf{D}_{\rm dec},\mathsf{H})}{p_{\mathsf{T}_{\rm dec}}(\mathsf{T}_{\rm dec}|\mathsf{D}_{\rm dec},\mathsf{H}=1)P(\mathsf{H}=1|\mathsf{D}_{\rm dec})+p_{\mathsf{T}_{\rm dec}}(\mathsf{T}_{\rm dec}|\mathsf{D}_{\rm dec},\mathsf{H}=2)P(\mathsf{H}=2|\mathsf{D}_{\rm dec})}\right)\right]\nonumber\\
&&=\mathrm{E}\left[\log_2\left(\frac{p_{\mathsf{T}_{\rm dec}}(\mathsf{T}_{\rm dec}|\mathsf{D}_{\rm dec},\mathsf{H})}{p_{\mathsf{T}_{\rm dec}}(\mathsf{T}_{\rm dec}|\mathsf{D}_{\rm dec},\mathsf{H}=1)P(\mathsf{H}=1|\mathsf{D}_{\rm dec})+p_{\mathsf{T}_{\rm dec}}(\mathsf{T}_{\rm dec}|\mathsf{D}_{\rm dec},\mathsf{H}=1)P(\mathsf{H}=2|\mathsf{D}_{\rm dec})}\right)\right]\nonumber\\
\label{MutInfOptCont_1}\\
&&=\mathrm{E}\left[\log_2\left(\frac{p_{\mathsf{T}_{\rm dec}}(\mathsf{T}_{\rm dec}|\mathsf{D}_{\rm dec},\mathsf{H})}{p_{\mathsf{T}_{\rm dec}}(\mathsf{T}_{\rm dec}|\mathsf{D}_{\rm dec},\mathsf{H}=1)\left(P(\mathsf{H}=1|\mathsf{D}_{\rm dec})+P(\mathsf{H}=2|\mathsf{D}_{\rm dec})\right)}\right)\right]\\
&&=\mathrm{E}\left[\log_2\left(\frac{p_{\mathsf{T}_{\rm dec}}(\mathsf{T}_{\rm dec}|\mathsf{D}_{\rm dec},\mathsf{H})}{p_{\mathsf{T}_{\rm dec}}(\mathsf{T}_{\rm dec}|\mathsf{D}_{\rm dec},\mathsf{H}=1)}\right)\right]\\
&&=P(\mathsf{H}\!=\!1)\mathrm{E}\!\left[\log_2\!\left(\frac{p_{\mathsf{T}_{\rm dec}}(\mathsf{T}_{\rm dec}|\mathsf{D}_{\rm dec},\mathsf{H}=1)}{p_{\mathsf{T}_{\rm dec}}(\mathsf{T}_{\rm dec}|\mathsf{D}_{\rm dec},\mathsf{H}=1)}\right)\!\right]+P(\mathsf{H}\!=\!2)\mathrm{E}\!\left[\log_2\!\left(\frac{p_{\mathsf{T}_{\rm dec}}(\mathsf{T}_{\rm dec}|\mathsf{D}_{\rm dec},\mathsf{H}=2)}{p_{\mathsf{T}_{\rm dec}}(\mathsf{T}_{\rm dec}|\mathsf{D}_{\rm dec},\mathsf{H}=1)}\right)\!\right]\nonumber\\
&&=0\label{MutInfOptCont_2}
\end{IEEEeqnarray}
where for (\ref{MutInfOptCont_1}) and for (\ref{MutInfOptCont_2}) we have used Theorem~\ref{Theorem1}.
\end{proof}
Corollary~\ref{Corollary3} states that in case of optimal sequential decision-making  the decision time $\mathsf{T}_{\rm dec}$ does not give any  additional information on the hypothesis $\mathsf{H}$ beyond the decision outcome $\mathsf{D}_{\rm dec}$.
In this regard, consider that the first term on the RHS of (\ref{ChainRuleMutSepCont}) is  the mutual information the decision outcome of the test $\mathsf{D}_{\textrm{dec}}$ gives  about the actual hypothesis $\mathsf{H}$. The second term on the RHS of (\ref{ChainRuleMutSepCont}) $I(\mathsf{H};\mathsf{T}_{\rm dec}|\mathsf{D}_{\rm dec})$ is the additional information the termination time $\mathsf{T}_{\rm dec}$ gives on the hypothesis $\mathsf{H}$ beyond the information given by the decision $\mathsf{D}_{\rm dec}$. Thus, we have proved that for continuous observation processes optimal sequential decision-making w.r.t.\ Definition~\ref{DefinitionOptInfUse} is achievable and that $I(\mathsf{H};\mathsf{T}^{*}|\mathsf{D}^{*})=0$. 
Note that since sequential probability ratio tests $(\mathsf{D}_{\textrm{dec}},\mathsf{T}_{\textrm{dec}})$ have been shown to be optimal in the sense of Definition~\ref{DefinitionMinimumMeanTime}, Corollary~\ref{Corollary3} implies that optimality in the sense of Definition~\ref{DefinitionMinimumMeanTime} also implies optimality in the sense of Definition~\ref{DefinitionOptInfUse}. 

In case  the assumptions of Theorem~\ref{Theorem2} are satisfied additionally, the following two corollaries hold.

\begin{corollary}\label{Corollary5}
Under the same conditions as in Theorem~\ref{Theorem2}, 
the following equality holds
\begin{IEEEeqnarray}{rCL}
I(\mathsf{D}_{\rm dec};\mathsf{T}_{\rm dec}) &=& 0  .
\end{IEEEeqnarray}
\end{corollary}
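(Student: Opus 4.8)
The plan is to read the statement off directly from equation~(\ref{eq:31}) of Theorem~\ref{Theorem2}, which asserts that the two conditional decision time densities $p_{\mathsf{T}_{\rm dec}}(t|\mathsf{D}_{\rm dec}=1)$ and $p_{\mathsf{T}_{\rm dec}}(t|\mathsf{D}_{\rm dec}=2)$ coincide. First I would upgrade this equality of the two conditionals to the stronger statement that each of them equals the marginal $p_{\mathsf{T}_{\rm dec}}(t)$. Since $\mathsf{D}_{\rm dec}$ is binary, the law of total probability gives
\begin{IEEEeqnarray}{rCL}
p_{\mathsf{T}_{\rm dec}}(t) &=& p_{\mathsf{T}_{\rm dec}}(t|\mathsf{D}_{\rm dec}=1)\,P(\mathsf{D}_{\rm dec}=1)\nonumber\\
&& {}+\, p_{\mathsf{T}_{\rm dec}}(t|\mathsf{D}_{\rm dec}=2)\,P(\mathsf{D}_{\rm dec}=2),\nonumber
\end{IEEEeqnarray}
and substituting (\ref{eq:31}) lets me factor out the common conditional density, which is then multiplied by $P(\mathsf{D}_{\rm dec}=1)+P(\mathsf{D}_{\rm dec}=2)=1$. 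Hence $p_{\mathsf{T}_{\rm dec}}(t|\mathsf{D}_{\rm dec}=d)=p_{\mathsf{T}_{\rm dec}}(t)$ for $d\in\{1,2\}$, which is precisely the statement that $\mathsf{T}_{\rm dec}$ and $\mathsf{D}_{\rm dec}$ are statistically independent.

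Next I would substitute this independence into the definition of mutual information given in the Notation paragraph. By the symmetry of mutual information I may write $I(\mathsf{D}_{\rm dec};\mathsf{T}_{\rm dec})=\mathrm{E}\!\left[\log_2\!\left(p_{\mathsf{T}_{\rm dec}}(\mathsf{T}_{\rm dec}|\mathsf{D}_{\rm dec})/p_{\mathsf{T}_{\rm dec}}(\mathsf{T}_{\rm dec})\right)\right]$; the independence just established makes the ratio inside the logarithm equal to $1$ almost surely, so the integrand vanishes and $I(\mathsf{D}_{\rm dec};\mathsf{T}_{\rm dec})=\mathrm{E}[\log_2 1]=0$, as claimed.

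I expect no genuine obstacle here, since the whole content of the corollary is already contained in~(\ref{eq:31}) and what remains is only its information-theoretic reformulation. The single point deserving care is the passage from ``the two conditionals agree'' to ``each conditional equals the marginal,'' which relies on $\mathsf{D}_{\rm dec}$ taking only two values; the total-probability decomposition above disposes of this in one line.
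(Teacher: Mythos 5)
Your proposal is correct and follows essentially the same route as the paper's own proof: both expand the marginal $p_{\mathsf{T}_{\rm dec}}(t)$ by total probability over the binary $\mathsf{D}_{\rm dec}$, invoke (\ref{eq:31}) to collapse it to the common conditional density, and conclude that the ratio inside the mutual-information logarithm is identically $1$. The only cosmetic difference is that you state the intermediate independence of $\mathsf{T}_{\rm dec}$ and $\mathsf{D}_{\rm dec}$ explicitly before substituting, whereas the paper performs the same substitution inline within the expectation.
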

\begin{proof}
It holds that 
\begin{IEEEeqnarray}{rCL}
I(\mathsf{D}_{\rm dec};\mathsf{T}_{\rm dec}) &=& \mathrm{E}\left[\log \frac{p_{\mathsf{T}_{\rm dec}}(\mathsf{T}_{\rm dec}|\mathsf{D}_{\rm dec})}{p_{\mathsf{T}_{\rm dec}}(\mathsf{T}_{\rm dec})} \right] 
\\ 
&=& \mathrm{E}\left[\log \frac{p_{\mathsf{T}_{\rm dec}}(\mathsf{T}_{\rm dec}|\mathsf{D}_{\rm dec})}{p_{\mathsf{T}_{\rm dec}}(\mathsf{T}_{\rm dec}|\mathsf{D}_{\rm dec}=1)P\left(\mathsf{D}_{\rm dec}=1\right) +p_{\mathsf{T}_{\rm dec}}(\mathsf{T}_{\rm dec}|\mathsf{D}_{\rm dec}=2)P\left(\mathsf{D}_{\rm dec}=2\right)  }\right] \nonumber \\ 
&=& \mathrm{E}\left[\log \frac{p_{\mathsf{T}_{\rm dec}}(\mathsf{T}_{\rm dec}|\mathsf{D}_{\rm dec})}{p_{\mathsf{T}_{\rm dec}}(\mathsf{T}_{\rm dec}|\mathsf{D}_{\rm dec}=1) }\right]  \label{eq:70x}
\\
&=& 0 \label{eq:71x}
\end{IEEEeqnarray} 
where we have used (\ref{eq:31}) in  (\ref{eq:70x}) and (\ref{eq:71x}).   
\end{proof}

\begin{corollary}\label{Corollary4}
Under the same conditions as in Theorem~\ref{Theorem2}, and with the additional assumption that  $P(\mathsf{H}= 1) = P(\mathsf{H}=2)$,
the following equality holds
\begin{IEEEeqnarray}{rCL}
I(\mathsf{H};\mathsf{T}_{\rm dec})&=&0.\label{MutInfIndepTermTime_Cont2}
\end{IEEEeqnarray}
\end{corollary}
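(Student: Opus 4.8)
The plan is to obtain \eqref{MutInfIndepTermTime_Cont2} without touching densities at all, using only the two mutual-information identities already in hand together with the chain rule and nonnegativity of mutual information. The two facts I would invoke are $I(\mathsf{H};\mathsf{T}_{\rm dec}|\mathsf{D}_{\rm dec})=0$ (Corollary~\ref{Corollary3}, which holds under the assumptions of Theorem~\ref{Theorem1} and hence a fortiori under those of Theorem~\ref{Theorem2}) and $I(\mathsf{D}_{\rm dec};\mathsf{T}_{\rm dec})=0$ (Corollary~\ref{Corollary5}). First I would write the joint mutual information $I(\mathsf{H},\mathsf{D}_{\rm dec};\mathsf{T}_{\rm dec})$ and expand it by the chain rule in the two possible orderings,
\[
I(\mathsf{H},\mathsf{D}_{\rm dec};\mathsf{T}_{\rm dec})
= I(\mathsf{D}_{\rm dec};\mathsf{T}_{\rm dec}) + I(\mathsf{H};\mathsf{T}_{\rm dec}|\mathsf{D}_{\rm dec})
= I(\mathsf{H};\mathsf{T}_{\rm dec}) + I(\mathsf{D}_{\rm dec};\mathsf{T}_{\rm dec}|\mathsf{H}).
\]
The first decomposition shows immediately that the joint information vanishes, since its two summands are zero by Corollary~\ref{Corollary5} and Corollary~\ref{Corollary3}, respectively.

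Then I would read off the conclusion from the second decomposition: we now know $I(\mathsf{H},\mathsf{D}_{\rm dec};\mathsf{T}_{\rm dec})=0$, while both remaining summands $I(\mathsf{H};\mathsf{T}_{\rm dec})$ and $I(\mathsf{D}_{\rm dec};\mathsf{T}_{\rm dec}|\mathsf{H})$ are nonnegative. A sum of nonnegative quantities that equals zero forces each to vanish, so in particular $I(\mathsf{H};\mathsf{T}_{\rm dec})=0$, which is exactly \eqref{MutInfIndepTermTime_Cont2}. This is the whole argument; it is short precisely because the work has already been done in the two preceding corollaries.

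As an independent check I would also carry out the direct density route, which makes the dependence on Theorem~\ref{Theorem1} and Theorem~\ref{Theorem2} explicit. Marginalizing $\mathsf{D}_{\rm dec}$ out of $p_{\mathsf{T}_{\rm dec}}(t|\mathsf{H}=1)$ and using \eqref{eq:29} gives $p_{\mathsf{T}_{\rm dec}}(t|\mathsf{H}=1)=p_{\mathsf{T}_{\rm dec}}(t|\mathsf{H}=1,\mathsf{D}_{\rm dec}=1)$; applying \eqref{CondInvolution_Dec} and then \eqref{eq:30} turns this into $p_{\mathsf{T}_{\rm dec}}(t|\mathsf{H}=2)$, so the two conditional densities coincide and the marginal $p_{\mathsf{T}_{\rm dec}}(t)$ equals each of them, whence $I(\mathsf{H};\mathsf{T}_{\rm dec})=0$. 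The role of the extra hypothesis $P(\mathsf{H}=1)=P(\mathsf{H}=2)$ is to fix the balanced, symmetric setting in which Theorem~\ref{Theorem2} and Corollary~\ref{Corollary5} are stated; it enters when one writes the marginal time density as an equally weighted mixture of the two hypotheses.

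I do not expect a genuine obstacle here, only bookkeeping. The one point to verify is that the chain rule and the nonnegativity of (conditional) mutual information are legitimately applied to the mixed setting of discrete variables $\mathsf{H},\mathsf{D}_{\rm dec}$ and the continuous, density-admitting variable $\mathsf{T}_{\rm dec}$; this is guaranteed by the standing assumption that $\mathsf{T}_{\rm dec}$ possesses a density together with $\mathrm{E}[\mathsf{T}_{\rm dec}|\mathsf{H}=i]<\infty$, so that all the information quantities above are well defined and finite. Everything else is an immediate consequence of the results established earlier.
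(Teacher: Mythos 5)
Your proof is correct, but it takes a genuinely different route from the paper's. The paper proves Corollary~\ref{Corollary4} by a direct density computation (Appendix~\ref{App_Corollary4}): it expands $I(\mathsf{H};\mathsf{T}_{\rm dec})$ as an expected log-ratio, uses the equal priors together with $\alpha_1=\alpha_2$ and the probability ratios (\ref{eq:27}) and (\ref{eq:56}) to conclude $P(\mathsf{D}_{\rm dec}=1)=P(\mathsf{D}_{\rm dec}=2)=\tfrac{1}{2}$, and then invokes Theorem~\ref{Theorem1} and Theorem~\ref{Theorem2} to show $p_{\mathsf{T}_{\rm dec}}(\mathsf{T}_{\rm dec})=p_{\mathsf{T}_{\rm dec}}(\mathsf{T}_{\rm dec}|\mathsf{H},\mathsf{D}_{\rm dec})$, so that the argument of the logarithm equals one. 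You instead work entirely at the information-theoretic layer: the two chain-rule decompositions of $I(\mathsf{H},\mathsf{D}_{\rm dec};\mathsf{T}_{\rm dec})$, with Corollary~\ref{Corollary5} and Corollary~\ref{Corollary3} annihilating one decomposition and nonnegativity of (conditional) mutual information finishing off the other. This is legitimate --- all quantities involved are finite since $\mathsf{H}$ and $\mathsf{D}_{\rm dec}$ are binary and $\mathsf{T}_{\rm dec}$ admits a density --- and it buys modularity: it reuses the two corollaries as black boxes, never touches densities, and actually delivers the stronger conclusion $I(\mathsf{H},\mathsf{D}_{\rm dec};\mathsf{T}_{\rm dec})=0$, i.e., joint independence of $\mathsf{T}_{\rm dec}$ from $(\mathsf{H},\mathsf{D}_{\rm dec})$. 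It also exposes that the hypothesis $P(\mathsf{H}=1)=P(\mathsf{H}=2)$ is redundant: neither Corollary~\ref{Corollary3} nor Corollary~\ref{Corollary5} requires it, so your argument proves the corollary without it. One inaccuracy in your commentary, though not in the proof itself: your closing claim that the equal-prior assumption ``enters when one writes the marginal time density as an equally weighted mixture'' is not right --- it never enters your chain-rule argument, and it is not needed in your density check either, since once (\ref{CondInvolution_Dec}), (\ref{CondInvolution_Dec2}), (\ref{eq:29}), and (\ref{eq:30}) force all four conditional densities to coincide, the mixture weights are irrelevant. The equal priors are used only in the paper's intermediate computation of $P(\mathsf{D}_{\rm dec}=1)=P(\mathsf{D}_{\rm dec}=2)=\tfrac{1}{2}$, a step your route shows to be dispensable; what the paper's computation buys in exchange is an explicit exhibition of the independence structure at the density level.
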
 
The proof of Corollary~\ref{Corollary4} is given in Appendix~\ref{App_Corollary4}.

\section{Optimality Conditions for Discrete-Time Observation Processes}\label{sec:4}
In the following, we extend the analysis on optimal information usage in sequential decision-making to the discrete-time setting. In discrete time the optimal test in the sense of Definition~\ref{DefinitionMinimumMeanTime} is given by $\mathsf{T}_{\rm Wald}$ and $\mathsf{D}_{\rm Wald}$ defined in (\ref{Def_T_Wald}) and (\ref{Def_D_Wald}). Extending our results to a discrete-time setting is relevant for  discrete-time systems.   Moreover,  in usual experimental setups  a continuous-time system is sampled yielding a discrete-time representation. The extension from continuous processes to discrete-time processes is not straightforward, as one key characteristic in the continuous-time setting is the fact that the test terminates with a cumulative log-likelihood ratio exactly hitting one of the thresholds. This property of continuous processes does not hold true in the discrete-time setting, where the mean value of the cumulative log-likelihood ratio at the decision time slightly overshoots the thresholds.

The thresholds $L_1$ and $L_2$ depend on the maximum allowed error probabilities $\alpha_1$ and $\alpha_2$, cf.~(\ref{TestReq}). Due to the fact that in the discrete-time setting the trajectory of the accumulated log-likelihood ratios $\mathsf{S}_k$ in (\ref{Cumulated_LLR}) does not necessarily hit one of the thresholds the determination of the optimal thresholds $L_1$ and $L_2$ in terms of $\alpha_1$ and $\alpha_2$ are rather involved, see \cite{Wald1945}. $L_1$ and $L_2$ are chosen such that the  allowed error probabilities given in (\ref{TestReq}) are obeyed with equality.

%\subsection{Conditions for Optimal Information Usage}
In the following, we study the statistical dependencies between the hypothesis $\mathsf{H}$, the decision $\mathsf{D}_{\rm Wald}$, and the number of observations $\mathsf{T}_{\rm Wald}$ the sequential probability ratio test given by (\ref{Def_T_Wald}) and (\ref{Def_D_Wald}) uses to make decisions. 

The necessary condition for optimal decision devices given in Theorem~\ref{Theorem1} for the continuous-time setting does not carry over to the discrete-time settings as we will discuss in the following. This can be understood from applying the steps in the proof of Theorem~\ref{Theorem1} in (\ref{Eq19}) to (\ref{Eq23}) to the discrete-time setting. In the discrete-time case with $t\in \mathbb{Z}_+$ the measure $\mathbb{P}_1(\Phi_{1}(t))$ of the discrete-time version of the set $\Phi_{1}(t)$ in (\ref{SetPi_cont}) can be expressed by
\begin{IEEEeqnarray}{rCL}
\mathbb{P}_1(\Phi_{1}(t))
&=&\int_{\omega\in\Phi_{1}(t)}\mathrm{d}\mathbb{P}_1|_{\mathcal{F}_t}\label{Eq19_Disc}\\
&=&\int_{\omega\in\Phi_{1}(t)}e^{\mathsf{S}_t}\mathrm{d}\mathbb{P}_2|_{\mathcal{F}_t} \label{ProofFluc1_1_Disc}\\
&=&\int_{\omega\in\Phi_{1}(t)}e^{\mathsf{S}_{\mathsf{T}_{\mathrm{Wald}}}}\mathrm{d}\mathbb{P}_2|_{\mathcal{F}_t} \label{ProofFluc1_2_Disc}\\
&=&\frac{\int_{\omega\in\Phi_{1}(t)}e^{\mathsf{S}_{\mathsf{T}_{\mathrm{Wald}}}}\mathrm{d}\mathbb{P}_2|_{\mathcal{F}_t}}{\int_{\omega\in\Phi_{1}(t)}\mathrm{d}\mathbb{P}_2|_{\mathcal{F}_t}}\int_{\omega\in\Phi_{1}(t)}\mathrm{d}\mathbb{P}_2|_{\mathcal{F}_t}\label{ProofFluc1_2_Disc}\\
&=&\mathrm{E}\left[e^{\mathsf{S}_{\mathsf{T}_{\mathrm{Wald}}}}|\mathsf{H}=2, \omega\in\Phi_1(t)\right]\ \mathbb{P}_2(\Phi_{1}(t))\\
&=&e^{L_1}\mathrm{E}\left[e^{\mathsf{M}_1}|\mathsf{H}=2, \omega\in\Phi_1(t)\right]\ \mathbb{P}_2(\Phi_{1}(t))\label{Eq23_Disc}
\end{IEEEeqnarray}
where $\mathsf{M}_1>0$ in (\ref{Eq23_Disc}) is the overshoot beyond the threshold $L_1$. Since in general the distribution of the overshoot $\mathsf{M}_1 = \mathsf{S}_{\mathsf{T}_{\rm Wald}}  - L_1$ depends on the time $t$ the fluctuation relations (\ref{CondInvolution_Dec}) and (\ref{CondInvolution_Dec2}) do not extend to the discrete-time case. 

Translating (\ref{DistribDef1}) and (\ref{DistribDef2}) to the discrete-time setting yields
\begin{IEEEeqnarray}{rCL}
p_{\mathsf{T}_{\rm dec}}(t|\mathsf{H}=1,\mathsf{D}_{\rm dec}=1)P(\mathsf{D}_{\rm dec}=1|\mathsf{H}=1)&=&\mathbb{P}_1(\Phi_1(t+1))-\mathbb{P}_1(\Phi_1(t))\label{DistribDef1_disc}\\
p_{\mathsf{T}_{\rm dec}}(t|\mathsf{H}=2,\mathsf{D}_{\rm dec}=1)P(\mathsf{D}_{\rm dec}=1|\mathsf{H}=2)&=&\mathbb{P}_2(\Phi_1(t+1))-\mathbb{P}_2(\Phi_1(t))\label{DistribDef2_disc}
\end{IEEEeqnarray}
for $t\ge 1$. 
 
Taking the difference between the values of $\mathbb{P}_1(\Phi_{1}(t))$ at two consecutive time instants we get
\begin{IEEEeqnarray}{rCL}
\mathbb{P}_1(\Phi_{1}(t+1))-\mathbb{P}_1(\Phi_{1}(t))
&=&e^{L_1}\mathrm{E}\left[e^{\mathsf{M}_1}|\mathsf{H}=2, \omega\in\Phi_1(t+1)\right]\ \mathbb{P}_2(\Phi_{1}(t+1))
\nonumber\\
&&
-e^{L_1}\mathrm{E}\left[e^{\mathsf{M}_1}|\mathsf{H}=2, \omega\in\Phi_1(t)\right]\ \mathbb{P}_2(\Phi_{1}(t)).
\end{IEEEeqnarray}
In case $\mathrm{E}\left[e^{\mathsf{M}_1}|\mathsf{H}=2,\omega\in \Phi_{1}(t)\right]=K$ is time independent, we get from (\ref{DistribDef1_disc}) and (\ref{DistribDef2_disc}) 
\begin{IEEEeqnarray}{rCL}
\frac{P(\mathsf{D}_{\rm Wald}=1|\mathsf{H}=1)}{P(\mathsf{D}_{\rm Wald}=1|\mathsf{H}=2)}&=&\frac{1-\alpha_2}{\alpha_1}=e^{L_1} K\label{eq:27_disc}
\end{IEEEeqnarray}
where we have used the assumption that the test terminates almost surely, and we get the fluctuation relations corresponding to Theorem~\ref{Theorem1} for decision times $\mathsf{T}_{\rm Wald}$ in the discrete-time case.

The constraint that $\mathrm{E}\left[e^{\mathsf{M}_1}|\mathsf{H}=2, \omega\in\Phi_1(t)\right]$ is time independent is approximately fulfilled in case the size of the thresholds $L_1$ and $L_2$ is large in comparison to the average increase of the log-likelihood ratio $\Delta_n$ per observation sample, see (\ref{Cumulated_LLR_noniid}). This can be seen as taking the continuum limit of the decision making process. In this regard, consider that the distribution of the overshoot $\mathsf{M}_1$ is time independent if the distribution of the distance $L_{\mathsf{D}_{\textrm{Wald}}}-\mathsf{S}_{T_{\textrm{Wald}}-1}$, at the time instant before a decision is taken, is time independent, and if the distribution of the increment $\Delta_{T_{\textrm{Wald}}}$ is independent of time. The distribution of $L_{\mathsf{D}_{\textrm{Wald}}}-\mathsf{S}_{T_{\textrm{Wald}}-1}$ is time independent if the initial value of the cumulative log-likelihood has no significant influence anymore on the distribution of $\mathsf{S}_{\mathsf{T}_{\textrm{Wald}}-1}$ when conditioning on termination at time instant $\mathsf{T}_{\textrm{Wald}}$. This is satisfied in case $\mathsf{T}_{\textrm{Wald}}$ is sufficiently large, which holds if the thresholds $L_1$ and $L_2$ are large in comparison to the average of the increments of the log-likelihood ratio $\Delta_n$. This is illustrated for an example based on numerical simulations in Section~\ref{SectOvershot}. 

In the following, we assume that the condition 
\begin{IEEEeqnarray}{rCL}
\mathrm{E}\left[e^{\mathsf{M}_1}|\mathsf{H}=2, \omega\in\Phi_1(t)\right]&=&K\label{ConditionDiscreteTime}
\end{IEEEeqnarray}
is fulfilled. For many practical applications this condition is approximately fulfilled, see the numerical experiments in Section~\ref{SectNumericalExperminents}.

The results on optimal information usage carry over from continuous time to discrete time given that (\ref{ConditionDiscreteTime}) holds.

\begin{theorem}\label{Theorem3Disc}
We consider a binary sequential hypothesis testing problem with the hypotheses $\mathsf{H}\in\{1,2\}$. Let $\{p_{\mathsf{X}_1^{k+1}}\left(\cdot|\mathsf{H}=1\right)\}$ and $\{p_{\mathsf{X}_1^{k+1}}\left(\cdot|\mathsf{H}=2\right)\}$ be two sequences of probability density functions of the sequence of real valued observations $\{\mathsf{X}_1,\mathsf{X}_2,\hdots\}$ in case hypothesis $\mathsf{H}=1$ and $\mathsf{H}=2$ are true, respectively, and with $k\in\mathbb{Z}_+$. Let $\mathsf{T}_{\rm Wald}$ and $\mathsf{D}_{\rm Wald}$ be as in (\ref{Def_T_Wald}) and (\ref{Def_D_Wald}) with $\mathrm{E}[\mathsf{T}_{\rm Wald}|\mathsf{H}=i]<\infty\; (i=1,2)$. Under these assumptions and the assumption that (\ref{ConditionDiscreteTime}) is fulfilled it holds that
\begin{IEEEeqnarray}{rCL}
P(\mathsf{T}_{\rm Wald}=k|\mathsf{H}=2,\mathsf{D}_{\rm Wald}=1)&=&P(\mathsf{T}_{\rm Wald}=k|\mathsf{H}=1,\mathsf{D}_{\rm Wald}=1)\label{CondInvolution_Dec_tilde_discrete1}\\
P(\mathsf{T}_{\rm Wald}=k|\mathsf{H}=2,\mathsf{D}_{\rm Wald}=2)&=&P(\mathsf{T}_{\rm Wald}=k|\mathsf{H}=1,\mathsf{D}_{\rm Wald}=2)\label{CondInvolution_Dec_tilde_discrete2}
\end{IEEEeqnarray}
for all $k\in\mathbb{Z}_+$.
\end{theorem}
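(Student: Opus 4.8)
The plan is to mirror the proof of Theorem~\ref{Theorem1} step by step, replacing the exact threshold-crossing property of continuous processes by the overshoot decomposition $\mathsf{S}_{\mathsf{T}_{\rm Wald}}=L_1+\mathsf{M}_1$, and replacing the derivative of the cumulative distribution by a first difference. The essential input has in fact already been assembled in the display (\ref{Eq19_Disc})--(\ref{Eq23_Disc}): for the discrete-time set $\Phi_1(t)=\{\omega:\mathsf{T}_{\rm Wald}(\omega)\le t,\ \mathsf{D}_{\rm Wald}(\omega)=1\}$ one has $\mathbb{P}_1(\Phi_1(t))=e^{L_1}\,\mathrm{E}[e^{\mathsf{M}_1}\,|\,\mathsf{H}=2,\ \omega\in\Phi_1(t)]\,\mathbb{P}_2(\Phi_1(t))$, obtained from the Radon--Nikod\'ym change of measure and the optional sampling theorem applied to the $\mathbb{P}_2$-martingale $e^{\mathsf{S}_t}$. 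Invoking the standing assumption (\ref{ConditionDiscreteTime}) that the exponentiated overshoot has a time-independent conditional mean $K$, this collapses to the clean proportionality $\mathbb{P}_1(\Phi_1(t))=C\,\mathbb{P}_2(\Phi_1(t))$, where $C=e^{L_1}K$ is a \emph{constant} that does not depend on $t$.

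First I would fix the notation $F_i(t):=\mathbb{P}_i(\Phi_1(t))=P(\mathsf{T}_{\rm Wald}\le t,\ \mathsf{D}_{\rm Wald}=1\,|\,\mathsf{H}=i)$ for $i=1,2$, so that the previous paragraph reads $F_1(t)=C\,F_2(t)$ for every integer $t$. Then I would take first differences at two consecutive times: because $C$ is independent of $t$, $F_1(k)-F_1(k-1)=C\,[\,F_2(k)-F_2(k-1)\,]$. By the discrete analogues (\ref{DistribDef1_disc}) and (\ref{DistribDef2_disc}), the left- and right-hand differences equal the joint masses $P(\mathsf{T}_{\rm Wald}=k\,|\,\mathsf{H}=i,\ \mathsf{D}_{\rm Wald}=1)\,P(\mathsf{D}_{\rm Wald}=1\,|\,\mathsf{H}=i)$, so the proportionality carries over from the cumulative distributions to the conditional mass functions. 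Finally I would substitute the value $C=e^{L_1}K=P(\mathsf{D}_{\rm Wald}=1\,|\,\mathsf{H}=1)/P(\mathsf{D}_{\rm Wald}=1\,|\,\mathsf{H}=2)$ from (\ref{eq:27_disc}); the decision probabilities then cancel on both sides, leaving exactly (\ref{CondInvolution_Dec_tilde_discrete1}).

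For the second identity (\ref{CondInvolution_Dec_tilde_discrete2}) I would run the same argument at the lower threshold, writing $\Phi_2(t)=\{\omega:\mathsf{T}_{\rm Wald}(\omega)\le t,\ \mathsf{D}_{\rm Wald}(\omega)=2\}$ and decomposing $\mathsf{S}_{\mathsf{T}_{\rm Wald}}=L_2-\mathsf{M}_2$ with an undershoot $\mathsf{M}_2\ge 0$. The change of measure then produces the factor $e^{L_2}\,\mathrm{E}[e^{-\mathsf{M}_2}\,|\,\mathsf{H}=2,\ \omega\in\Phi_2(t)]$, and under the symmetric time-independence assumption on this conditional mean, together with the ratio $P(\mathsf{D}_{\rm Wald}=2\,|\,\mathsf{H}=1)/P(\mathsf{D}_{\rm Wald}=2\,|\,\mathsf{H}=2)=\alpha_2/(1-\alpha_1)=e^{L_2}K'$ (the lower-threshold counterpart of (\ref{eq:27_disc})), the identical difference-and-cancel argument yields (\ref{CondInvolution_Dec_tilde_discrete2}).

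The main obstacle, and the whole reason this does not reduce trivially to Theorem~\ref{Theorem1}, is the overshoot: in continuous time $e^{\mathsf{S}_{\mathsf{T}_{\rm dec}}}$ equals $e^{L_1}$ \emph{exactly} on $\Phi_1$ and factors straight out of the integral, whereas in discrete time the random factor $e^{\mathsf{M}_1}$ remains. The delicate point is therefore \emph{not} the algebra but the hypothesis (\ref{ConditionDiscreteTime}): it is precisely what guarantees that the proportionality constant $C$ is independent of $t$, and only a $t$-independent $C$ survives the first-difference operation. I would make explicit that the argument additionally needs the analogous statement for the undershoot $\mathsf{M}_2$, and I would point to the heuristic justification already given before the theorem---that both conditional means become constant in the regime where the thresholds $L_1,L_2$ are large compared with the typical increment $\mathsf{\Delta}_n$, i.e.\ in the continuum limit---as the place where this assumption is motivated.
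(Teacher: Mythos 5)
Your proof is correct and follows essentially the same route as the paper: the paper's own justification of Theorem~\ref{Theorem3Disc} is precisely the displayed derivation (\ref{Eq19_Disc})--(\ref{Eq23_Disc}) together with the first-difference argument via (\ref{DistribDef1_disc})--(\ref{DistribDef2_disc}) and the ratio (\ref{eq:27_disc}), which you reproduce step by step. Your closing remark is also well taken: the stated hypothesis (\ref{ConditionDiscreteTime}) concerns only the overshoot $\mathsf{M}_1$ at the upper threshold $L_1$, so the second identity (\ref{CondInvolution_Dec_tilde_discrete2}) indeed requires the analogous time-independence condition for the undershoot at $L_2$, a point the paper leaves implicit.
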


Theorem~\ref{Theorem3Disc} implies optimal usage of information with respect to Definition~\ref{DefinitionOptInfUse} for the discrete-time setting yielding the following corollary.

\begin{corollary}\label{Corollary4_+}
Under the same conditions as in Theorem~\ref{Theorem3Disc}, the following equality for mutual information holds 
\begin{IEEEeqnarray}{rCL}
I(\mathsf{H};\mathsf{T}_{\rm Wald},\mathsf{D}_{\rm Wald})&=&I(\mathsf{H};\mathsf{D}_{\rm Wald})\label{MutInfIndepTermTime_Disc}
\end{IEEEeqnarray}
implying that 
\begin{IEEEeqnarray}{rCL}
I(\mathsf{H};\mathsf{T}_{\rm Wald}|\mathsf{D}_{\rm Wald})&=&0. \label{testwald}
\end{IEEEeqnarray}
\end{corollary}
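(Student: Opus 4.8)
The plan is to follow the same chain-rule argument used in the proof of Corollary~\ref{Corollary3}, replacing the continuous decision-time densities by the discrete probability mass functions $P(\mathsf{T}_{\rm Wald}=k|\cdots)$ and replacing the appeal to Theorem~\ref{Theorem1} by an appeal to Theorem~\ref{Theorem3Disc}. First I would invoke the chain rule for mutual information to write
\begin{IEEEeqnarray}{rCL}
I(\mathsf{H};\mathsf{T}_{\rm Wald},\mathsf{D}_{\rm Wald})&=&I(\mathsf{H};\mathsf{D}_{\rm Wald})+I(\mathsf{H};\mathsf{T}_{\rm Wald}|\mathsf{D}_{\rm Wald}),\nonumber
\end{IEEEeqnarray}
so that~(\ref{MutInfIndepTermTime_Disc}) is equivalent to the vanishing of the conditional mutual information $I(\mathsf{H};\mathsf{T}_{\rm Wald}|\mathsf{D}_{\rm Wald})$ asserted in~(\ref{testwald}). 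It therefore suffices to prove the latter.

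Next I would expand $I(\mathsf{H};\mathsf{T}_{\rm Wald}|\mathsf{D}_{\rm Wald})$ as the discrete expectation
\begin{IEEEeqnarray}{rCL}
I(\mathsf{H};\mathsf{T}_{\rm Wald}|\mathsf{D}_{\rm Wald})&=&\mathrm{E}\left[\log_2\frac{P(\mathsf{T}_{\rm Wald}|\mathsf{H},\mathsf{D}_{\rm Wald})}{P(\mathsf{T}_{\rm Wald}|\mathsf{D}_{\rm Wald})}\right],\nonumber
\end{IEEEeqnarray}
where the marginal in the denominator is obtained from the law of total probability,
\begin{IEEEeqnarray}{rCL}
P(\mathsf{T}_{\rm Wald}=k|\mathsf{D}_{\rm Wald}=d)&=&\sum_{h\in\{1,2\}}P(\mathsf{T}_{\rm Wald}=k|\mathsf{H}=h,\mathsf{D}_{\rm Wald}=d)\,P(\mathsf{H}=h|\mathsf{D}_{\rm Wald}=d).\nonumber
\end{IEEEeqnarray}
By Theorem~\ref{Theorem3Disc}, the conditional mass function $P(\mathsf{T}_{\rm Wald}=k|\mathsf{H}=h,\mathsf{D}_{\rm Wald}=d)$ does not depend on $h$; in particular it equals $P(\mathsf{T}_{\rm Wald}=k|\mathsf{H}=1,\mathsf{D}_{\rm Wald}=d)$ for both $h=1$ and $h=2$. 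Pulling this common factor out of the sum and using $\sum_{h}P(\mathsf{H}=h|\mathsf{D}_{\rm Wald}=d)=1$ collapses the denominator to $P(\mathsf{T}_{\rm Wald}=k|\mathsf{H}=1,\mathsf{D}_{\rm Wald}=d)$. The same theorem makes the numerator equal to this value as well, so the ratio inside the logarithm is identically one, the logarithm vanishes, and hence $I(\mathsf{H};\mathsf{T}_{\rm Wald}|\mathsf{D}_{\rm Wald})=0$. This is precisely the cancellation performed in the chain~(\ref{MutInfOptCont_1})--(\ref{MutInfOptCont_2}) of Corollary~\ref{Corollary3}, now carried out with sums rather than integrals.

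I do not expect a genuine obstacle in this argument, since the whole difficulty of the discrete-time case has already been absorbed into Theorem~\ref{Theorem3Disc} through the overshoot condition~(\ref{ConditionDiscreteTime}). The only points requiring mild care are bookkeeping ones: ensuring that the marginalization and the factoring of the common conditional mass function are valid on every atom $(h,d,k)$ carrying positive probability, and that atoms of zero probability contribute nothing to the expectation. Because Theorem~\ref{Theorem3Disc} already supplies the key conditional independence of $\mathsf{T}_{\rm Wald}$ and $\mathsf{H}$ given $\mathsf{D}_{\rm Wald}$, the remaining manipulations are formally identical to the continuous-time proof and present no additional analytical difficulty.
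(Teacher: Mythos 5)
Your proposal is correct and coincides with the paper's own proof: the paper disposes of Corollary~\ref{Corollary4_+} by stating that it follows along the same lines as Corollary~\ref{Corollary3} but based on Theorem~\ref{Theorem3Disc}, which is exactly the chain-rule decomposition plus the collapse of the conditional mass functions that you carried out with sums in place of integrals. There is no gap; your added remark about zero-probability atoms is a harmless refinement the paper leaves implicit.
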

The proof follows along the same line as the proof of Corollary~\ref{Corollary3}, but this time based on Theorem~\ref{Theorem3Disc}.

Analogously, Theorem~\ref{Theorem2} carries over to the discrete-time case.

\begin{theorem}\label{Theorem4Disc}
Under the same conditions as in Theorem~\ref{Theorem3Disc}, with the additional assumption that $p_{\mathsf{X}_1^{k+1}}\left(\cdot|\mathsf{H}=2\right)=p_{\mathsf{X}^{k+1}_{1}\circ\Theta}\left(\cdot|\mathsf{H}=1\right)$ for $k\in\mathbb{Z}_+$, where $\Theta$ is a measurable involution, and the  additional assumption that the maximal allowed error probabilities fulfill $\alpha_1=\alpha_2$, the following holds 
\begin{IEEEeqnarray}{rCL}
P(\mathsf{T}_{\rm Wald}=k|\mathsf{H}=1,\mathsf{D}_{\rm Wald}=1)&=&P(\mathsf{T}_{\rm Wald}=k|\mathsf{H}=1,\mathsf{D}_{\rm Wald}=2)\label{CondInvolution_Dec_tilde_discrete3}\\
P(\mathsf{T}_{\rm Wald}=k|\mathsf{H}=2,\mathsf{D}_{\rm Wald}=1)&=&P(\mathsf{T}_{\rm Wald}=k|\mathsf{H}=2,\mathsf{D}_{\rm Wald}=2)\label{CondInvolution_Dec_tilde_discrete4}
\end{IEEEeqnarray}
for all $k\in\mathbb{Z}_+$. Furthermore, it holds that 
\begin{IEEEeqnarray}{rCL}
P(\mathsf{T}_{\rm Wald}=k|\mathsf{D}_{\rm Wald}=1) = P(\mathsf{T}_{\rm Wald}=k|\mathsf{D}_{\rm Wald}=2) \quad \textrm{for all }k\in\mathbb{Z}_+. 
\end{IEEEeqnarray}
\end{theorem}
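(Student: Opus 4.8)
The plan is to follow the proof of Theorem~\ref{Theorem2} almost verbatim, substituting the discrete-time stopping rule and extracting all the symmetry from the involution assumption, and then to combine the result with Theorem~\ref{Theorem3Disc}. The involution assumption supplies two ingredients. First, writing $e^{\mathsf{S}_k}=p_{\mathsf{X}_1^{k}}(\cdot|\mathsf{H}=1)/p_{\mathsf{X}_1^{k}}(\cdot|\mathsf{H}=2)$ and inserting the assumed identity $p_{\mathsf{X}_1^{k}}(\cdot|\mathsf{H}=2)=p_{\mathsf{X}_1^{k}\circ\Theta}(\cdot|\mathsf{H}=1)$ together with $\Theta\circ\Theta=\mathrm{id}$, I obtain the pathwise antisymmetry $\mathsf{S}_k\circ\Theta=-\mathsf{S}_k$ for every $k$. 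Second, the same density identity is the discrete counterpart of the relation $\mathbb{P}_2=\mathbb{P}_1\circ\Theta$ used in Theorem~\ref{Theorem2}, so that $\mathbb{P}_2(\Phi)=\mathbb{P}_1(\Theta\Phi)$ for every event $\Phi$ (the involution being its own inverse).

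Next I would use $\alpha_1=\alpha_2$ to fix symmetric thresholds. Because $L_1$ and $L_2$ are chosen so that the two error probabilities equal $\alpha_1$ and $\alpha_2$, and because the antisymmetry together with $\mathbb{P}_2=\mathbb{P}_1\circ\Theta$ forces $P(\mathsf{D}_{\rm Wald}=2|\mathsf{H}=1)=P(\mathsf{D}_{\rm Wald}=1|\mathsf{H}=2)$ whenever $L_2=-L_1$, the symmetric choice $L_2=-L_1$ meets the constraint $\alpha_1=\alpha_2$ and, by monotonicity of the error probabilities in the thresholds, is the correct one. With $L_2=-L_1$ the stopping rule becomes $\Theta$-invariant: since $\mathsf{S}_k(\Theta\omega)=-\mathsf{S}_k(\omega)$, the reflected trajectory leaves the symmetric interval $(L_2,L_1)=(-L_1,L_1)$ at exactly the same index, so $\mathsf{T}_{\rm Wald}(\Theta\omega)=\mathsf{T}_{\rm Wald}(\omega)$, while the exited boundary is swapped, so $\mathsf{D}_{\rm Wald}(\Theta\omega)=3-\mathsf{D}_{\rm Wald}(\omega)$. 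Hence $\Theta$ is a bijection of $\{\mathsf{T}_{\rm Wald}=k,\mathsf{D}_{\rm Wald}=2\}$ onto $\{\mathsf{T}_{\rm Wald}=k,\mathsf{D}_{\rm Wald}=1\}$.

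Applying $\mathbb{P}_2(\Phi)=\mathbb{P}_1(\Theta\Phi)$ to these two events yields the cross relations
\begin{IEEEeqnarray}{rCL}
P(\mathsf{T}_{\rm Wald}=k,\mathsf{D}_{\rm Wald}=1|\mathsf{H}=2)&=&P(\mathsf{T}_{\rm Wald}=k,\mathsf{D}_{\rm Wald}=2|\mathsf{H}=1),\nonumber\\
P(\mathsf{T}_{\rm Wald}=k,\mathsf{D}_{\rm Wald}=2|\mathsf{H}=2)&=&P(\mathsf{T}_{\rm Wald}=k,\mathsf{D}_{\rm Wald}=1|\mathsf{H}=1).\nonumber
\end{IEEEeqnarray}
Dividing the first by the equal error probabilities $P(\mathsf{D}_{\rm Wald}=1|\mathsf{H}=2)=\alpha_1=\alpha_2=P(\mathsf{D}_{\rm Wald}=2|\mathsf{H}=1)$, and the second by the equal probabilities $P(\mathsf{D}_{\rm Wald}=2|\mathsf{H}=2)=1-\alpha_1=1-\alpha_2=P(\mathsf{D}_{\rm Wald}=1|\mathsf{H}=1)$, converts them into identities between conditional decision-time distributions, namely $P(\mathsf{T}_{\rm Wald}=k|\mathsf{H}=2,\mathsf{D}_{\rm Wald}=1)=P(\mathsf{T}_{\rm Wald}=k|\mathsf{H}=1,\mathsf{D}_{\rm Wald}=2)$ and $P(\mathsf{T}_{\rm Wald}=k|\mathsf{H}=2,\mathsf{D}_{\rm Wald}=2)=P(\mathsf{T}_{\rm Wald}=k|\mathsf{H}=1,\mathsf{D}_{\rm Wald}=1)$. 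Combining each with the hypothesis-independence already supplied by Theorem~\ref{Theorem3Disc}, i.e.\ (\ref{CondInvolution_Dec_tilde_discrete1}) and (\ref{CondInvolution_Dec_tilde_discrete2}), collapses all four conditionals $P(\mathsf{T}_{\rm Wald}=k|\mathsf{H}=i,\mathsf{D}_{\rm Wald}=j)$ to a single common function of $k$, which is exactly (\ref{CondInvolution_Dec_tilde_discrete3}) and (\ref{CondInvolution_Dec_tilde_discrete4}). The last claim then follows by writing $P(\mathsf{T}_{\rm Wald}=k|\mathsf{D}_{\rm Wald}=j)=\sum_{i}P(\mathsf{T}_{\rm Wald}=k|\mathsf{H}=i,\mathsf{D}_{\rm Wald}=j)P(\mathsf{H}=i|\mathsf{D}_{\rm Wald}=j)$ and using that the summand no longer depends on $i$ or $j$.

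I expect the genuinely delicate points to be the ones specific to discrete time. The invocation of Theorem~\ref{Theorem3Disc} carries along the overshoot condition (\ref{ConditionDiscreteTime}); the involution step itself is purely pathwise and exact, so it introduces no further approximation. The care is needed in two places: arguing that the error constraints can be satisfied exactly by a symmetric threshold pair despite the discrete overshoot — which relies on the observations possessing densities, so that the error probabilities vary continuously and monotonically with $L_1,L_2$ — and checking that the density identity $p_{\mathsf{X}_1^{k}}(\cdot|\mathsf{H}=2)=p_{\mathsf{X}_1^{k}\circ\Theta}(\cdot|\mathsf{H}=1)$ indeed produces $\mathbb{P}_2=\mathbb{P}_1\circ\Theta$ on each $\mathcal{F}_k$, i.e.\ that the change of variables $y=\Theta x$ contributes no spurious Jacobian (which holds for the volume-preserving involutions intended here). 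Once these are settled, the algebra of combining the cross relations with Theorem~\ref{Theorem3Disc} is routine.
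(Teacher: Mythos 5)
Your proof is correct, but its architecture differs from the paper's. The paper proves Theorem~\ref{Theorem4Disc} by repeating the proof of Theorem~\ref{Theorem2} in discrete time: the chain of equalities again passes through Doob's optional sampling theorem applied to the $\mathbb{P}_1$-martingale $e^{-\mathsf{S}_k}$, and since $\mathsf{S}_{\mathsf{T}_{\rm Wald}}$ overshoots the boundary this produces an expected-exponential-overshoot factor (the analogue of the steps (\ref{ProofFluc1_2_Disc})--(\ref{Eq23_Disc})) whose time-independence, i.e.\ condition (\ref{ConditionDiscreteTime}), must be invoked a second time \emph{inside} the symmetry argument. You never re-run that martingale step. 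Instead you observe that with symmetric thresholds the stopping rule is exactly $\Theta$-equivariant ($\mathsf{T}_{\rm Wald}\circ\Theta=\mathsf{T}_{\rm Wald}$, $\mathsf{D}_{\rm Wald}\circ\Theta=3-\mathsf{D}_{\rm Wald}$), so $\mathbb{P}_2=\mathbb{P}_1\circ\Theta$ alone yields the cross-hypothesis identities $P(\mathsf{T}_{\rm Wald}=k,\mathsf{D}_{\rm Wald}=1|\mathsf{H}=2)=P(\mathsf{T}_{\rm Wald}=k,\mathsf{D}_{\rm Wald}=2|\mathsf{H}=1)$ and its mirror image \emph{exactly}, with no overshoot correction; the overshoot condition then enters only once, through the citation of Theorem~\ref{Theorem3Disc}, after which the four conditional distributions collapse to a single one and the marginal identity follows by mixing over the posterior of $\mathsf{H}$. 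What each approach buys: your decomposition makes transparent that the involution ingredient is exact even in discrete time and that the sole source of approximation in Theorem~\ref{Theorem4Disc} is the one already required for Theorem~\ref{Theorem3Disc}; the paper's route obtains (\ref{CondInvolution_Dec_tilde_discrete3})--(\ref{CondInvolution_Dec_tilde_discrete4}) without invoking Theorem~\ref{Theorem3Disc} at all (it is needed only for the final marginal claim), at the cost of redoing the Doob-with-overshoot estimate.

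Two small remarks. Your worry about a Jacobian is unfounded: the assumption $p_{\mathsf{X}_1^{k+1}}(\cdot|\mathsf{H}=2)=p_{\mathsf{X}_1^{k+1}\circ\Theta}(\cdot|\mathsf{H}=1)$ says that the law of $\mathsf{X}_1^{k+1}$ under $\mathbb{P}_2$ equals the law of $\Theta(\mathsf{X}_1^{k+1})$ under $\mathbb{P}_1$, which is precisely the statement $\mathbb{P}_2=\mathbb{P}_1\circ\Theta$ on $\mathcal{F}_{k+1}$, so no volume preservation is needed; and even in the density picture the Jacobian factors of an involution cancel in the likelihood ratio, since $D\Theta(\Theta x)\,D\Theta(x)=I$. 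Second, your continuity/monotonicity argument that $\alpha_1=\alpha_2$ can be met by a symmetric threshold pair $L_2=-L_1$ addresses a point that the paper (in both Theorem~\ref{Theorem2} and Section~\ref{SectionTests_Disc}) simply asserts.
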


Theorem~\ref{Theorem4Disc} can be proved by carrying over the proof of Theorem~\ref{Theorem2} to the discrete-time case and additionally using a modification of the application of Doob's optional sampling theorem similar to (\ref{ProofFluc1_2_Disc}) to (\ref{Eq23_Disc}) leading to the additional assumption that (\ref{ConditionDiscreteTime}) is fulfilled.

A special case of Theorem~\ref{Theorem4Disc} was shown in \cite{dorpinghaus2017information} for the case of i.i.d.\ observation processes and low error probabilities $\alpha_1=\alpha_2$.

Using Theorem~\ref{Theorem4Disc} also Corollary~\ref{Corollary5} and Corollary~\ref{Corollary4} carry over to the discrete-time case.

\begin{corollary}\label{Corollary5_discrete}
Under the same conditions as in Theorem~\ref{Theorem4Disc}, 
the following equality holds
\begin{IEEEeqnarray}{rCL}
I(\mathsf{D}_{\rm Wald};\mathsf{T}_{\rm Wald}) &=& 0  .
\end{IEEEeqnarray}
\end{corollary}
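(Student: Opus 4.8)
The plan is to mirror the proof of Corollary~\ref{Corollary5}, replacing the probability densities and integrals of the continuous-time setting by probability mass functions and sums, since in the discrete-time setting $\mathsf{D}_{\rm Wald}\in\{1,2\}$ and $\mathsf{T}_{\rm Wald}$ takes values in $\mathbb{Z}_+$. All of the substantive work has already been carried out in Theorem~\ref{Theorem4Disc}; the corollary is an immediate consequence of the decision-time symmetry $P(\mathsf{T}_{\rm Wald}=k|\mathsf{D}_{\rm Wald}=1)=P(\mathsf{T}_{\rm Wald}=k|\mathsf{D}_{\rm Wald}=2)$ stated there, which plays exactly the role that (\ref{eq:31}) plays in the continuous-time argument.

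First I would write the mutual information from its definition,
\begin{IEEEeqnarray}{rCL}
I(\mathsf{D}_{\rm Wald};\mathsf{T}_{\rm Wald}) &=& \mathrm{E}\left[\log \frac{P(\mathsf{T}_{\rm Wald}|\mathsf{D}_{\rm Wald})}{P(\mathsf{T}_{\rm Wald})}\right].
\end{IEEEeqnarray}
The key next step is to expand the unconditional law of $\mathsf{T}_{\rm Wald}$ in the denominator by the law of total probability over the two decision outcomes,
\begin{IEEEeqnarray}{rCL}
P(\mathsf{T}_{\rm Wald}=k) &=& \sum_{d=1}^{2}P(\mathsf{T}_{\rm Wald}=k|\mathsf{D}_{\rm Wald}=d)\,P(\mathsf{D}_{\rm Wald}=d),
\end{IEEEeqnarray}
and then to invoke Theorem~\ref{Theorem4Disc}. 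Substituting the equality $P(\mathsf{T}_{\rm Wald}=k|\mathsf{D}_{\rm Wald}=1)=P(\mathsf{T}_{\rm Wald}=k|\mathsf{D}_{\rm Wald}=2)$ and using $P(\mathsf{D}_{\rm Wald}=1)+P(\mathsf{D}_{\rm Wald}=2)=1$ collapses the denominator to $P(\mathsf{T}_{\rm Wald}=k|\mathsf{D}_{\rm Wald}=1)$, exactly as in the passage from the second to the third line of the proof of Corollary~\ref{Corollary5}.

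Finally, since the same symmetry also makes the numerator $P(\mathsf{T}_{\rm Wald}=k|\mathsf{D}_{\rm Wald}=d)$ equal to $P(\mathsf{T}_{\rm Wald}=k|\mathsf{D}_{\rm Wald}=1)$ for both $d\in\{1,2\}$, the argument of the logarithm is identically one, every summand vanishes, and hence $I(\mathsf{D}_{\rm Wald};\mathsf{T}_{\rm Wald})=0$. I do not expect any genuine obstacle in this step: the only point worth checking is that the discrete decision-time symmetry furnished by Theorem~\ref{Theorem4Disc} indeed renders $\mathsf{T}_{\rm Wald}$ and $\mathsf{D}_{\rm Wald}$ statistically independent, so that no overshoot corrections re-enter the calculation. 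Those corrections have already been fully absorbed into the hypothesis (\ref{ConditionDiscreteTime}) under which Theorem~\ref{Theorem4Disc} is established, so the remainder of the argument is the same routine collapse of the logarithm as in the continuous-time case.
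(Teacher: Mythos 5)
Your proposal is correct and coincides with the paper's own argument: the paper proves Corollary~\ref{Corollary5_discrete} precisely by transplanting the proof of Corollary~\ref{Corollary5} to the discrete setting, expanding $P(\mathsf{T}_{\rm Wald}=k)$ by total probability over $\mathsf{D}_{\rm Wald}$ and collapsing both numerator and denominator via the symmetry $P(\mathsf{T}_{\rm Wald}=k|\mathsf{D}_{\rm Wald}=1)=P(\mathsf{T}_{\rm Wald}=k|\mathsf{D}_{\rm Wald}=2)$ of Theorem~\ref{Theorem4Disc}, exactly as (\ref{eq:31}) is used in the continuous case. Your closing remark is also accurate: the overshoot issue is entirely absorbed into hypothesis (\ref{ConditionDiscreteTime}) of Theorem~\ref{Theorem4Disc}, so nothing beyond the routine collapse of the logarithm is needed here.
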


\begin{corollary}\label{Corollary4_discrete}
Under the same conditions as in Theorem~\ref{Theorem4Disc}, and with the additional assumption that  $P(\mathsf{H}= 1) = P(\mathsf{H}=2)$,
the following equality holds
\begin{IEEEeqnarray}{rCL}
I(\mathsf{H};\mathsf{T}_{\rm Wald})&=&0.\label{MutInfIndepTermTime_Cont2_discrete}
\end{IEEEeqnarray}
\end{corollary}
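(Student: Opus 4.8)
The plan is to establish that the decision time $\mathsf{T}_{\rm Wald}$ is statistically independent of the hypothesis $\mathsf{H}$, since $I(\mathsf{H};\mathsf{T}_{\rm Wald})=0$ holds if and only if $\mathsf{T}_{\rm Wald}$ and $\mathsf{H}$ are independent. Because the prior is symmetric, $P(\mathsf{H}=1)=P(\mathsf{H}=2)=\tfrac12$, it is enough to prove the single identity $P(\mathsf{T}_{\rm Wald}=k|\mathsf{H}=1)=P(\mathsf{T}_{\rm Wald}=k|\mathsf{H}=2)$ for all $k\in\mathbb{Z}_+$: together with the law of total probability this immediately yields $P(\mathsf{T}_{\rm Wald}=k|\mathsf{H}=l)=P(\mathsf{T}_{\rm Wald}=k)$ for $l\in\{1,2\}$, hence independence and the claim.

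First I would use the two fluctuation relations (\ref{CondInvolution_Dec_tilde_discrete3}) and (\ref{CondInvolution_Dec_tilde_discrete4}) of Theorem~\ref{Theorem4Disc} to collapse the decision-conditioned laws. Since $P(\mathsf{T}_{\rm Wald}=k|\mathsf{H}=1,\mathsf{D}_{\rm Wald}=1)=P(\mathsf{T}_{\rm Wald}=k|\mathsf{H}=1,\mathsf{D}_{\rm Wald}=2)$, marginalizing over $\mathsf{D}_{\rm Wald}$ shows that this common value equals $P(\mathsf{T}_{\rm Wald}=k|\mathsf{H}=1)=:a(k)$, and likewise (\ref{CondInvolution_Dec_tilde_discrete4}) gives $P(\mathsf{T}_{\rm Wald}=k|\mathsf{H}=2,\mathsf{D}_{\rm Wald}=d)=P(\mathsf{T}_{\rm Wald}=k|\mathsf{H}=2)=:b(k)$ for both $d$. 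Next I would compute the posteriors $P(\mathsf{H}=l|\mathsf{D}_{\rm Wald}=d)$. Under the involution $\Theta$ with $\alpha_1=\alpha_2=\alpha$ the decision probabilities are symmetric, $P(\mathsf{D}_{\rm Wald}=2|\mathsf{H}=1)=P(\mathsf{D}_{\rm Wald}=1|\mathsf{H}=2)=\alpha$ and $P(\mathsf{D}_{\rm Wald}=1|\mathsf{H}=1)=P(\mathsf{D}_{\rm Wald}=2|\mathsf{H}=2)=1-\alpha$, so with equal priors Bayes' rule gives $P(\mathsf{D}_{\rm Wald}=1)=P(\mathsf{D}_{\rm Wald}=2)=\tfrac12$ together with $P(\mathsf{H}=1|\mathsf{D}_{\rm Wald}=1)=P(\mathsf{H}=2|\mathsf{D}_{\rm Wald}=2)=1-\alpha$ and $P(\mathsf{H}=2|\mathsf{D}_{\rm Wald}=1)=P(\mathsf{H}=1|\mathsf{D}_{\rm Wald}=2)=\alpha$.

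Then I would expand the decision-conditioned decision-time law as $P(\mathsf{T}_{\rm Wald}=k|\mathsf{D}_{\rm Wald}=d)=\sum_{l} P(\mathsf{T}_{\rm Wald}=k|\mathsf{H}=l,\mathsf{D}_{\rm Wald}=d)\,P(\mathsf{H}=l|\mathsf{D}_{\rm Wald}=d)$, which by the previous two steps becomes $P(\mathsf{T}_{\rm Wald}=k|\mathsf{D}_{\rm Wald}=1)=(1-\alpha)a(k)+\alpha\,b(k)$ and $P(\mathsf{T}_{\rm Wald}=k|\mathsf{D}_{\rm Wald}=2)=\alpha\,a(k)+(1-\alpha)b(k)$. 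Invoking the third relation of Theorem~\ref{Theorem4Disc}, namely $P(\mathsf{T}_{\rm Wald}=k|\mathsf{D}_{\rm Wald}=1)=P(\mathsf{T}_{\rm Wald}=k|\mathsf{D}_{\rm Wald}=2)$, and subtracting the two expressions yields $(1-2\alpha)\bigl(a(k)-b(k)\bigr)=0$; since $\alpha<\tfrac12$ the prefactor is nonzero, forcing $a(k)=b(k)$, i.e.\ $P(\mathsf{T}_{\rm Wald}=k|\mathsf{H}=1)=P(\mathsf{T}_{\rm Wald}=k|\mathsf{H}=2)$. By the reduction in the first paragraph this gives independence and hence $I(\mathsf{H};\mathsf{T}_{\rm Wald})=0$. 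The main obstacle I anticipate is the careful bookkeeping of the posteriors: one must justify that the involution together with $\alpha_1=\alpha_2$ really forces the symmetric decision probabilities $P(\mathsf{D}_{\rm Wald}=1|\mathsf{H}=1)=P(\mathsf{D}_{\rm Wald}=2|\mathsf{H}=2)$, and it is precisely the equal-prior assumption that makes the posteriors symmetric so that the controlled factor $1-2\alpha$ appears in the cancellation; without $P(\mathsf{H}=1)=P(\mathsf{H}=2)$ this final step would fail.
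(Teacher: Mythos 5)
Your proof is correct, but it follows a genuinely different route from the paper's. The paper proves this corollary by carrying over the argument of Appendix~\ref{App_Corollary4} (the continuous case, Corollary~\ref{Corollary4}) to discrete time: there one expands $I(\mathsf{H};\mathsf{T})$ directly, shows $P(\mathsf{D}=1)=P(\mathsf{D}=2)=1/2$ from the equal priors and the symmetric error probabilities (cf.\ (\ref{eq:27}) and (\ref{eq:56})), and then collapses the marginal law onto the doubly conditioned one, $P(\mathsf{T}_{\rm Wald}=k)=P(\mathsf{T}_{\rm Wald}=k\,|\,\mathsf{H},\mathsf{D}_{\rm Wald})$, by invoking \emph{both} Theorem~\ref{Theorem3Disc} (to remove the $\mathsf{H}$-dependence given $\mathsf{D}_{\rm Wald}$) and the first two relations of Theorem~\ref{Theorem4Disc} (to remove the $\mathsf{D}_{\rm Wald}$-dependence given $\mathsf{H}$); the ratio inside the logarithm is then identically one. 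You never touch Theorem~\ref{Theorem3Disc}: you use only the three identities displayed in Theorem~\ref{Theorem4Disc}, namely (\ref{CondInvolution_Dec_tilde_discrete3}) and (\ref{CondInvolution_Dec_tilde_discrete4}) to collapse the $\mathsf{D}_{\rm Wald}$-dependence, and the final marginal identity to obtain $(1-\alpha)a(k)+\alpha b(k)=\alpha a(k)+(1-\alpha)b(k)$, from which $a(k)=b(k)$ follows by inverting the mixing; this last step needs the strict inequality $\alpha<1/2$, which is guaranteed by the paper's standing assumption $\alpha_1,\alpha_2<0.5$. What each approach buys: yours is self-contained relative to the statement of Theorem~\ref{Theorem4Disc} and isolates exactly where each hypothesis enters (equal priors give the posterior weights, and the cancellation requires $\alpha\neq 1/2$), whereas the paper's route needs no invertibility step and establishes the stronger intermediate fact that $\mathsf{T}_{\rm Wald}$ is independent of the pair $(\mathsf{H},\mathsf{D}_{\rm Wald})$, at the cost of additionally calling on Theorem~\ref{Theorem3Disc}. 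The bookkeeping concern you flag at the end is resolved exactly as in the paper: the symmetric decision probabilities $P(\mathsf{D}_{\rm Wald}=2|\mathsf{H}=1)=P(\mathsf{D}_{\rm Wald}=1|\mathsf{H}=2)=\alpha$ hold because the thresholds are chosen so that the error constraints in (\ref{TestReq}) are met with equality (this is how (\ref{eq:56}) is obtained), and they can alternatively be derived directly from the involution symmetry together with $L_1=-L_2$.
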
 

The proofs of Corollary~\ref{Corollary5_discrete} and Corollary~\ref{Corollary4_discrete} follow along the lines of the proves of Corollary~\ref{Corollary5} and Corollary~\ref{Corollary4}.

\section{Tests for Optimality of Sequential Decision-Making}\label{SectionTests}
For the case of continuous observation processes Theorem~\ref{Theorem1} and Corollary~\ref{Corollary3} hold for binary sequential probability ratio tests which are optimal in the sense of Definition~\ref{DefinitionMinimumMeanTime}. In case of additional symmetry conditions, Theorem~\ref{Theorem2}, Corollary~\ref{Corollary5}, and Corollary~\ref{Corollary4} hold as well.   Under the reasonable assumption that the  joint statistics of $(\mathsf{H}, \mathsf{T}, \mathsf{D})$ have a unique solution over all tests fulfilling~(\ref{OptDecisionTime}),  these theorems and corollaries are necessary conditions for optimal sequential decision-making in the sense of Definition~\ref{DefinitionMinimumMeanTime}. Likewise, for discrete-time observation processes which fulfill the condition given by (\ref{ConditionDiscreteTime}) Theorem~\ref{Theorem3Disc} and Corollary~\ref{Corollary4_+} give necessary conditions for optimal sequential decision-making. In case additional symmetry conditions are fulfilled also Theorem~\ref{Theorem4Disc}, Corollary~\ref{Corollary5_discrete}, and Corollary~\ref{Corollary4_discrete} hold. Based on these theorems and corollaries we formulate tests to test optimality of sequential decision-making in black-box decision devices and present algorithms to measure the distance to optimality of the decision process in the black-box decision device.

\subsection{Continuous-Observation Processes}

\subsubsection{Testing Optimality and Measuring the Distance to Optimality in Case of Known Hypotheses}\label{sec:test1}
We sample $m$  independent realizations  $\left\{h_i, t_i, d_i\right\}_{i=1,\ldots, m}$  of the joint random variables $(\mathsf{H}, \mathsf{T}, \mathsf{D})$ given by  subsequent decisions, where $\mathsf{H}$ corresponds to the random variable describing the actual hypothesis and $\mathsf{T}$  and $\mathsf{D}$ are the outputs, decision time and decision variable, of the black-box decision device. In case the observation window of the experimentalist is not sufficiently large such that for certain samples the black box has not decided yet, the experimentalist can discard those samples. 

We first state a statistical test which can reject, with a certain statistical significance, the null hypothesis that the given black-box device is optimal in the sense of Definition~\ref{DefinitionOptInfUse} and, thus, also in the sense of Definition~\ref{DefinitionMinimumMeanTime}. We create from  the whole set of realizations four subsets of decision times $\mathcal{A}_{r,s} = \left\{t_{i},  i\in \left\{1,2,...,m\right\}: (h_i, d_i) = (r,s)\right\}$ with  $(r,s) \in \left\{(1,1), (1,2), (2,1), (2,2)\right\}$.    Under the null hypothesis, Theorem \ref{Theorem1} implies that the subsets $\mathcal{A}_{1,1}$ and $\mathcal{A}_{2,1}$ contain  independent realizations of decision times from  the same distribution and, analogously,   $\mathcal{A}_{1,2}$ and $\mathcal{A}_{2,2}$ contain  independent realizations of decision times from  the same distribution.  Whether two sets of independent realizations are sampled from the same continuous distribution can be tested with a certain significance using the 
 two-sample Kolmogorov-Smirnov test \cite[pp.~663-665]{degroot2012probability}.     Note that we do not require knowledge on the statistics $\mathbb{P}_1$ and $\mathbb{P}_2$ of the observation process $\mathsf{X}_t$, which makes our test for optimality of decision devices very useful for practical applications where in many situations such statistics are unknown.   Notice that because the observation process  $\mathsf{X}_t$  may take values in  a  high-dimensional space  it can be difficult to get a good estimate of its  statistics.

A quantity for the distance of the sequential decision-making process of the black-box decision device with respect to the optimal sequential decision process in the sense of Definition~\ref{DefinitionOptInfUse} is given by the empirical estimate $\hat{I}(\mathsf{H};\mathsf{T}|\mathsf{D})$  of the mutual information~$I(\mathsf{H};\mathsf{T}|\mathsf{D})$. The estimate $\hat{I}(\mathsf{H};\mathsf{T}|\mathsf{D})$ can be gained from empirical estimates of entropy and differential entropy, see \cite{beirlant1997nonparametric} and  \cite{jiao2015minimax}.   
Note that with the chain rule for mutual information it holds that 
\begin{IEEEeqnarray}{rCL}
I(\mathsf{H};\mathsf{T}|\mathsf{D}) = I(\mathsf{H};\mathsf{D},\mathsf{T}) - I(\mathsf{H};\mathsf{D}) .  \label{eq:chain}
\end{IEEEeqnarray}
The first term on the RHS of (\ref{eq:chain}) is the complete mutual information that  the output of the black-box decision device, $(\mathsf{D}, \mathsf{T})$, gives on the hypothesis $\mathsf{H}$.   The second term on the RHS of (\ref{eq:chain}) is the mutual information between the decision  $\mathsf{D}$ and the hypothesis $\mathsf{H}$,  which   in case of optimal sequential decision-making   equals the complete mutual information $I(\mathsf{H};\mathsf{D},\mathsf{T}) $.   Hence, with (\ref{eq:chain})  $I(\mathsf{H};\mathsf{T}|\mathsf{D}) $ measures the  information the black-box device discards in case of non-optimal decision-making.    Therefore, $\hat{I}(\mathsf{H};\mathsf{T}|\mathsf{D})$ provides a measure for how much the decision statistics of a certain black-box device diverge from the optimal solution, or less formally stated, how close to optimality a decision device behaves.

\subsubsection{Testing Optimality in Case of Unknown Hypotheses}\label{SectOptTestUnknowHyp} 
If the statistics of the observation process $\mathsf{X}_t$ fulfill the  
involution condition (\ref{InvolutionCond}) and if the constraints on the error probabilities $\alpha_1$ and $\alpha_2$ are equal implying symmetric thresholds $L_1 = -L_2 = L$, then based on Theorem~\ref{Theorem2} we formulate a test which is able to reject optimality in the sense of Definition~\ref{DefinitionOptInfUse} and, thus, also in the sense of Definition~\ref{DefinitionMinimumMeanTime}. Different to the test formulated in Section~\ref{sec:test1} we do not require knowledge of the actual realizations of the hypothesis $\mathsf{H}$, which in certain situations is important for practical application.   

We sample $m$ independent realizations $\left\{ t_i, d_i\right\}_{i=1,\ldots, m}$  of the joint random variables $( \mathsf{T}, \mathsf{D})$,
where  $\mathsf{T}$  and $\mathsf{D}$ are the outputs, decision variable and decision time, of the black-box decision device. As in the case of known hypotheses samples for which the black-box decision device has not terminated yet can be discarded. Under the above assumptions, we  state a statistical test which can reject, with a certain statistical significance, the null hypothesis that the given black-box device is optimal.    We create from  the whole set of realizations   two subsets of decision times $\mathcal{A}_{r} = \left\{t_{i},  i\in \left\{1,2,...,m\right\}: d_i = r\right\}$ with  $r\in \left\{1,2\right\}$.    Under the current null hypothesis, Theorem~\ref{Theorem2} implies that the subsets $\mathcal{A}_{1}$ and $\mathcal{A}_{2}$ contain  independent realizations of decision times from  the same distribution.  We can again use a 
 two-sample Kolmogorov-Smirnov test  \cite[pp.~663-665]{degroot2012probability} to reject the null hypothesis with a certain significance.      Corollaries \ref{Corollary5} and \ref{Corollary4} provide alternative means to test optimality of sequential decision-making.   

Quantifying the degree of optimality using the mutual informations $I(\mathsf{D};\mathsf{T})$ or 
$I(\mathsf{H};\mathsf{T})$ provides in general no clear interpretation.   Hence, in order to quantify the divergence  of the black-box device from optimal sequential decision-making we can use $I(\mathsf{H};\mathsf{T}|\mathsf{D})$ based on Corollary~\ref{Corollary3}.

\subsection{Discrete-Time Observation Processes}\label{SectionTests_Disc}
Analogously to the case of continuous observation processes we formulate tests for optimality of sequential decision-making in discrete time and we also present algorithms to measure the distance to optimality of black-box decision devices. We use Theorems~\ref{Theorem3Disc}, Theorem~\ref{Theorem4Disc} and Corollary~\ref{Corollary4_+}. 

\subsubsection{Testing Optimality and Measuring the Distance to 
Optimality in Case of Known Hypotheses}\label{sec:test1_disc}
We sample $m$  independent realizations  $\left\{h_i, t_i, d_i\right\}_{i=1,\ldots, m}$  of the joint random variables $(\mathsf{H}, \mathsf{T}, \mathsf{D})$ given by  subsequent decisions, where $\mathsf{H}$ corresponds to the random variable describing the actual hypothesis and $\mathsf{T}$  and $\mathsf{D}$ are the outputs, decision time and the decision variable, of the black-box decision device. As before we discard samples for which the black-box decision device has not decided yet.

The algorithm to test optimality of a black-box decision device is analogous to the case of continuous observation processes. We construct from the whole set of realizations $\left\{h_i, t_i, d_i\right\}_{i=1,\ldots, m}$ four sets of decision times $\mathcal{A}_{r,s} = \left\{t_{i},  i\in \left\{1,2,...,m\right\}: (h_i, d_i) = (r,s)\right\}$ with\linebreak $(r,s) \in \left\{(1,1), (1,2), (2,1), (2,2)\right\}$. Theorem~\ref{Theorem3Disc} implies that the subsets $\mathcal{A}_{1,1}$ and $\mathcal{A}_{2,1}$ contain  independent realizations of decision times from  the same distribution and, analogously,   $\mathcal{A}_{1,2}$ and $\mathcal{A}_{2,2}$ contain  independent realizations of decision times from the same distribution. Whether two sets of independent realizations are sampled from the same discrete distribution can be tested with a certain significance using the two-sample $\chi^{2}$-test \cite[p.~253, Problem~3]{vanderVaart1998}.
 
A quantity for the distance to optimality of a black-box decision device with respect to the optimal sequential decision process in the sense of Definition~\ref{DefinitionOptInfUse} is in discrete time given by the empirical estimate $\hat{I}(\mathsf{H};\mathsf{T}|\mathsf{D})$  of the mutual information~$I(\mathsf{H};\mathsf{T}|\mathsf{D})$, see , cf.\ Corollary~\ref{Corollary4_+}.

\subsubsection{Testing Optimality in Case of Unknown Hypotheses}\label{SectOptTestUnknowHyp} 
As in the continuous case testing optimality of a black-box decision device can be done even in case of unknown hypothesis in case certain additional conditions are fulfilled. Namely, the statistics of the observation process $\mathsf{X}_t$ have to fulfill the involution condition (\ref{InvolutionCond}) and the constraints on the error probabilities $\alpha_1$ and $\alpha_2$ have to be equal, implying symmetric thresholds $L_1 = -L_2 = L$. Then based on Theorem~\ref{Theorem4Disc} we can formulate the following test. We sample $m$ independent realizations $\left\{ t_i, d_i\right\}_{i=1,\ldots, m}$  of the joint random variables $( \mathsf{T}, \mathsf{D})$, where  $\mathsf{T}$  and $\mathsf{D}$ are the decision time and decision variable of the black-box decision device. We create from  the whole set of realizations   two subsets of decision times $\mathcal{A}_{k} = \left\{t_{i},  i\in \left\{1,2,...,m\right\}: d_i = k\right\}$ with  $k\in \left\{1,2\right\}$. Theorem \ref{Theorem4Disc} implies that the subsets $\mathcal{A}_{1}$ and $\mathcal{A}_{2}$ contain  independent realizations of decision times from  the same distribution in case the black-box decision device is optimal.  We can again use a two-sample $\chi^2$-test \cite[p.~253, Problem~3]{vanderVaart1998} to test whether the two subsets $\mathcal{A}_1$ and $\mathcal{A}_2$ are sampled from the same distribution.  Corollary~\ref{Corollary5_discrete} and Corollary~\ref {Corollary4_discrete} provide alternatives to test optimality of sequential decision-making.

\section{Testing Optimality in Numerical Experiments}\label{SectNumericalExperminents} 
In this section we apply our algorithms to test optimality of binary sequential decision-making of black-box devices and to measure the degree of divergence from optimality.    We consider a class of  decision devices which for certain parameter values are optimal, and we verify whether our algorithms are able to detect the parameters for which the decision devices are optimal.    In this section we distinguish again continuous and discrete  observation processes.   However, here we will start with discrete-time processes which allow for simpler numerical study.  

To distinguish theoretical quantities from empirical estimates we denote by $\hat{P}$ and $\hat{I}$ the empirical estimates of the probabilities $P$ and mutual informations $I$.    Furthermore, we write $\hat{\mathrm{E} }\left[\cdot  \right]$ for the  empirical estimate of the expectation $\mathrm{E} \left[\cdot\right]$.   

\subsection{Discrete-time observation processes}
\subsubsection{Testing optimality in case of known hypotheses} \label{sec:5a}
We consider an observation sequence  $\mathsf{X} = (\mathsf{X}_1, \mathsf{X}_2, \ldots, \mathsf{X}_k)$   where the $\mathsf{X}_n$  (where $n\in[1,k]$) are i.i.d.~random variables drawn from one of two possible probability distributions corresponding to the two hypotheses $\mathsf{H} \in\left\{ 1,2\right\}$, i.e., 
\begin{IEEEeqnarray}{rCL}
p_{\mathsf{X}^k_1}\left(x^k_1|\mathsf{H}\right) = \prod^k_{n=1}p_{\mathsf{X}}\left(x_n|\mathsf{H}\right). \label{eq:obsModel} 
\end{IEEEeqnarray}
   In our example the densities $p_{\mathsf{X}}(\cdot|\mathsf{H})$ are Gaussian with mean $\mu_i$ and variance $\sigma^2_i$ with $i\in\left\{1,2\right\}$ corresponding to the two hypotheses $\mathsf{H}=i$.  In the special case where $\mu_1 = -\mu_2$ and $\sigma_1 = \sigma_2$ the involution property  (\ref{InvolutionCond}) holds.  

We consider a class of decision models representing the black-box decision devices.   
These decision models use the Wald sequential probability ratio test based on a model of the external world which  may be incorrect.   Each decision model computes the cumulative log-likelihood ratio of two Gaussian distributions with mean $\tilde{\mu}_i$ and variance  $\tilde{\sigma}^2_i$ (with $i\in\left\{1,2\right\}$), i.e.,
\begin{IEEEeqnarray}{rCL}
\tilde{\mathsf{S}}_k &=& \sum^k_{n=1} \log \left(\frac{p_{\mathsf{X}}\left(\mathsf{X}_n|\mathsf{H}=1\right)}{p_{\mathsf{X}}\left(\mathsf{X}_n|\mathsf{H}=2\right)}\right)  \nonumber
\\ 
&=& k \log \frac{\tilde{\sigma}_2}{\tilde{\sigma}_1} +  \sum^k_{n=1}\left(\frac{(\mathsf{X}_n-\tilde{\mu}_2)^2}{2\tilde{\sigma}^2_2} -\frac{(\mathsf{X}_n-\tilde{\mu}_1)^2}{2\tilde{\sigma}^2_1}  \right). \label{eq:modelS}
\end{IEEEeqnarray}
The decision time  of the  model is 
\begin{IEEEeqnarray}{rCL}
\mathsf{T}&=&\min\{k\in\mathbb{N} : \tilde{\mathsf{S}}_k\notin (L_2,L_1)\}\label{Def_T_Waldxx}  
\end{IEEEeqnarray}
and the decision variable is given by
\begin{IEEEeqnarray}{rCL}
\mathsf{D}&=&\left\{\begin{array}{ll}
1 & \textrm{if } \tilde{\mathsf{S}}_{\mathsf{T}}\ge L_1\\
2 & \textrm{if } \tilde{\mathsf{S}}_{\mathsf{T}}\le L_2
\end{array}\right. \label{Def_D_Waldxx}
\end{IEEEeqnarray}
with  the two thresholds $L_1>0$ and $L_2<0$.  
\begin{figure}
\subfigure[Optimal decision-making: black-box decision device uses the correct model of the external world]
{\includegraphics[width=0.48\columnwidth]{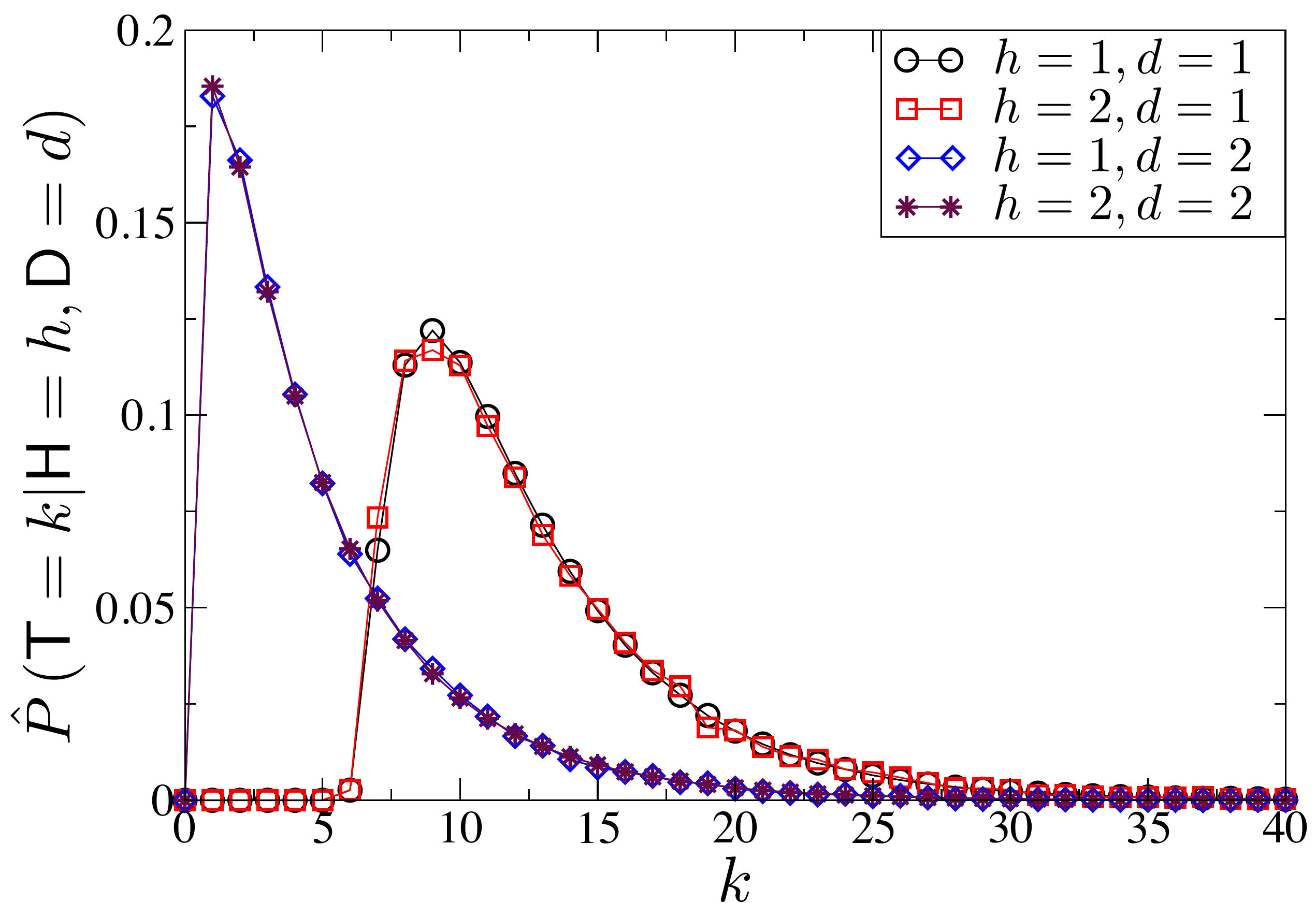}\label{fig3a}} {\hspace{+0.2cm}}
\subfigure[Sub-optimal decision-making: black-box decision device uses the wrong model of the external world]
{\includegraphics[width=0.48\columnwidth]{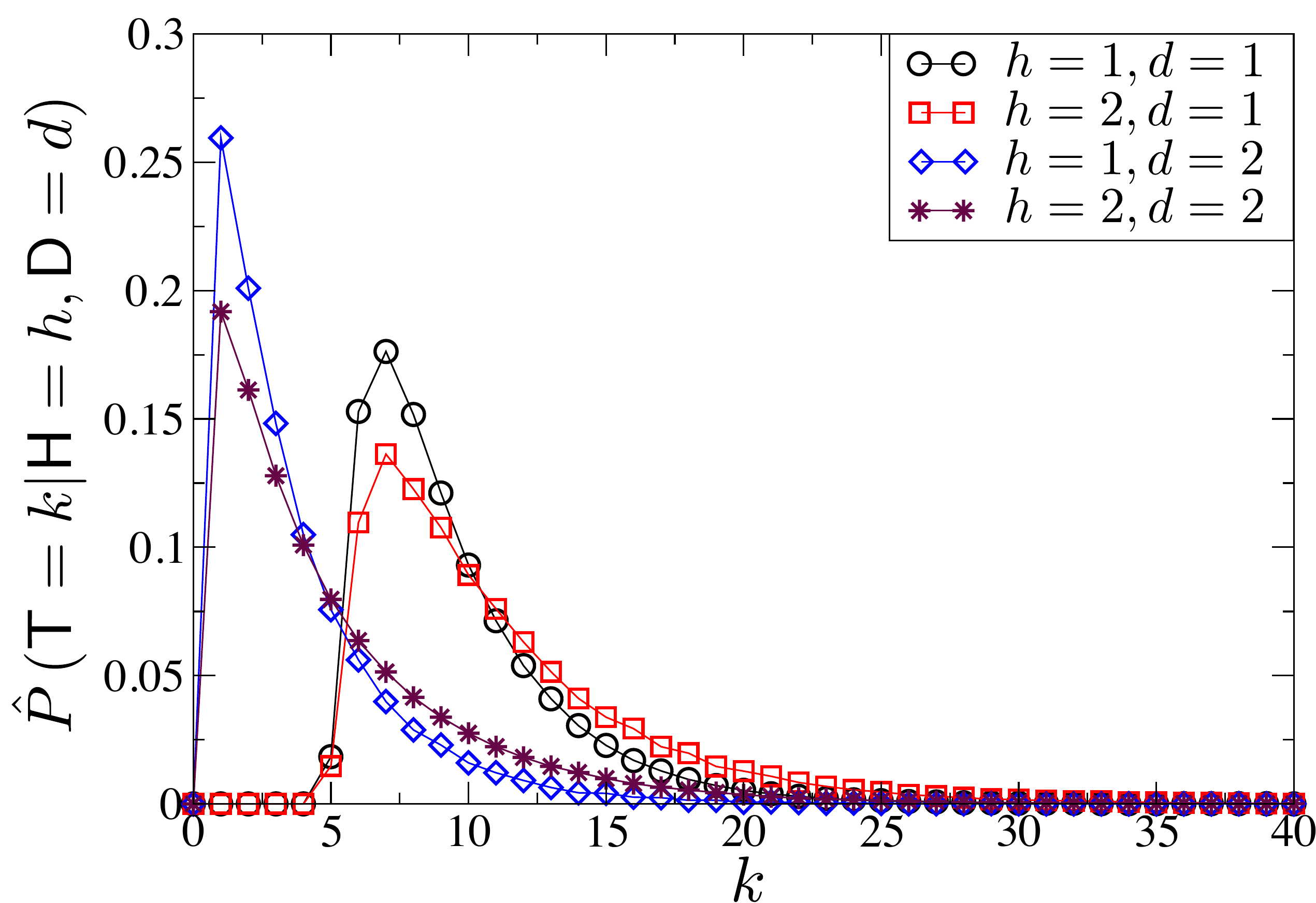}\label{fig3b}}
\caption{Illustration of Theorem~\ref{Theorem3Disc} using the observation model given by (\ref{eq:obsModel}) and the decision model given by (\ref{eq:modelS}) to (\ref{Def_D_Waldxx}). The distributions of the observation process corresponding to the two hypotheses  are Gaussian with parameters  $(\mu_1, \sigma_1) = (0,5)$ and $(\mu_2, \sigma_2) = (1,10)$, respectively.    The decision model has threshold values  $L_1 = 4$ and $L_2 = -2$ and parameters:  (a)  $(\tilde{\mu}_1, \tilde{\sigma}_1) = (\mu_1, \sigma_1) $ and $(\tilde{\mu}_2, \tilde{\sigma}_2) = (\mu_2, \sigma_2)$, (b)   $(\tilde{\mu}_1, \tilde{\sigma}_1) = (\mu_1, \sigma_1) $ and $(\tilde{\mu}_2, \tilde{\sigma}_2) = (5, \sigma_2)$.   The empirical error probabilities are in (a) $\alpha_1 = 0.041$ and $\alpha_2 = 0.0133$ and (b) $\alpha_1 = 0.0335$ and $\alpha_2 = 0.0506$.   Distributions are estimated using $1e+6$ simulation runs.  }\label{fig3}
\end{figure}   
If $\tilde{\mu}_i = \mu_i$ and $\tilde{\sigma}_i = \sigma_i$ then the black-box device 
uses the correct model of the external world and makes optimal sequential decisions in the sense of minimizing the decision time   (see Definition~\ref{DefinitionMinimumMeanTime}) since $\mathsf{T} = \mathsf{T}_{\rm Wald}$ and $\mathsf{D} = \mathsf{D}_{\rm Wald}$.      Corollary~\ref{Corollary4_+} implies that for these parameter values the black-box device makes also optimal sequential decisions in the sense of information usage  (see Definition~\ref{DefinitionOptInfUse}).     If additionally $L_1=-L_2$, then $\alpha_1 = \alpha_2 = \alpha$. 

\begin{figure}
\subfigure[Two sample $\chi^2$-test of  (\ref{CondInvolution_Dec_tilde_discrete1}) for $\mathsf{D}=1$]
{\includegraphics[width=0.48\columnwidth]{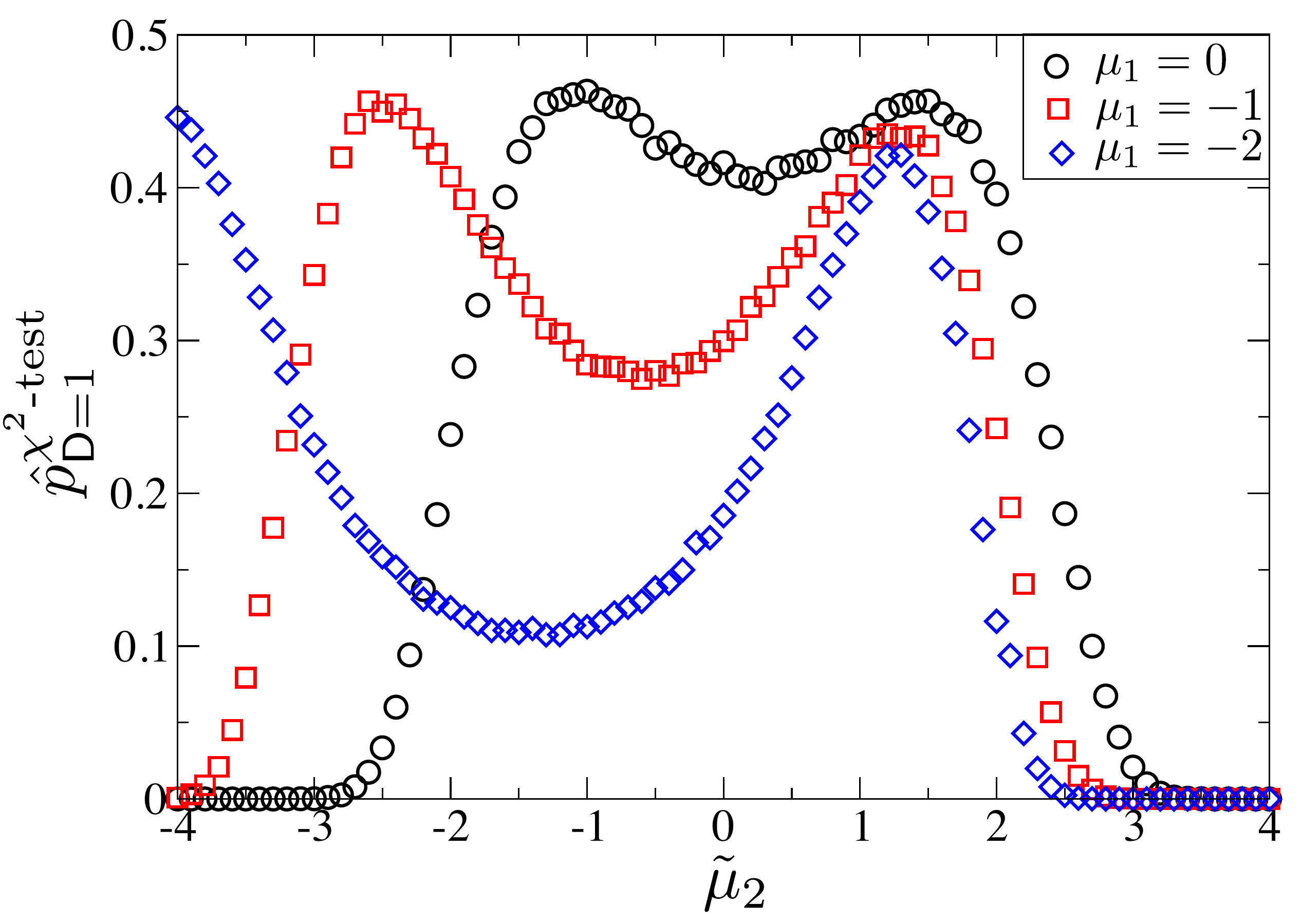}\label{fig4a}} {\hspace{+0.2cm}}
\subfigure[Two sample $\chi^2$-test of  (\ref{CondInvolution_Dec_tilde_discrete2}) for $\mathsf{D}=2$]
{\includegraphics[width=0.48\columnwidth]{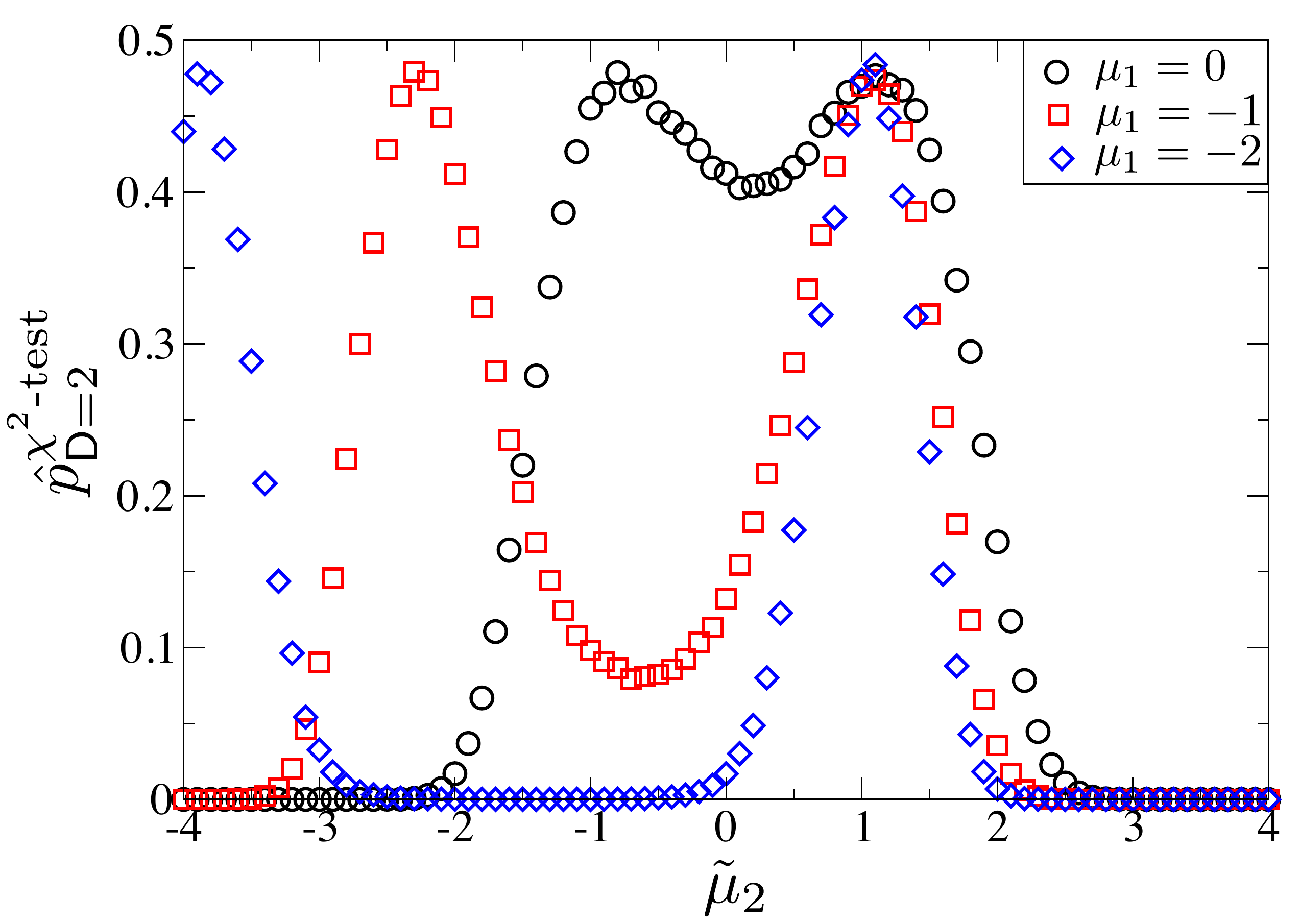}\label{fig4b}}
\caption{ Illustration of  testing optimality for discrete-time observations and for known hypotheses as described in Section~\ref{sec:test1_disc}.     Numerical results presented are for    the observation model given by (\ref{eq:obsModel}) and the decision model of the black box given by (\ref{eq:modelS}) to (\ref{Def_D_Waldxx}).  The observation process has parameters $\sigma_1  =5$, $(\mu_2, \sigma_2) = (1,10)$ and  the values of $\mu_1$ are given in the legend.   The decision model of the black-box decision device has threshold values  
 $L_1 = 4$ and $L_2 = -2$   and parameters $(\tilde{\mu}_1, \tilde{\sigma}_1) = (\mu_1, \sigma_1) $, $\tilde{\sigma}_2 = \sigma_2$ and with $\tilde{\mu}_2$ as  given by the abscissa.         We plot the $p$-values for the null hypothesis that (a) $p_{\mathsf{T}}(\mathsf{T}|\mathsf{D}=1,\mathsf{H}=1) =  p_{\mathsf{T}}(\mathsf{T}|\mathsf{D}=1,\mathsf{H}=2)$   (b)  $p_{\mathsf{T}}(\mathsf{T}|\mathsf{D}=2,\mathsf{H}=1) =  p_{\mathsf{T}}(\mathsf{T}|\mathsf{D}=2,\mathsf{H}=2)$.     The estimates of the p-values are average values over $1e+4$  two-sample $\chi^2$-tests; each two-sample $\chi^2$-tests   evaluates a p-value over a  population of $1e+5$ outcomes  of the black-box decision device.    We took  $P(\mathsf{ H}=1) = P\left(\mathsf{H}=2\right)=1/2$, and a maximum observation window of $10$ observations, i.e., all outcomes with more than $10$ observations are discarded.   
  }\label{fig:test}
\end{figure}

\begin{figure}
\subfigure[Divergence to optimality in information usage]
{\includegraphics[width=0.48\columnwidth]{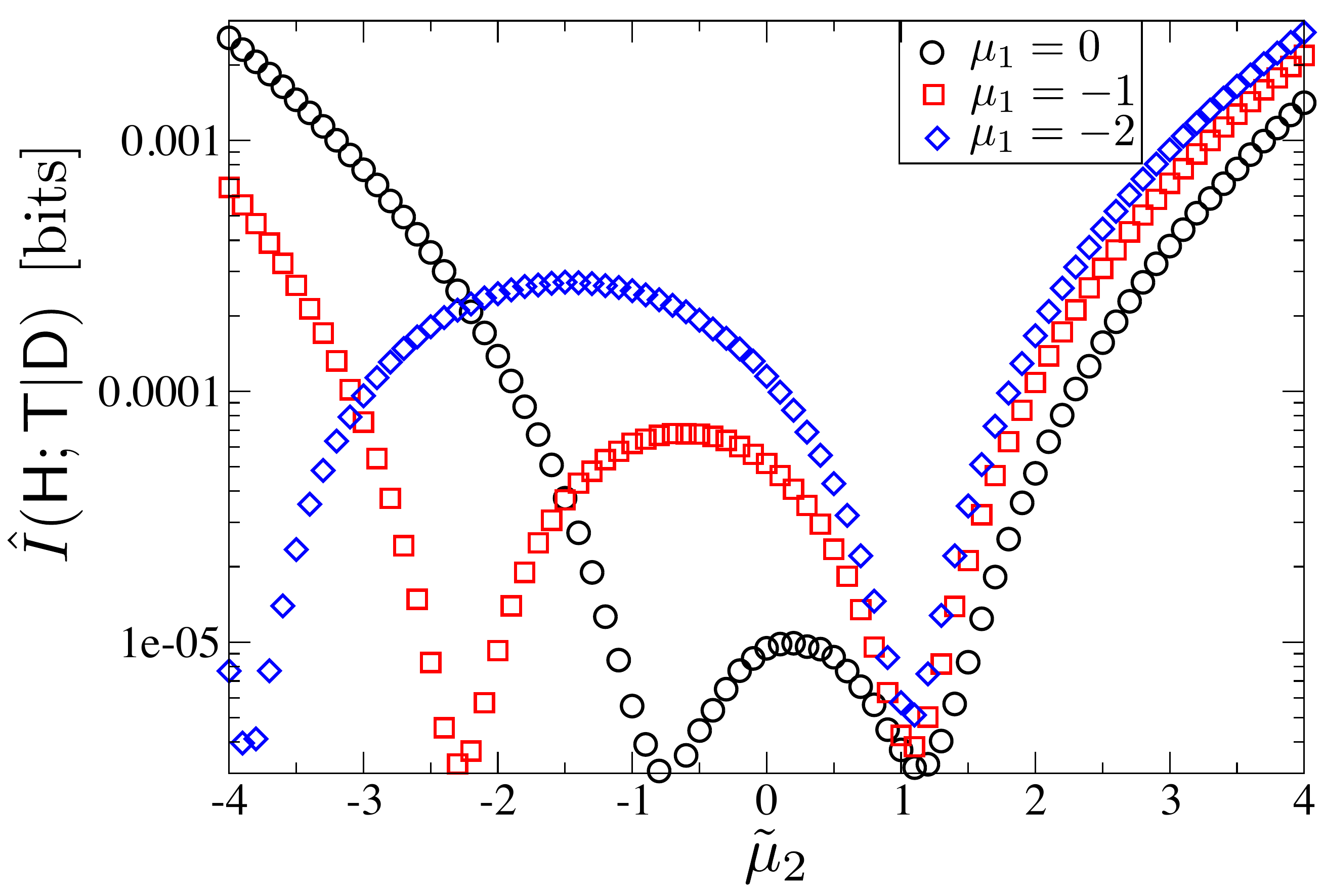}\label{figmutualaT}} {\hspace{+0.2cm}}
\subfigure[Divergence to optimality in average decision times]
{\includegraphics[width=0.48\columnwidth]{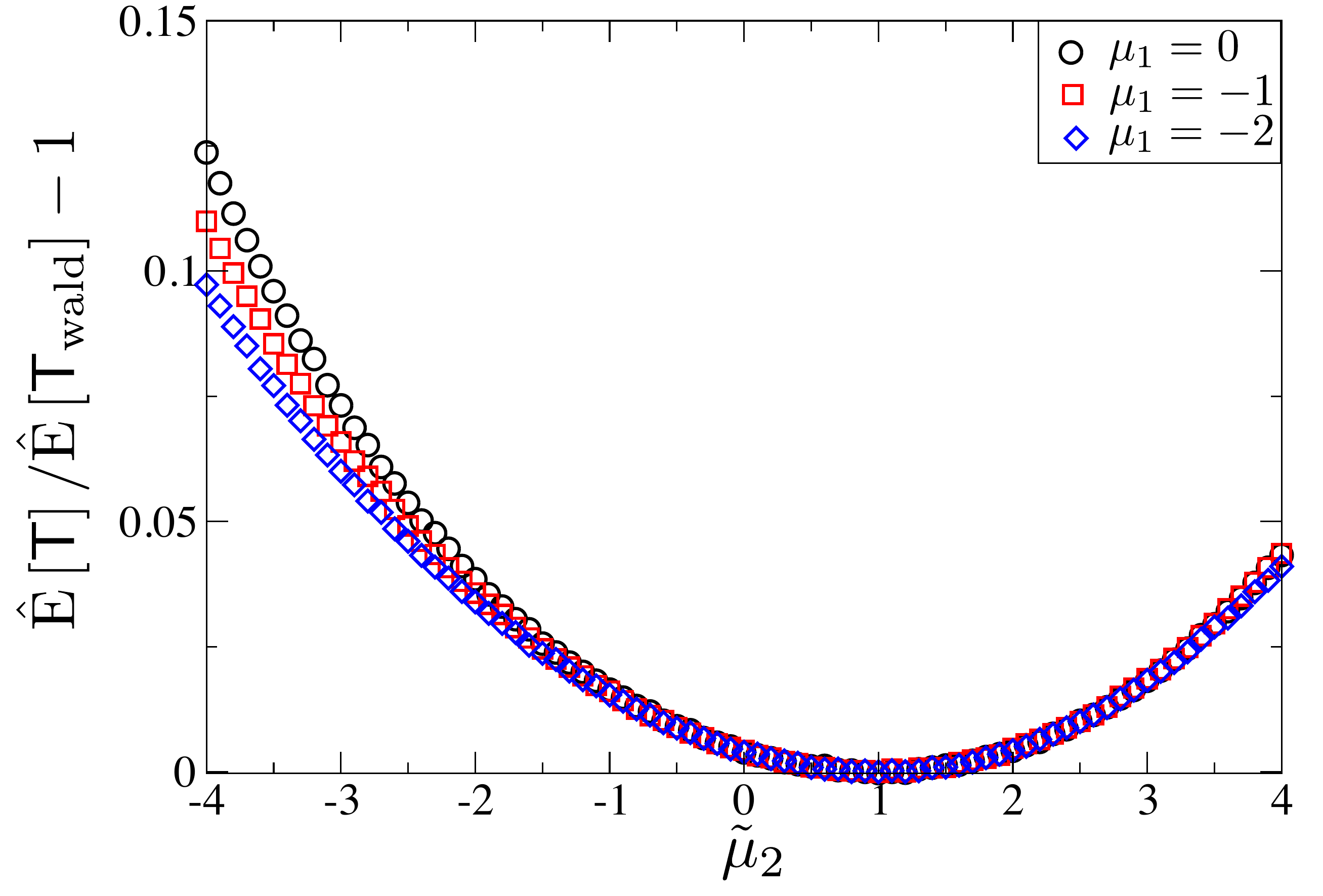}\label{figmutualbT}}
\caption{Estimating the divergence to optimality in the sense of Definitions \ref{DefinitionMinimumMeanTime} and \ref{DefinitionOptInfUse},    using the observation model given by (\ref{eq:obsModel}) and the decision model of the black box given by (\ref{eq:modelS}) to (\ref{Def_D_Waldxx}).  The parameters of the observation process and of the decision model are the same as in Fig.~\ref{fig:test}.  Fig.~\ref{figmutualbT} shows the estimate of the average decision time of the black-box decision device divided by the estimate of the average decision time of a Wald test ($\mu_2 = \tilde{\mu}_2$),   with the same error probabilities as achieved by the black-box decision device, minus one;  therefore each Wald test has different  values of the thresholds $L_1$ and $L_2$  depending on $\tilde{\mu}_2$.     In Fig.~\ref{figmutualaT}  each sample point is calculated using $1e+9$ simulation runs, and in Fig.~\ref{figmutualbT}  using $1e+8$ simulation runs.    We took  $P(\mathsf{H}=1) = P\left(\mathsf{H}=2\right)=1/2$.   }\label{figmutual}
\end{figure}

We now study the decision time distributions to illustrate Theorem~\ref{Theorem3Disc} using numerical simulations.    In Fig.~\ref{fig3} we present the estimated decision time distributions for optimal and suboptimal  sequential decision-making.  Consistent with Theorem~\ref{Theorem3Disc}  the estimates of the distributions $\hat{P}(\mathsf{T}=k|\mathsf{H}=2,\mathsf{D}=a)$ and $\hat{P}(\mathsf{T}=k|\mathsf{H}=1,\mathsf{D}=a)$ ($a\in\left\{1,2\right\}$) overlap if the black-box decision device performs the Wald test and if condition (\ref{ConditionDiscreteTime}) is approximately fulfilled as shown in Fig.~\ref{fig3a}.    If the black-box decision device is suboptimal, as is the case in  Fig.~\ref{fig3b}, then these two distributions are different.      Moreover, since (\ref{ConditionDiscreteTime}) is only approximately fulfilled, the theoretical distributions $P(\mathsf{T}=k|\mathsf{H}=2,\mathsf{D}=a)$ and $P(\mathsf{T}=k|\mathsf{H}=1,\mathsf{D}=a)$ ($a\in\left\{1,2\right\}$) corresponding to the estimates shown in Fig.~\ref{fig3a} are also different.   This example illustrates the value of Theorem~\ref{Theorem3Disc} to quantify optimality for practical purposes in discrete time.

In Fig.~\ref{fig:test} we use the statistical test for optimality described in Section \ref{sec:test1_disc}.
 We plot the estimates of the p-values $p^{\chi^2\textrm{test}}_{\mathsf{D}=1}$ and $p^{\chi^2\textrm{test}}_{\mathsf{D}=2}$ corresponding to, respectively,  a  two-sample $\chi^2$-test of the subsets $\mathcal{A}_{1,1}$ and $\mathcal{A}_{2,1}$, see Fig.~\ref{fig4a}, and a   two-sample $\chi^2$-test of the subsets $\mathcal{A}_{1,2}$ and $\mathcal{A}_{2,2}$, see Fig.~\ref{fig4b}.   These p-values denote the probability to falsely reject the null hypothesis that the samples in the two data sets are drawn from the same decision time distribution. 
Therefore, for example in the case of $\mu_1=-2$ and $\mathsf{D}=1$ we can safely reject the null hypothesis for values of $\tilde{\mu_2}>3$ since the $p$-value is small.  For values of $\tilde{\mu_2}\in[-4,3]$ we need more data to safely reject the  hypothesis that the test is optimal.

\subsubsection{Measuring divergence to optimality of black-box decision devices}
With the same example  as in Fig.~\ref{fig:test} we illustrate  how to use Corollary~\ref{Corollary4_+}  to estimate the divergence of a black-box decision device to the optimal case given by Definition~\ref{DefinitionOptInfUse}.     Note that in the example of Fig.~\ref{fig:test} condition (\ref{ConditionDiscreteTime}) is only approximately fulfilled, and therefore we only expect (\ref{testwald}) to be  approximately fulfilled.  
In Fig.~\ref{figmutualaT} we present the numerical estimates of  $I\left(\mathsf{H};\mathsf{T}|,\mathsf{D}\right)$.   In accordance with  Corollary \ref{Corollary4_+}, if the test is optimal, i.e., the black-box decision device uses a Wald test, than the  estimate of the mutual information  $I\left(\mathsf{H};\mathsf{T}|,\mathsf{D}\right)$ is minimal and approaches zero.   In the example of Fig.~\ref{figmutualaT}  this happens at $\tilde{\mu}_2 = 1$.   For this case also the mean decision time is minimized as shown in Fig.~\ref{figmutualbT}.   Note that Corollary \ref{Corollary4_+} is not a sufficient condition to test optimality with respect to Definition~\ref{DefinitionMinimumMeanTime}, i.e., to test whether the black-box decision device achieves the minimum mean decision time, as is illustrated by Fig.~\ref{figmutualaT}  where we observe a second minimum for the estimate of     $I\left(\mathsf{H};\mathsf{T}|,\mathsf{D}\right)$.    However, for this second minimum the black-box decision device is optimal with respect to Definition~\ref{DefinitionOptInfUse}, which is not related to a minimum mean decision time as illustrated in Fig.~\ref{figmutualbT}.  Note that the estimation of $I\left(\mathsf{H};\mathsf{T}|,\mathsf{D}\right)$ in Fig.~\ref{figmutualaT} requires only knowledge of the output of the decision device whereas the estimation of the minimum mean decision time $\mathrm{E}\left[\mathsf{T}_{\rm Wald}\right]$  requires knowledge on the statistics of the observation process, which in practical applications is  often  unavailable.

 \subsubsection{Testing optimality in case of unknown hypotheses} 
 In this section now we consider testing optimality in case of unknown hypotheses based on Theorem~\ref{Theorem4Disc}.  However, the example given by (\ref{eq:obsModel}) to  (\ref{Def_D_Waldxx}) is not suitable to discuss Theorem~\ref{Theorem4Disc}. The reason is that the cumulative log-likelihood ratio process $\tilde{\mathsf{S}}_k$  becomes a drift-diffusion process in the continuous limit independent of the choice of $\tilde{\mu}_i$ and $\tilde{\sigma}_i$, for which it is known that the two-boundary first-passage time distribution with symmetric thresholds satisfies the fluctuation relation \cite{Roldan_etal15}.   Therefore, the estimates of the distributions $\hat{P}(\mathsf{T}=k|\mathsf{H}=a,\mathsf{D}=1)$ and $\hat{P}(\mathsf{T}=k|\mathsf{H}=a,\mathsf{D}=2)$ ($a\in\left\{1,2\right\}$) always overlap (data not shown).   
 
Therefore, we choose a different example to illustrate the value of Theorem~\ref{Theorem4Disc}. We consider Markovian observation processes  $\mathsf{X} = (\mathsf{X}_1, \mathsf{X}_2, \ldots, \mathsf{X}_k)$   drawn from one of two probability distributions 
\begin{IEEEeqnarray}{rCL}
p_{\mathsf{X}^k_1}\left(x^k_1|\mathsf{H}=h\right) =   \prod^k_{n=1}p_{\mathsf{X}_2}\left(x_n|\mathsf{H}=h, \mathsf{X}_1 = x_{n-1}\right) \label{eq:obsModelMark} 
\end{IEEEeqnarray}
with $\mathsf{X}_0 = 0$ and $h\in\left\{1,2\right\}$.
   In our example the densities $p_{\mathsf{X}_2}\left(\cdot|\mathsf{H}=h, \mathsf{X}_1 = x_{n-1}\right)$ are Gaussian with mean $\mu_h(\mathsf{X}_1)=v_h + (w_h+1)\: \mathsf{X}_1$ and variance $\sigma^2_h$ with $h\in\left\{1,2\right\}$, corresponding to the two hypotheses $\mathsf{H}=h$.      If $v_1 = -v_2$, $w_1 = w_2$ and $\sigma_1 = \sigma_2$, then the involution property  (\ref{InvolutionCond}) holds, such that   Theorem~\ref{Theorem4Disc}  can be applied.

\begin{figure}
\subfigure[Optimal decision-making: black-box decision device uses the correct model of the external world]
{\includegraphics[width=0.48\columnwidth]{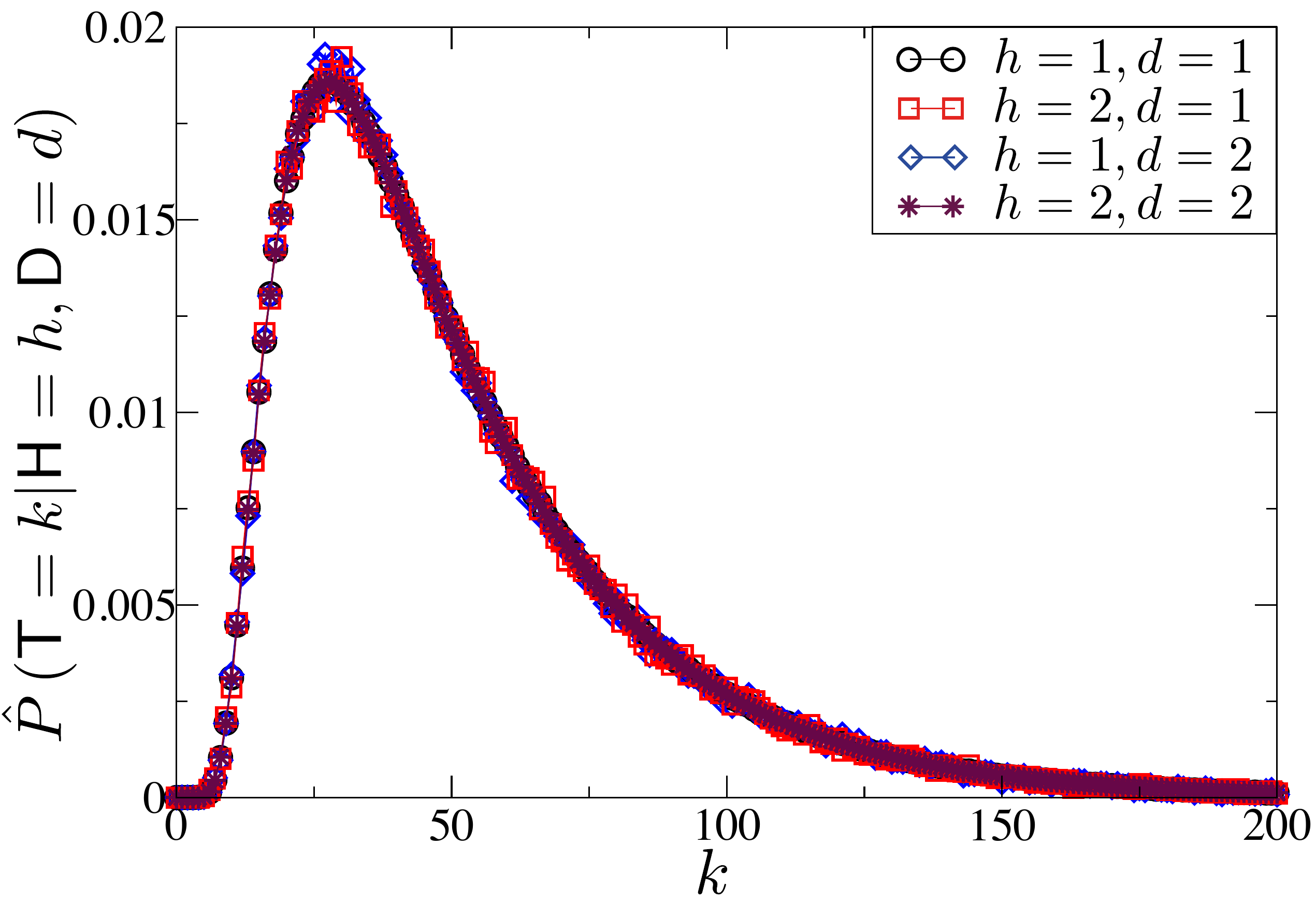}\label{fig3aT}} {\hspace{+0.2cm}}
\subfigure[Sub-optimal decision-making: black-box decision device uses the wrong model of the external world]
{\includegraphics[width=0.48\columnwidth]{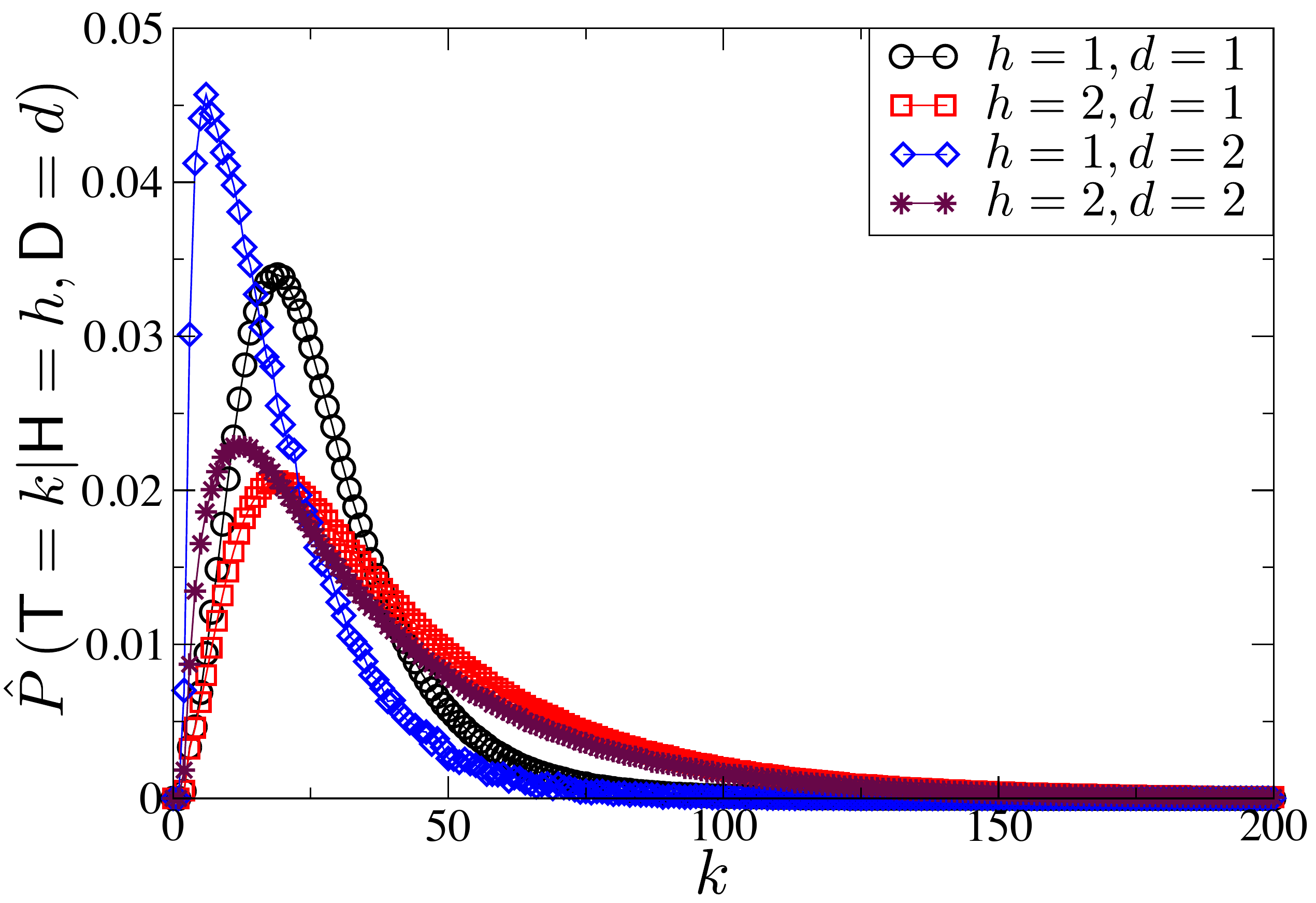}\label{fig3bT}}
\caption{Illustration of Theorem~\ref{Theorem4Disc} using the observation model given by (\ref{eq:obsModelMark}) and the decision model given by (\ref{eq:modelSMark}), (\ref{Def_T_Waldxx}), and (\ref{Def_D_Waldxx}). The  distributions of the observation process corresponding to the two hypotheses  are  with parameters  $v_1 = -v_2 = 1$,  
$w_1 = w_2 = -1$, $\sigma_1 = \sigma_2 = 5$, respectively.    The decision model has threshold values  $L_1 = 4$ and $L_2 = -4$ and uses the parameters:  (a)  $(\tilde{v}_1, \tilde{w}_1,\tilde{\sigma}_1) = (v_1, w_1, \sigma_1)$ and $(\tilde{v}_2, \tilde{w}_2, \tilde{\sigma}_2) = (v_2, w_2, \sigma_2)$, (b)   $(\tilde{v}_1, \tilde{w}_1,\tilde{\sigma}_1) = (v_1, w_1, \sigma_1)$ and $(\tilde{v}_2, \tilde{w}_2,\tilde{\sigma}_2) = (v_2, -0.5,  \sigma_2)$.   The empirical error probabilities are in (a) $\alpha_1 = 0.014$ and $\alpha_2 = 0.014$ and (b) $\alpha_1 = 0.0092$ and $\alpha_2 = 0.6932$.  The distributions are estimated using $1e+7$ simulation runs.   }\label{fig3T}
\end{figure}  

   We consider again a class of  black-box decision devices which use the Wald sequential probability ratio test based on its model of the external world.   The black-box decision devices compute the cumulative log-likelihood ratio   based on parameters $\tilde{v}_h$, $\tilde{w}_h$ and    $\tilde{\sigma}^2_h$ (with $h\in\left\{1,2\right\}$), i.e.,
\begin{IEEEeqnarray}{rCL}
\tilde{\mathsf{S}}_k &=& \sum^k_{n=1} \log \left(\frac{p_{\mathsf{X}_2|\mathsf{H},\mathsf{X}_1}\left(\mathsf{X}_{n}|\mathsf{H}=1,\mathsf{X}_{n-1}\right)}{p_{\mathsf{X}_2|\mathsf{H},\mathsf{X}_1}\left(\mathsf{X}_{n}|\mathsf{H}=2, \mathsf{X}_{n-1}\right)}\right)  \nonumber
\\ 
&=& k \log \frac{\tilde{\sigma}_2}{\tilde{\sigma}_1}  
  + \sum^k_{n=1}\left(\frac{(\mathsf{X}_{n}-\mathsf{X}_{n-1}-\tilde{v}_2 - \tilde{w}_2\mathsf{X}_{n-1})^2}{2\tilde{\sigma}^2_2}-\frac{(\mathsf{X}_{n}-\mathsf{X}_{n-1}-\tilde{v}_1 - \tilde{w}_1\mathsf{X}_{n-1})^2}{2\tilde{\sigma}^2_1}\right).\nonumber\\\label{eq:modelSMark}
\end{IEEEeqnarray}
The decision time $\mathsf{T}$  and the decision variable $\mathsf{D}$  are still 
given by (\ref{Def_T_Waldxx}) and (\ref{Def_D_Waldxx}) with  the two thresholds $L_1>0$ and $L_2<0$.  

We now illustrate Theorem~\ref{Theorem4Disc}   using numerical simulations.         In Fig.~\ref{fig3T} we illustrate Theorem~\ref{Theorem4Disc} for optimal and suboptimal  sequential decision-making with symmetric thresholds $L_1 = -L_2=4$ and for $v_1 = -v_2 = 1$,  
$w_1 = w_2 = -1$, $\sigma_1 = \sigma_2 = 5$ such that the  involution property  (\ref{InvolutionCond}) holds.  
Consistent with Theorem~\ref{Theorem4Disc}  the distributions  $\hat{P}(\mathsf{T}=k|\mathsf{H}=h,\mathsf{D}=1)$ and $\hat{P}(\mathsf{T}=k|\mathsf{H}=h,\mathsf{D}=2)$ ($h\in\left\{1,2\right\}$) overlap if the black-box decision device performs the Wald test, and is thus optimal, and if (\ref{ConditionDiscreteTime}) approximately applies.    If the black-box decision device is suboptimal, as is the case in  Fig.~\ref{fig3bT}, then these two distributions may be different.   
Note that since Theorem~\ref{Theorem3Disc} also applies, all distributions in  Fig.~\ref{fig3aT} overlap.

 \subsubsection{Overshoot problem}\label{SectOvershot}
 Due to the overshoot problem for discrete-time observation processes in general the  condition given  by (\ref{ConditionDiscreteTime}) is violated.   Therefore, even in the case of the Wald test  $I(\mathsf{H};\mathsf{T}_{\rm Wald}|\mathsf{D}_{\rm Wald})$ is in general larger than zero.  In the present section, we discuss how far $I(\mathsf{H};\mathsf{T}_{\rm Wald}|\mathsf{D}_{\rm Wald})$ deviates  from zero  in practical situations.   We also discuss 
 how far the condition imposed by (\ref{ConditionDiscreteTime}) is fulfilled in our numerical examples.   
 
 For this purpose, we first estimate $I(\mathsf{H};\mathsf{T}_{\rm Wald}|\mathsf{D}_{\rm Wald})$ as a function of  the threshold values and the number of test runs.   In Fig.~\ref{figrunsb} and Fig.~\ref{figruns} it can be seen that for the Wald test  the estimate of $I(\mathsf{H};\mathsf{T}_{\rm Wald}|\mathsf{D}_{\rm Wald})$  saturates with an increasing number of test runs at a non-zero value, and therefore  $I(\mathsf{H};\mathsf{T}_{\rm Wald}|\mathsf{D}_{\rm Wald})>0$.   This is an intrinsic aspect of  sequential decision-making with discrete observation processes and cannot be avoided.  For small values of $\lambda$, which parameterizes the threshold values, we see ripples in the mutual information. The minima occur approximately at integer multiples of the  most likely value of the increase of the cumulative log-likelihood ratio $\mathsf{S}_k$.  For example, in Fig.~\ref{figruns} we illustrate how the estimate of the mutual information converges to its asymptotic value for $\lambda=0.16$ and $\lambda = 0.36$, corresponding to the first maximum and the third minimum in Fig.~\ref{figruns}.    For large values of $\lambda$, i.e., when the   distance of the thresholds to the origin is large with respect to the typical  increase of the cumulative log-likelihood ratio, the mutual information  $I(\mathsf{H};\mathsf{T}_{\rm Wald}|\mathsf{D}_{\rm Wald})$ decreases as a function of $\lambda$.    
 Even for large values of $\lambda$, the estimate of the mutual information   $I(\mathsf{H};\mathsf{T}_{\rm Wald}|\mathsf{D}_{\rm Wald})$  does not converge to zero as a function of the number of test runs but saturates, as the condition in    (\ref{ConditionDiscreteTime}) is not  fulfilled.     This is illustrated in Fig.~\ref{figruns} for the values $\lambda=1$ and $\lambda=3$.

 The fact that $I(\mathsf{H};\mathsf{T}_{\rm Wald}|\mathsf{D}_{\rm Wald})$ is larger than zero  indicates that here the condition given by    (\ref{ConditionDiscreteTime}) is not fulfilled. To show this, in Fig.~\ref{figrunsc} we plot $\mathrm{E}\left[e^{\mathsf{M}_1}|\Phi_{1}(k),\mathsf{H}=2\right]$ as a function of time $k$.    

In conclusion,  Corollary~\ref{Corollary4_+} is applicable to test optimality of the black-box decision device if condition  (\ref{ConditionDiscreteTime})  is approximately fulfilled, which is the case when  the threshold values  of the Wald test are far enough from the origin in comparison to the average increase of the cumulative log-likelihood ratio  per observation.

\begin{figure*}[h!]  
\subfigure[Estimate  $\hat{I}(\mathsf{H};\mathsf{T}_{\rm Wald}|\mathsf{D}_{\rm Wald})$ as a function of  the distance parameter  $\lambda$  of the thresholds and for given values of the number of test runs $N$.]
{\includegraphics[width=0.48\columnwidth]{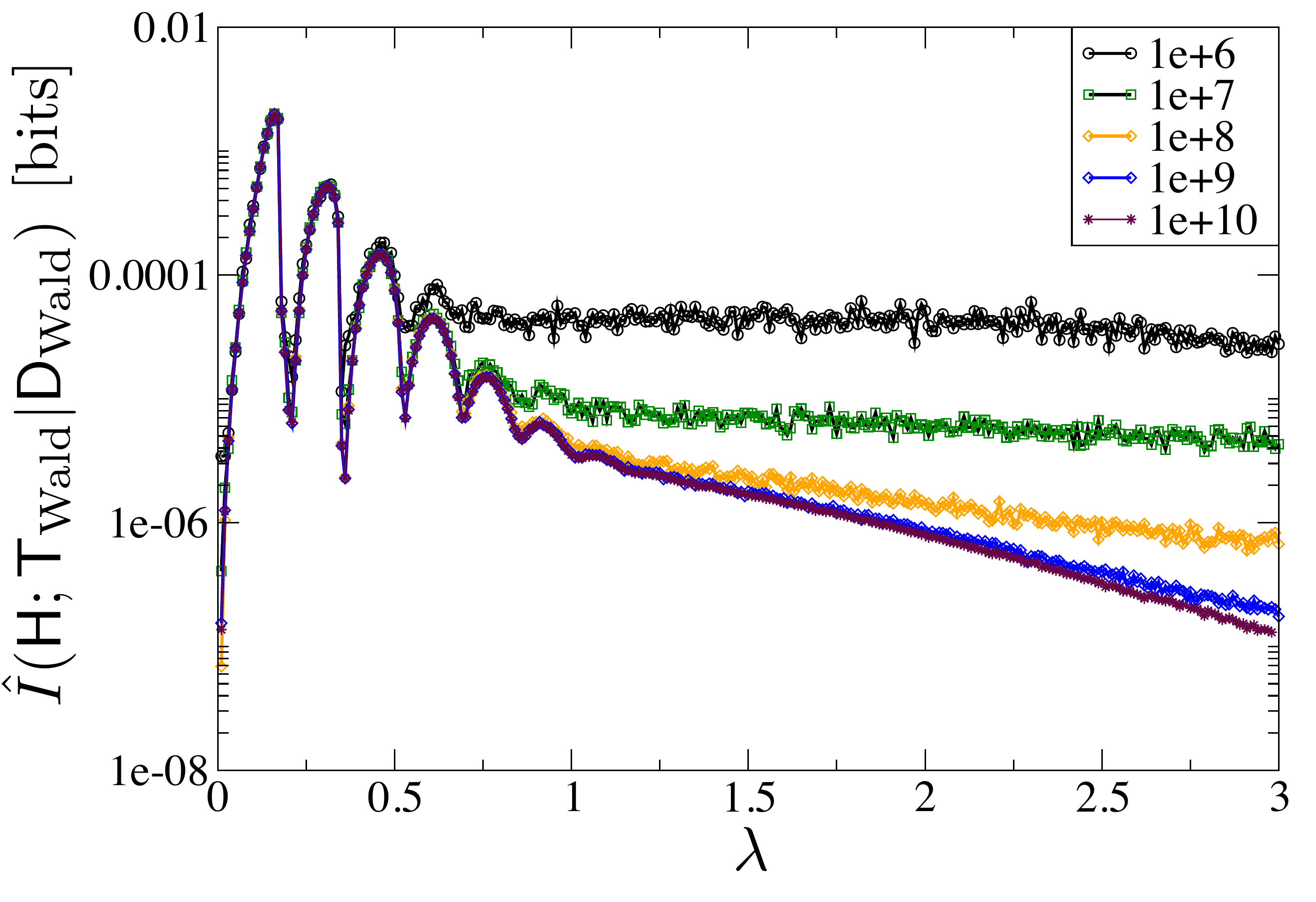}\label{figrunsb}}
\subfigure[Convergence of the estimate  $\hat{I}(\mathsf{H};\mathsf{T}_{\rm Wald}|\mathsf{D}_{\rm Wald})$ over the number of test runs $N$ and for given values of the distance parameter $\lambda$  of the thresholds. ]
{\includegraphics[width=0.48\columnwidth]{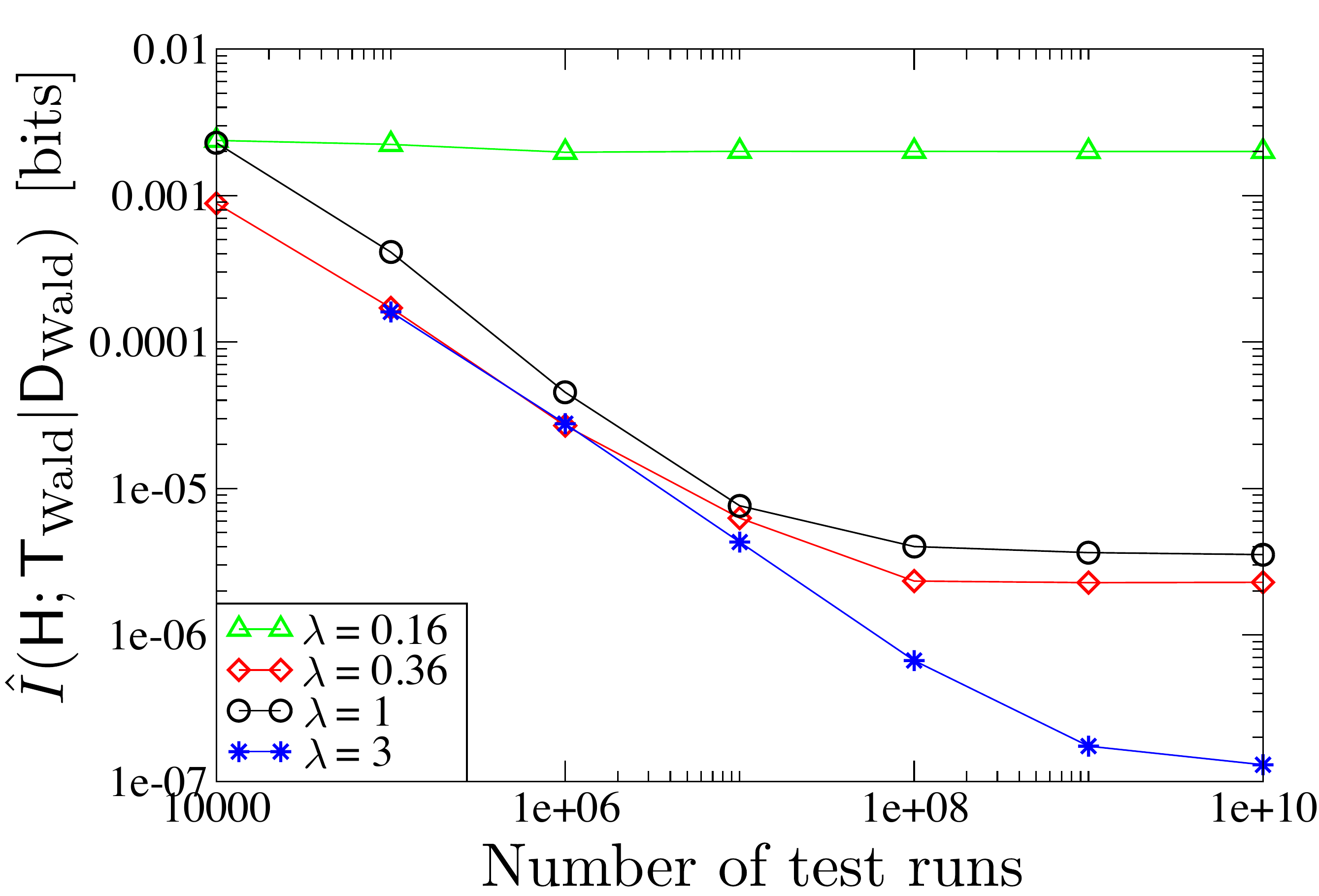}\label{figruns}} {\hspace{+0.2cm}}
\subfigure[Evaluation of the condition (\ref{ConditionDiscreteTime}) for the Wald test based on plotting $\mathrm{E}\left(e^{\mathsf{M}_1}|\Phi_{1}(k),\mathsf{H}=2\right)$ over $k$ (solid lines).  Also the corresponding $\hat{P}\left(\mathsf{T}_{\rm Wald} = k|\mathsf{H}=2, \mathsf{D}_{\rm Wald}=1\right)$ is shown (dashed lines).   It can be seen that $\mathrm{E}\left(e^{\mathsf{M}_1}|\Phi_{1}(k),\mathsf{H}=2\right)$ is not independent of $k$ such that  (\ref{ConditionDiscreteTime}) does not hold.   Especially for $\lambda=0.16$, the value corresponding to the first maximum in Fig.~\ref{figrunsb},    $\mathrm{E}\left(e^{\mathsf{M}_1}|\Phi_{1}(k),\mathsf{H}=2\right)$ varies in the area with the majority of the probability mass of the termination time yielding a larger $I(\mathsf{H};\mathsf{T}_{\rm Wald}|\mathsf{D}_{\rm Wald})$  than in the case of $\lambda=0.36$, the value corresponding to the third minimum in    Fig.~\ref{figrunsb}, where $\mathrm{E}\left(e^{\mathsf{M}_1}|\Phi_{1}(k),\mathsf{H}=2\right)$ varies less over $k$.
] 
{\includegraphics[width=0.48\columnwidth]{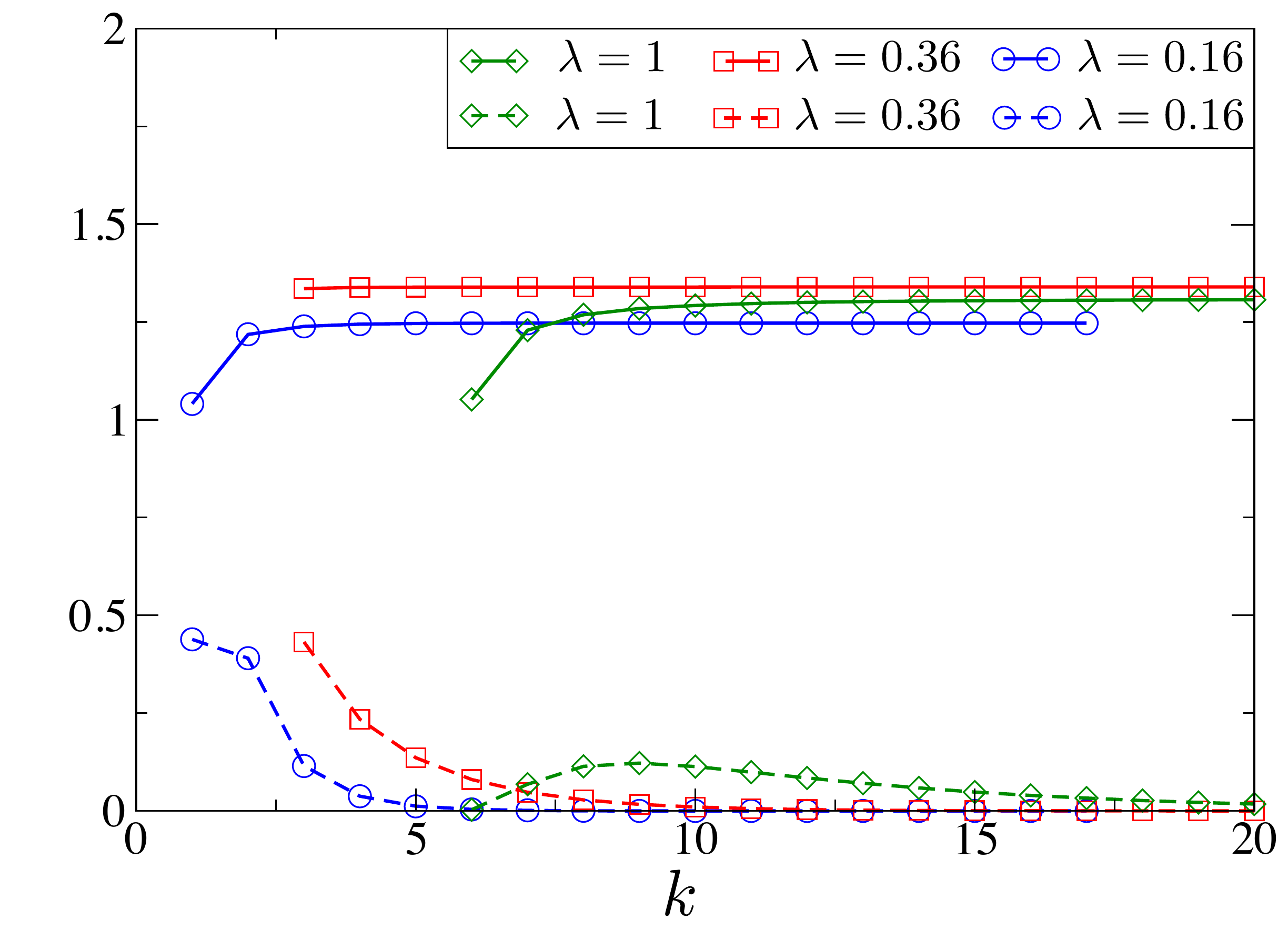}\label{figrunsc}}
	\caption{Illustration of the impact of discreteness of the observation process.  As in Fig.~\ref{fig3}, \ref{fig:test} and \ref{figmutual}, we use the observation model given by (\ref{eq:obsModel}) with parameters  $(\mu_1, \sigma_1) = (0,5)$ and $(\mu_2, \sigma_2) = (1,10)$.   The decision model is the Wald test given by (\ref{eq:modelS}) to (\ref{Def_D_Waldxx}) with parameters $(\tilde{\mu}_1, \tilde{\sigma}_1) = (\mu_1, \sigma_1)$,  $(\tilde{\mu}_2, \tilde{\sigma}_2) = (\mu_2, \sigma_2)$,  $L_1 = 4\lambda$, $L_2 = -2\lambda$. }
	
\end{figure*}
 
\subsection{Continuous observation processes}
 In this section we illustrate Theorem~\ref{Theorem1} and  Corollary~\ref{Corollary3}  for continuous observation processes.   
 
 The decision model we study here is a drift diffusion process and has been used to describe reaction-time distributions of two-choice decision tasks of human subjects \cite{ratcliff2004comparison, ratcliff2008diffusion}.   
 
\subsubsection{Observation model and decision model} We consider an observation process $\mathsf{X}_t$   which is an It\^{o}-process  
 solving the stochastic differential equation
\begin{IEEEeqnarray}{rCL} 
\mathrm{d}\mathsf{X}_t = \mu_i\: \mathrm{d}t + \sigma \:\mathrm{d}\mathsf{W}_t  \label{eq:Ito}
\end{IEEEeqnarray}
 where $\mu_i$   is a constant drift,  with $i\in\left\{1,2\right\}$  corresponding to the two hypotheses $\mathsf{H}=i$, where  $\sigma$ is a constant noise amplitude,  
 and where $\mathsf{X}_0= 0$.  Here $\mathsf{W}_t$ is a standard Wiener process. If $\mu_1 = -\mu_2$ then the involution property  (\ref{InvolutionCond}) holds.

We consider black-box decision devices which compute the continuous version 
of the cumulative  log-likelihood ratio in the Wald sequential probability ratio test, cf.~(\ref{eq:modelS}), which is given by
\begin{IEEEeqnarray}{rCL}
\tilde{\mathsf{S}}_t  &=&  t \:\frac{\tilde{\mu}^2_2  - \tilde{\mu}_1^2}{2\tilde{\sigma}^2} + \mathsf{X}_t  \:\frac{\tilde{\mu}_1 -  \tilde{\mu}_2}{\tilde{\sigma}^2}.  \label{eq:modelSCont}
\end{IEEEeqnarray}
The decision time  of the  model is 
\begin{IEEEeqnarray}{rCL}
\mathsf{T}&=&{\rm inf}\{t\in\mathbb{R} : \tilde{\mathsf{S}}_t\notin (L_2,L_1)\}\label{Def_T_Waldxxcont}  
\end{IEEEeqnarray}
and the decision variable is given by
\begin{IEEEeqnarray}{rCL}
\mathsf{D}&=&\left\{\begin{array}{ll}
1 & \textrm{if } \tilde{\mathsf{S}}_{\mathsf{T}}\ge L_1\\
2 & \textrm{if } \tilde{\mathsf{S}}_{\mathsf{T}}\le L_2
\end{array}\right. \label{Def_D_WaldxxCont}
\end{IEEEeqnarray}
with  the two thresholds $L_1>0$ and $L_2<0$.      

Note that the cumulative log-likelihood ratio, in the case the hypothesis $\mathsf{H}  = i$ is true,  is the following It\^{o} process
\begin{IEEEeqnarray}{rCL}
{\rm d}\tilde{\mathsf{S}}_t  = a_i\:{\rm d}t  +\sqrt{2b}\:{\rm d}\mathsf{W}_t   \label{eq:ST}
\end{IEEEeqnarray}   
with 
\begin{IEEEeqnarray}{rCL}
a_i &=& \frac{\tilde{\mu}_1-\tilde{\mu}_2}{\tilde{\sigma}^2} \left(-\frac{\tilde{\mu}_1+\tilde{\mu}_2}{2}+ \mu_i\right), \quad i\in \left\{1,2\right\} \\  
b &=& \frac{1}{2}\left(\sigma \frac{\tilde{\mu}_1-\tilde{\mu}_2}{\tilde{\sigma}^2}\right)^2.
\end{IEEEeqnarray}   
The  sequential decision-making device $(\mathsf{T}, \mathsf{D})$ has error probabilities 
\begin{eqnarray}
\alpha_1 &=& P\left(\mathsf{D}=1|\mathsf{H}=2\right)  =   \frac{1-e^{\frac{a_2L_2}{b}}}{1-e^{\frac{a_2(L_2-L_1)}{b}}} \label{eq:diffP1}\\
\alpha_2 &=&  P\left(\mathsf{D}=2|\mathsf{H}=1\right)  =  \frac{e^{\frac{a_1L_2}{b}}-e^{\frac{a_1(L_2-L_1)}{b}}}{1-e^{\frac{a_1(L_2-L_1)}{b}}}. \label{eq:diffP2}
\end{eqnarray}
The values of $a_1$, $a_2$, $b$, $L_1$ and $L_2$ are  chosen such that $\alpha_1,\alpha_2\in[0,1/2]$.
If $\tilde{\mu}_1 =  \mu_1$,  $\tilde{\mu}_2 = \mu_2$ and $\tilde{\sigma}=\sigma$, then   $\tilde{\mathsf{S}}_t  = \mathsf{S}_t $  
 and $(\mathsf{T}, \mathsf{D}) = (\mathsf{T}_{\rm dec}, \mathsf{D}_{\rm dec})$ with error probabilities as given by (\ref{Def_T1}) and (\ref{Def_T2}).       Notice that the stochastic differential equation of $\mathsf{S}_t$  is of the form~\cite{PhysRevLett.119.140604}
\begin{IEEEeqnarray}{rCL}
{\rm d}\mathsf{S}_t  = \frac{(-1)^{i+1}}{2}\left(\frac{\mu_1-\mu_2}{\sigma}\right)^2 {\rm d}t  +\frac{\mu_1-\mu_2}{\sigma} {\rm d}\mathsf{W}_t, \quad i\in\left\{1,2\right\} \label{eq:STx}
\end{IEEEeqnarray} 
and $e^{-\mathsf{S}_t  } = -\sqrt{2b} \int^{t}_0  e^{-\mathsf{S}_{t'}  }{\rm d}\mathsf{W}_{t'}$ is a $\mathbb{P}_i$-martingale process.  
For the special case of 
\begin{IEEEeqnarray}{rCL}
  \tilde{\mu}_1+\tilde{\mu}_2  &=&  \mu_1 +\mu_2\label{EqSymOptCond}
\end{IEEEeqnarray} 
we have   $\tilde{\mathsf{S}}_{t} = c\, \mathsf{S}_t$   with $c = \left(\frac{\sigma}{\tilde{\sigma}}\right)^2 \left(\frac{\tilde{\mu}_1-\tilde{\mu}_2}{\mu_1-\mu_2}\right)$ and hence  $(\mathsf{T}, \mathsf{D}) = (\mathsf{T}_{\rm dec}, \mathsf{D}_{\rm dec})$ with error probabilities, $ \alpha_1 =(1-e^{L_2/c})/(1-e^{(L_2-L_1)/c})$ and $\alpha_2 =(e^{L_2/c}-e^{(L_2-L_1)/c})/(1-e^{(L_2-L_1)/c})$. Thus, in case (\ref{EqSymOptCond}) holds the black box decision device is optimal. Note that (\ref{EqSymOptCond}) implies that $a_1=-a_2$.

\subsubsection{Illustration of Theorem~\ref{Theorem1}}
We consider now the special case of   
\begin{IEEEeqnarray}{rCL}
|L_2|\gg b/|a_1|, \quad |L_2|\gg b/|a_2|  \label{eq:conddiv}
\end{IEEEeqnarray} 
for which the expression of the distribution of decision times simplifies and allows analytical evaluation. 

In the following we illustrate Theorem~\ref{Theorem1}. The Laplace transform of the distributions of decision times are known for arbitrary values of $L_1$ and $L_2$  \cite{redner2001guide}.     If the conditions  in (\ref{eq:conddiv}) are fulfilled, we get
\begin{IEEEeqnarray}{rCL}
p_{\mathsf{T}}(t|\mathsf{D}=1, \mathsf{H}=1) &=& \frac{L_1}{2\sqrt{\pi b}\:t^{3/2}} e^{-\frac{(|a_1|t-L_1)^2}{4bt}} + o(1)\label{eq:distriTime1} \\ 
p_{\mathsf{T}}(t|\mathsf{D}=1, \mathsf{H}=2) &=&\frac{L_1}{2\sqrt{\pi b}\:t^{3/2}} e^{-\frac{(|a_2|t-L_1)^2}{4bt}}+ o(1)\label{eq:distriTime2} \\
p_{\mathsf{T}}(t|\mathsf{D}=2, \mathsf{H}=1) &=&\frac{1}{1-e^{-\frac{|a_1|}{b}L_1}} \frac{1}{\sqrt{\pi b}\:t^{3/2}} e^{-\frac{(|a_1|t+L_2)^2}{4bt}} \nonumber\\
&&\times
 \left\{\frac{1}{2} |L_2| -  (L_1+\frac{1}{2} |L_2|)e^{-L^2_1/(bt) - |L_2|L_1/(bt)}\right\} + o(1) \label{eq:distriTime3} \\
p_{\mathsf{T}}(t|\mathsf{D}=2, \mathsf{H}=2) &=&\frac{1}{1-e^{-\frac{|a_2|}{b}L_1}}\frac{1}{\sqrt{\pi b}\:t^{3/2}} e^{-\frac{(|a_2|t+L_2)^2}{4bt}}  \nonumber\\
&&\times \left\{\frac{1}{2}  |L_2| -  (L_1+\frac{1}{2} |L_2|)e^{-L^2_1/(bt) - |L_2|L_1/(bt)}\right\}+ o(1)   \label{eq:distriTime4}
\end{IEEEeqnarray}  
 where $o$ denotes the little-$o$ notation taken  with respect to $|L_2|$ going to infinity. 
The fluctuation relations (\ref{CondInvolution_Dec}) and  (\ref{CondInvolution_Dec2})  hold for $a_1 = -a_2$, and thus for $\tilde{\mu}_1+\tilde{\mu}_2  =  \mu_1 +\mu_2$.    This is consistent with Theorem \ref{Theorem1} which states that the fluctuation relation must hold whenever  $(\mathsf{T}, \mathsf{D}) = (\mathsf{T}_{\rm dec}, \mathsf{D}_{\rm dec})$.   

\subsubsection{Optimality in mean decision times}
With this example we can also verify optimality of sequential hypothesis testing in the sense of Definition~\ref{DefinitionMinimumMeanTime}. The mean decision times are given by
\begin{IEEEeqnarray}{rCL}
 \mathrm{E}\left[\mathsf{T}|\mathsf{D}=1, \mathsf{H}=1\right]  &=& \frac{L_1}{|a_1|} + O\left(|L_2|e^{|a_1|L_2/b}\right) \label{eq:TTheory1}\\ 
    \mathrm{E}\left[ \mathsf{T} | \mathsf{D}=1, \mathsf{H}=2 \right]&=& \frac{L_1}{|a_2|} + O\left(|L_2|e^{|a_2|L_2/b} \right)  \\ 
      \mathrm{E}\left[ \mathsf{T} |\mathsf{D}=2, \mathsf{H}=1 \right]&=& \frac{1}{|a_1|}\left(|L_2|-2L_1 \frac{e^{-(|a_1|/b)L_1}}{1-e^{-(|a_1|/b)L_1}}\right)  + O\left(|L_2|e^{|a_1|L_2/b}\right)\\ 
         \mathrm{E}\left[ \mathsf{T} |\mathsf{D}=2, \mathsf{H}=2 \right]&=& \frac{1}{|a_2|}\left(|L_2|-2L_1 \frac{e^{-(|a_2|/b)L_1}}{1-e^{-(|a_2|/b)L_1}}\right)  +O\left(|L_2|e^{-a_2L_2/b}\right) 
\end{IEEEeqnarray}  
 where $O$ denotes the big-$O$ notation.  
The corresponding values of the average decision times of the Wald test yielding the same error probabilities $\alpha_1$ and $\alpha_2$ as $(\mathsf{T}, \mathsf{D})$ are 
\begin{IEEEeqnarray}{rCL}
 \mathrm{E}\left[\mathsf{T}_{\rm dec}|\mathsf{D}=1, \mathsf{H}=1\right] &=&  \mathrm{E}\left[\mathsf{T}_{\rm dec}|\mathsf{D}=1, \mathsf{H}=2\right]   = 2\left(\frac{\sigma}{\mu_1-\mu_2}\right)^{2}\log \frac{1-\alpha_2}{\alpha_1}   + O\left(|L_2|e^{L_2}\right). \nonumber\\
\label{eq:TTheory2}
\end{IEEEeqnarray}  
It can be shown that for $L_2\rightarrow -\infty$ we have  $ \mathrm{E}\left[\mathsf{T}|\mathsf{D}=1, \mathsf{H}=1\right] -
 \mathrm{E}\left[\mathsf{T}_{\rm dec}|\mathsf{D}=1, \mathsf{H}=1\right]  \geq 0$, which is consistent with the optimality of the  sequential probability ratio test in the sense of minimal decision times.

\begin{figure}
\subfigure[Divergence to optimality in information usage; note that the black solid line and the green dotted line overlap.]
{\includegraphics[width=0.48\columnwidth]{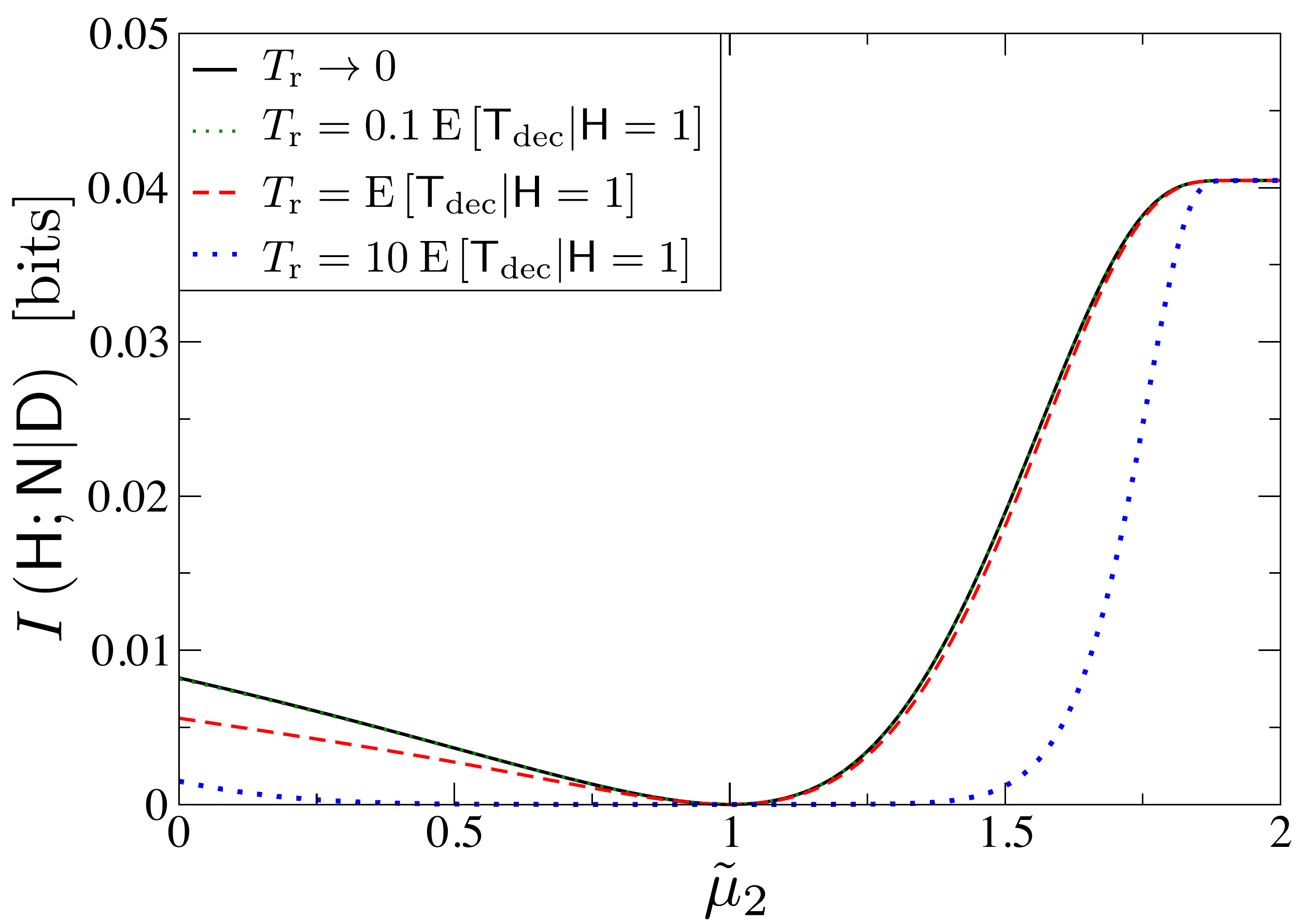}\label{fig:ICont}} {\hspace{+0.2cm}}
\subfigure[Divergence to optimality in average decision times]
{\includegraphics[width=0.48\columnwidth]{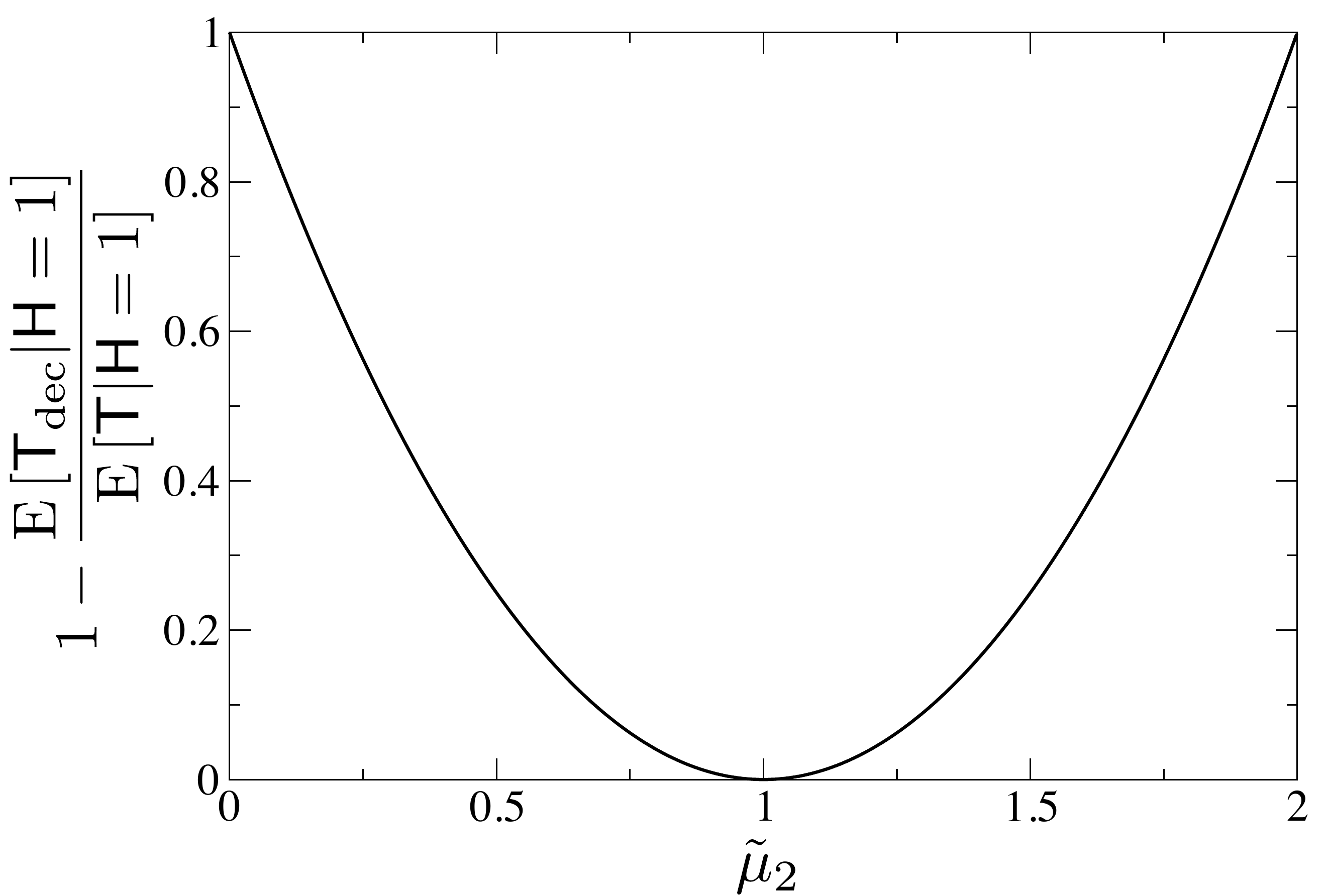}\label{fig:ICont2}}
\subfigure[Empirical estimate of the mutual information as a function of the number of test runs]
{\includegraphics[width=0.48\columnwidth]{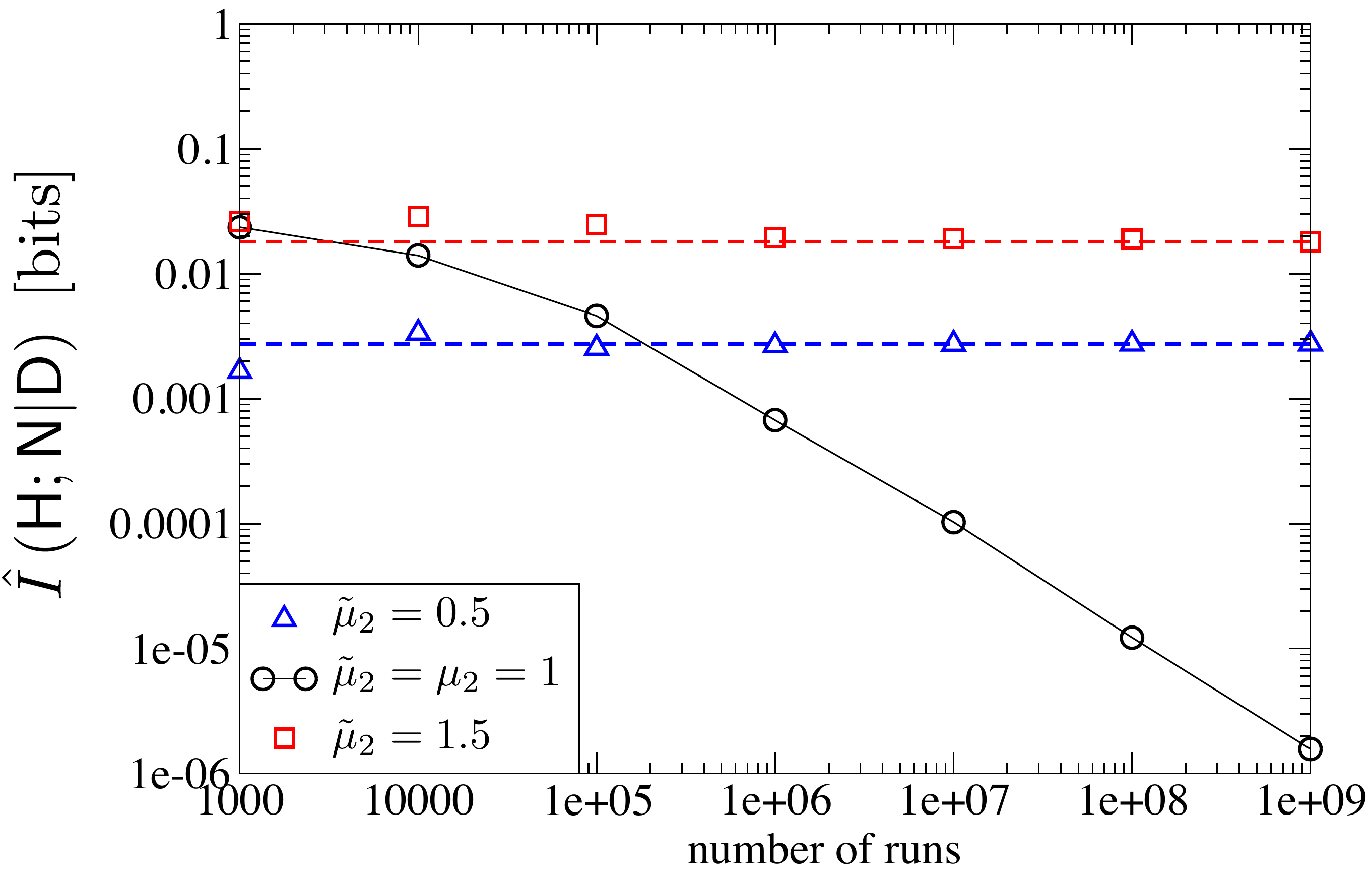}\label{fig:ICont3}}
\caption{Measuring the divergence to optimality in the sense of Definitions \ref{DefinitionMinimumMeanTime} and \ref{DefinitionOptInfUse} using the  {\it continuous} observation process (\ref{eq:Ito}) and the decision-making model (\ref{eq:modelSCont}) - (\ref{Def_D_WaldxxCont}) .   The parameters defining the observation process are   $\mu_1 = 0$, $\mu_2 = 1$ and $\sigma=5$ and the parameters defining the decision-making device are $\tilde{\mu}_1 = 0$, $L_1 = 4$, $\alpha_1 = 0.01$,  $\tilde{\sigma} =\sqrt{\sigma^2\frac{\tilde{\mu}_1-\tilde{\mu}_2}{2 \mu_2-\tilde{\mu}_1-\tilde{\mu}_2}\frac{\log \alpha_1}{L_1}}$ and  $\tilde{\mu}_2$ as given on the abscissa.   
The range of $\tilde{\mu}_2$ plotted  corresponds with  $\tilde{\sigma}\geq 0$.   (a):  Theoretical value of the mutual information $I(\mathsf{H};\mathsf{N}|\mathsf{D})$ as given by (\ref{eq:mutualContInfxx}) for given values of the time resolution $T_{\rm r}$ in the legend.  (b):     Theoretical values of the mean decision time as given by (\ref{eq:TTheory1}) - (\ref{eq:TTheory2}).   (c): Empirical estimate $\hat{I}(\mathsf{H};\mathsf{N}|\mathsf{D})$ as a function of the number of test runs, for given values of $\tilde{\mu}_2$ and for $T_r = \mathrm{E}\left[\mathsf{T}_{\rm dec}|\mathrm{H}=1\right]$.  
  The corresponding theoretical value of the mutual information is indicated by the horizontal dashed lines.  
 Note that we have   generated variates of the random variables $(\mathsf{T}, \mathsf{H}, \mathsf{D})$ according to the distributions (\ref{eq:diffP1}) - (\ref{eq:diffP2}), (\ref{eq:distriTime1}) - (\ref{eq:distriTime4}), and $P\left(\mathsf{H}=1\right) = P\left(\mathsf{H}=2\right) = 1/2$; we have generated variates from the inverse Gaussian distribution  with the algorithm in \cite{michael1976generating}. }\label{figmutualCont}
\end{figure}  

\subsubsection{Illustration of Corollary~\ref{Corollary3}}
We can also compute the mutual information in the limit $L_2\rightarrow -\infty$, which is for $P(\mathsf{H}=1) = P(\mathsf{H}=2) = 1/2$   given by
\begin{IEEEeqnarray}{rCL}
I\left(\mathsf{H};\mathsf{T} |\mathsf{D} \right) &=&  \frac{1+\alpha_1}{2}\log_2\left(1+\alpha_1\right) 
\nonumber\\
 &&- \frac{L_1}{4\sqrt{\pi b}} \int^\infty_0 {\rm d}t\:t^{-3/2} e^{-\frac{\left(|a_1|t-L_1\right)^2}{4bt}}  \log_2\left(1 + \alpha_1 e^{-\frac{\left(|a_2|t-L_1\right)^2}{4bt}+\frac{\left(|a_1|t-L_1\right)^2}{4bt}} \right) \nonumber\\
  &&- \frac{\alpha_1 L_1}{4\sqrt{\pi b}} \int^\infty_0 {\rm d}t\:t^{-3/2}  e^{-\frac{\left(|a_2|t-L_1\right)^2}{4bt}}  \log_2\left(\alpha_1 +  e^{\frac{\left(|a_2|t-L_1\right)^2}{4bt}-\frac{\left(|a_1|t-L_1\right)^2}{4bt}} \right)   \label{eq:mutualContI}
\end{IEEEeqnarray}    
with $\alpha_1  = e^{a_2L_1/b}$  following from (\ref{eq:diffP1}) and $\alpha_1\in[0,1/2]$.
 If $|a_2|= |a_1|$, then (\ref{eq:mutualContI}) yields
  $I\left(\mathsf{H};\mathsf{T} |\mathsf{D} \right)  = 0$, and otherwise   $I\left(\mathsf{H};\mathsf{T} |\mathsf{D} \right)  > 0$.     In Fig.~\ref{fig:ICont}  with the black solid line we illustrate   $  I\left(\mathsf{H};\mathsf{T} |\mathsf{D} \right)$ as a function $\tilde{\mu}_2$ for  values of $\tilde{\sigma} =\sqrt{\sigma^2\frac{\tilde{\mu}_1-\tilde{\mu}_2}{2 \mu_2-\tilde{\mu}_1-\tilde{\mu}_2}\frac{\log \alpha_1}{L_1}}$ such that the  error probability is fixed and we choose $\alpha_1 = 0.01$.    The mutual information is zero for $\tilde{\mu}_2 = \mu_2 =1$, corresponding to $a_2 = -a_1$ and $(\mathsf{T},\mathsf{D}) = (\mathsf{T}_{\rm dec},\mathsf{D}_{\rm dec})$.    In Fig.~\ref{fig:ICont2}   we can easily observe the optimality of the test for  $\tilde{\mu}_2 = \mu_2 =1$    where the decision time takes its minimal value given by  $\mathrm{E}\left[\mathsf{T}_{\rm dec}|\mathsf{D}_{\rm dec}=1, \mathsf{H}=1\right]$. Fig.~\ref{fig:ICont} and Fig.~\ref{fig:ICont2} illustrate the advantage of using mutual information as a measure for 
testing optimality with respect to  the average decision time.   The mutual information is a useful quantity since $I(\mathsf{H};\mathsf{T}|\mathsf{D})=0$ at the optimal point, whereas the  average decision time $\mathrm{E}[\mathsf{T}|\mathsf{H}] = \mathrm{E}[\mathsf{T}_{\rm{dec}}|\mathsf{H}]$, and hence we require knowledge of $\mathrm{E}[\mathsf{T}_{\rm{dec}}|\mathsf{H}]$ to test optimality using decision times.    
    
\subsubsection{Practical implementation of tests for optimality of continuous observation processes}     
Implementation of our tests for optimality in a computer does not allow to directly treat the absolutely continuous random variable $\mathsf{T}$.   Moreover as any practical time measurement device has a finite time resolution, we are only able to retrieve $\mathsf{T}$ up to a finite quantization resolution.   Thus, we discuss here how far finite resolution of $\mathsf{T}$ influences our tests for optimality.  
  Note that we still consider that the decision device operates in continuous time and also that  the observation process is continuous.    Measuring the decision times $\mathsf{T}$ under a finite resolution  $T_{\rm r}$ is equivalent to discretizing the distributions (\ref{eq:distriTime1}) - (\ref{eq:distriTime4}) 
such that $P(\mathsf{N}=n|\mathsf{D},\mathsf{H})=\int_{(n-1)T_r}^{nT_r}dtp_{\mathsf{T}}(t|\mathsf{D},\mathsf{H})$ with $\mathsf{N}\in \mathbb{N}$ being a discrete random variable.   Corresponding to (\ref{eq:mutualContI}) the mutual information   $I\left(\mathsf{H};\mathsf{N} |\mathsf{D} \right)$ can be expressed by 
\begin{IEEEeqnarray}{rCL}
\lefteqn{I\left(\mathsf{H};\mathsf{N} |\mathsf{D} \right) =  \frac{1+\alpha_1}{2}\log_2\left(1+\alpha_1\right)  }&&
\nonumber\\
&& - \frac{L_1}{4\sqrt{\pi b }} \sum^{\infty}_{n=1}\: \int^{nT_{\rm r}}_{(n-1)T_{\rm r}} \frac{dt}{t^{3/2}}\exp\left(-\frac{\left(|a_1| t-L_1\right)^2}{4bt}\right)  \log_2\left(1 + \alpha_1  \frac{\int^{nT_{\rm r}}_{(n-1)T_{\rm r}} \frac{dt}{t^{3/2}} \exp\left(-\frac{\left(|a_2| t-L_1\right)^2}{4bt}\right)}{\int^{nT_{\rm r}}_{(n-1)T_{\rm r}} \frac{dt}{  t^{3/2}} \exp\left(-\frac{\left(|a_1| t-L_1\right)^2}{4bt}\right)}  \right)  \nonumber\\
 && - \frac{\alpha_1 L_1}{4\sqrt{\pi b }} \sum^{\infty}_{n=1} \: \int^{nT_{\rm r}}_{(n-1)T_{\rm r}} \frac{dt}{t^{3/2}} \exp\left(-\frac{\left(|a_2| t-L_1\right)^2}{4bt}\right)     \log_2\left(\alpha_1 +  \frac{\int^{nT_{\rm r}}_{(n-1)T_{\rm r}} \frac{dt}{ t^{3/2}} \exp\left(-\frac{\left(|a_1| t-L_1\right)^2}{4bt}\right)}{\int^{nT_{\rm r}}_{(n-1)T_{\rm r}} \frac{dt}{  t^{3/2}} \exp\left(-\frac{\left(|a_2| t-L_1\right)^2}{4bt}\right)}  \right).  \nonumber\\
     \label{eq:mutualContInfxx}
  \end{IEEEeqnarray}      
   Fig.~\ref{fig:ICont}  illustrates  the impact of the discretization time $T_r$ on  $I\left(\mathsf{H};\mathsf{N} |\mathsf{D} \right)$.   Note that for $a_1 = -a_2$, corresponding to $\tilde{\mu}_2=1$, the mutual information $I\left(\mathsf{H};\mathsf{N} |\mathsf{D} \right) = 0$ for any value of $T_r$, since  by the data processing inequality time discretization of $\mathsf{T}$ can just discard information \cite[Theorem~2.8.1]{CoverBook2}.    However, for  $a_1 \neq -a_2$, corresponding to $\tilde{\mu}_2\ne 1$, the mutual information might significantly decrease because of discarding information by time discretization.  Fig.~\ref{fig:ICont} shows that for $T_{\rm r} \sim 0.1 \:{\rm E}\left[\mathsf{T}_{\rm dec}|\mathsf{H}=1\right]$  the mutual information $I\left(\mathsf{H};\mathsf{N} |\mathsf{D} \right) \approx I\left(\mathsf{H};\mathsf{T} |\mathsf{D} \right)$ and the effect of finite resolution is negligible.

 Direct implementation of our tests for optimality also requires to deal with a finite number of runs of the test.   In    Fig.~\ref{fig:ICont3} we evaluate the dependency of the estimate  $\hat{I}\left(\mathsf{H};\mathsf{N} |\mathsf{D} \right)$ on the number of runs of the test for suboptimal tests ($\tilde{\mu}_2 = 0.5$, $\tilde{\mu}_2 = 1.5$) and an optimal test ($\tilde{\mu}_2 = 1$).    The estimate of the mutual information decreases with the number of test runs,  and for suboptimal tests converges to a theoretical value which is larger than zero.    For an optimal sequential decision-making test, the estimate of the  mutual information  converges to zero.   Note that this is contrary to the case of discrete-time observation processes, as illustrated in    Fig.~\ref{figruns},  where the estimate of the mutual information, even in the optimal case,  saturates as a function of the number of test runs and  converges to a positive value.
 
\section{Discussion}\label{sec:7} 
In the present paper we have shown that optimality of black box decision devices can be tested by studying decision time distributions given the knowledge of the actual hypothesis and the decision variable. To obtain these results we have shown that decisions times of binary sequential probability ratio tests of continuous observation processes satisfy fluctuation relations given by Theorem~\ref{Theorem1} and Theorem~\ref{Theorem2}.   Based on these fluctuation relations we have shown that the conditional mutual information $I(\mathsf{H}, \mathsf{T}_{\rm dec}|\mathsf{D}_{\rm dec})$ between the hypothesis $\mathsf{H}$, the decision time $\mathsf{T}_{\rm dec}$ conditioned on the decision variable $\mathsf{D}_{\rm dec}$ is equal to zero, see Corollary~\ref{Corollary3}. Using several numerical experiments we have illustrated our statistical tests.   We have also discussed the limitations of our tests for sequential decision-making based on  discrete-time observations.

Applying our tests for optimality has several advantageous properties.   Testing the necessary conditions given by Theorem~\ref{Theorem1}  and Corollary~\ref{Corollary3}  requires knowledge about three random variables, namely, the hypothesis $\mathsf{H}$,  the decision variable $\mathsf{D}$, and the output time  of the decision device $ \mathsf{T} + \mathsf{T}_{\rm delay}$.   Note that we do not require direct measurements of the decision time $\mathsf{T}$, but allow for random or deterministic delay $\mathsf{T}_{\rm delay}$ in the output time, which needs to be statistically independent of  $\mathsf{H}$ conditioned on $\mathsf{T}$ and $\mathsf{D}$.   Remarkably,  the statistics of the actual observation process and  the properties of the decision-making device, such as the allowed error probabilities $\alpha_1$ and $\alpha_2$, are not required.      For these reasons our tests are well applicable under practical experimental conditions.
   We now discuss a few practical examples.

Studies in cognitive psychology  have measured the reaction time distributions in experiments of   two-choice decision tasks performed by human subjects about simple perceptual and cognitive stimuli, see e.g. \cite{ratcliff2004comparison, ratcliff2008diffusion}.    For fast decisions  -- of the order of one second -- distributions of reaction times and error probabilities can  be well described with a  simple model for  sequential decision-making in  continuous time   \cite{ratcliff2004comparison, ratcliff2008diffusion}.    Neural activity associated with the actual decision-making process has been identified in experiments with  rhesus monkeys  trained to perform rapid two-choice decisions in simple visual tasks \cite{ shadlen2001neural, roitman2002response}.   Interestingly, it was found that the firing rates of neurons in the lateral intra-parietal area correlate with  the cumulative evidence associated with the hypothesis,  and that a decision model based on a threshold crossing process describes the decision-time data well \cite{kira2015neural}.
Furthermore, it has been conjectured that  the cortex and basal ganglia, two brain regions in vertebrates, perform a multihypothesis sequential probability ratio test \cite{bogacz2007basal, bogacz2007optimal}, which is  optimal for small error probabilities \cite{draglia1999multihypothesis, tartakovsky2014sequential}.    Theorem~\ref{Theorem1} and Corollary~\ref{Corollary3} may be used as  tools to quantify the closeness to  optimality of sequential decision making by human subjects or monkeys in two-choice decision tasks.   In this regard, note that experiments of  two-choice decision tasks   performed by human subjects or monkeys allow to measure reaction times,  decision variables, and the actual realizations of the  hypothesis, which are known by the construction of the experiment.

Cell fate decisions are important changes of cell behavior in response to external signals.
Examples are cell division controlled by growth factors, programmed cell death due to signals or the
differentiation of pluripotent progenitor cell to a specific cell type as a result of biochemical signals.
Cellular signaling events that control cell fate can involve signaling molecules, such as, hormones, growth factors, and cytokines \cite{losick2008stochasticity, raj2008nature}.  Because of intrinsic and extrinsic noise, cellular signaling processes have a stochastic component.
Cell-fate decisions can be considered as an example of sequential decision-making based on a sequence of
noisy input signals. An example of  how cells could implement sequential probability ratio tests with simple examples of protein reaction networks has been given in  \cite{Siggia2013}. Theorem~\ref{Theorem1} and Corollary~\ref{Corollary3} could
be used to investigate the degree of optimality of cell-fate decisions. The timing of cell-fate decisions could be measured in experiments by monitoring the expression levels of fluorescently labelled molecular markers associated with the cell-fate transition within clonal populations \cite{losick2008stochasticity, raj2008nature}. Following at the same time the input signals could in principle permit to calculate the differences of decision time distributions of correct and incorrect
decisions.

As already stated with our introductory example on obstacle detection for autonomous cars, sequential binary decision problems arise in many engineered systems. However, different to the assumption made for the Wald test the statistics of the observation processes $\mathbb{P}_l$ ($l=1,2$) are often unknown, corresponding to a nonparametric decision problem. One approach to tackle such sequential decision problems is to apply neural networks in combination with reinforcement learning \cite{GuoKuh97}. The approach presented in \cite{GuoKuh97} closely approximates the behavior of the optimal sequential probability ratio test and achieves a similar performance. Alternatively, in \cite{teng2015learning} an approach for nonparametric binary sequential hypothesis testing is presented, where the binary sequential detector is learned form training samples based on a so-called \emph{Wald-Kernel}. The aim of these algorithms is to use the available measurements in an optimal way such that the average time to take a decision is minimized. However, the behavior of algorithms like neural networks \cite{bishop2006pattern}, \cite{engel2001statistical} can hardly be analyzed making them similar to a black-box decision device causing the problem to verify their optimality which nevertheless is crucial for application in safety critical systems like autonomous cars. This gap can be filled by out test for optimality based on Theorem~\ref{Theorem1} and Corollary~\ref{Corollary3} allowing to determine the degree of optimality of these decision-making devices just requiring the actual hypothesis $\mathsf{H}$, the decision variable $\mathsf{D}$ and the decision time $\mathsf{T}$ of several test runs. This is especially important to determine, whether the learning process already converged sufficiently.

So far our approach is limited to binary sequential probability ratio tests without prior knowledge on the hypotheses.     Sequential probability ratio tests have been extended to a Bayesian setting where prior knowledge on the hypothesis $\mathsf{H}$ is available \cite[Ch.~6.2]{Melsa1978}, and have also been extended to the multihypothesis scenario. The extension of
our results to these settings is for further study.

\appendices
\section{Proof of Theorem~\ref{Theorem2}}\label{App_ProofTheo2}
\begin{proof}
We first show that the log-likelihood ratio $\mathsf{S}_t$ is odd under the transformation given by the involution $\Theta$. This can be shown as follows
\begin{IEEEeqnarray}{rCL}
e^{\mathsf{S}_t(\Theta(\omega))}&=&\frac{\mathrm{d}\mathbb{P}_1|_{\mathcal{F}_t}}{\mathrm{d}\mathbb{P}_2|_{\mathcal{F}_t}}(\Theta(\omega))\label{eq:31+}\\
&=&\frac{\mathrm{d}\mathbb{P}_1|_{\mathcal{F}_t}}{\mathrm{d}(\mathbb{P}_1\circ\Theta)|_{\mathcal{F}_t}}(\Theta(\omega))\\
&=&\frac{\mathrm{d}(\mathbb{P}_1\circ\Theta)|_{\mathcal{F}_t}}{\mathrm{d}(\mathbb{P}_1\circ\Theta\circ\Theta)|_{\mathcal{F}_t}}(\omega)\\
&=&\frac{\mathrm{d}(\mathbb{P}_1\circ\Theta)|_{\mathcal{F}_t}}{\mathrm{d}\mathbb{P}_1|_{\mathcal{F}_t}}(\omega)\\
&=&\frac{\mathrm{d}\mathbb{P}_2|_{\mathcal{F}_t}}{\mathrm{d}\mathbb{P}_1|_{\mathcal{F}_t}}(\omega)\\
&=&e^{-\mathsf{S}_t(\omega)}. \label{eq:36}
\end{IEEEeqnarray}
Let 
\begin{IEEEeqnarray}{rCL}
\Phi_1(t)&=&\{\omega\in\Omega:\ \mathsf{T}_{\mathrm{dec}}(\omega)\le t \textrm{ and } \mathsf{D}_{\rm dec}(\omega)=1\}\\
\Phi_2(t)&=&\{\omega\in\Omega:\ \mathsf{T}_{\mathrm{dec}}(\omega)\le t \textrm{ and } \mathsf{D}_{\rm dec}(\omega)=2\}
\end{IEEEeqnarray}
be the set of trajectories for which the decision time does not exceed $t$ and the test decides for $\mathsf{D}_{\rm dec}=1$ and $\mathsf{D}_{\rm dec}=2$, respectively.
Since $\alpha_1=\alpha_2$ we have also $L_1=-L_2$, and because of the property $\mathsf{S}_t(\Theta(\omega))=-\mathsf{S}_t(\omega)$, it follows that 
\begin{IEEEeqnarray}{rCL}
\Phi_1(t)=\Theta\left(\Phi_2(t)\right). 
\end{IEEEeqnarray}
Therefore, also
\begin{IEEEeqnarray}{rCL}
\mathbb{P}_2(\Phi_1(t))&=&(\mathbb{P}_1\circ \Theta)(\Phi_{1}(t))\\
&=&\mathbb{P}_1(\Theta(\Phi_{1}(t)))\\
&=&\mathbb{P}_1(\Phi_{2}(t)).
\end{IEEEeqnarray}
Now the following holds
\begin{IEEEeqnarray}{rCL}
\mathbb{P}_1(\Phi_{1}(t))
&=&\int_{\omega\in\Phi_{1}(t)}\mathrm{d}\mathbb{P}_1|_{\mathcal{F}_t}\label{Eq42}\\
&=&\int_{\omega\in\Phi_{1}(t)}e^{\mathsf{S}_t}\mathrm{d}\mathbb{P}_2|_{\mathcal{F}_t} \label{ProofFluc2_1}\\
&=&\int_{\omega\in\Phi_{1}(t)}e^{\mathsf{S}_t}\mathrm{d}(\mathbb{P}_1\circ\Theta)|_{\mathcal{F}_t} \label{ProofFluc2_2}\\
&=&\int_{\omega\in\Theta(\Phi_{1}(t))}e^{\mathsf{S}_t(\Theta (\omega))}\mathrm{d}\mathbb{P}_1|_{\mathcal{F}_t} \label{ProofFluc2_3}\\
&=&\int_{\omega\in\Phi_{2}(t)}e^{-\mathsf{S}_t}\mathrm{d}\mathbb{P}_1|_{\mathcal{F}_t} \label{ProofFluc2_4}\\
&=&\int_{\omega\in\Phi_{2}(t)}e^{-\mathsf{S}_{\mathsf{T}_{\mathrm{dec}}}}\mathrm{d}\mathbb{P}_1|_{\mathcal{F}_t} \label{ProofFluc2_5}\\
&=&e^{-L_2}\int_{\omega\in\Phi_{2}(t)}\mathrm{d}\mathbb{P}_1|_{\mathcal{F}_t}\label{ProofFluc2_6}\\
&=&e^{-L_2}\ \mathbb{P}_1(\Phi_{2}(t))\label{ProofFluc2_7}
\end{IEEEeqnarray}
where for (\ref{ProofFluc2_1}) we have used the Radon-Nikod\'ym theorem and the definition in (\ref{DefLikelihoodCont}). For equality~(\ref{ProofFluc2_2}) we have used the involution relation~(\ref{InvolutionCond}) between the measures.  In~(\ref{ProofFluc2_3}) we have applied a variable transformation in the integral. In~(\ref{ProofFluc2_4}) we have used the sign reversal of $\mathsf{S}_t$ given by Eqs.~(\ref{eq:31+}) and~(\ref{eq:36}) and the involution relation between the sets $\Phi_2(t)=\Theta\left(\Phi_1(t)\right)$. In~(\ref{ProofFluc2_5}) we have applied Doob's optional sampling theorem to the $\mathbb{P}_1$-martingale $e^{-\mathsf{S}_t}$. For~(\ref{ProofFluc2_6}) we have used that $e^{-\mathsf{S}_t}$ is a continuous process and reaches the value $e^{-L_2}$  at the time $\mathsf{T}_{\rm dec}$. 

The probability density functions of $\mathsf{T}_{\rm dec}$ can be expressed in terms of the derivatives of the cumulative distributions $\mathbb{P}(\Phi_k(t))$ ($k=1,2$)
\begin{IEEEeqnarray}{rCL}
p_{\mathsf{T}_{\rm dec}}(t|\mathsf{H}=1,\mathsf{D}_{\rm dec}=1)P(\mathsf{D}_{\rm dec}=1|\mathsf{H}=1)&=&\frac{\mathrm{d}}{\mathrm{d}t}{\mathbb{P}_1(\Phi_1(t))}\label{xDistribDef1}\\
p_{\mathsf{T}_{\rm dec}}(t|\mathsf{H}=1,\mathsf{D}_{\rm dec}=2)P(\mathsf{D}_{\rm dec}=2|\mathsf{H}=1)&=&\frac{\mathrm{d}}{\mathrm{d}t}{\mathbb{P}_1(\Phi_2(t))}.\label{xDistribDef2}
\end{IEEEeqnarray}
For the ratio of the decision probabilities we find
\begin{IEEEeqnarray}{rCL}
\frac{P(\mathsf{D}_{\rm dec}=1|\mathsf{H}=1)}{P(\mathsf{D}_{\rm dec}=2|\mathsf{H}=1)}=\frac{1-\alpha_2}{\alpha_2}=\frac{1-\alpha_1}{\alpha_2}=e^{-L_2} \label{eq:56} 
\end{IEEEeqnarray}
which follows from 
\begin{IEEEeqnarray}{rCL}
P(\mathsf{D}_{\rm dec}=1|\mathsf{H}=1) &=& \lim_{t\to\infty} \mathbb{P}_1(\Phi_{1}(t)) \\
P(\mathsf{D}_{\rm dec}=2|\mathsf{H}=1) &=& \lim_{t\to\infty} \mathbb{P}_1(\Phi_{2}(t)). \end{IEEEeqnarray}
Eq.~(\ref{ProofFluc2_7}), and the assumption that the test terminates almost surely. Notice that we have used symmetric error probabilities for which $\alpha_1=\alpha_2$. 
Taking the derivative of the LHS of (\ref{Eq42}) and the RHS of (\ref{ProofFluc2_7}) and using Eq.~(\ref{eq:56}) we prove Eq.~(\ref{eq:29}). Analogously, Eq.~(\ref{eq:30}) can be proved.  

Equation~(\ref{eq:31}) follows from the identities 
\begin{eqnarray}
\lefteqn{p_{\mathsf{T}_{\rm dec}}(t|\mathsf{D}_{\rm dec}=d) }&& \nonumber\\ 
 &=&\! \! p_{\mathsf{T}_{\rm dec}}(t|\mathsf{D}_{\rm dec}=d,\mathsf{H}=1 )P\left(\mathsf{H}=1|\mathsf{D}_{\rm dec}=d\right) + p_{\mathsf{T}_{\rm dec}}(t|\mathsf{D}_{\rm dec}=d,\mathsf{H}=2 )P\left(\mathsf{H}=2|\mathsf{D}_{\rm dec}=d\right)  \nonumber
\\ 
&=&\!\!  p_{\mathsf{T}_{\rm dec}}(t|\mathsf{D}_{\rm dec}=d,\mathsf{H}=1 )P\left(\mathsf{H}=1|\mathsf{D}_{\rm dec}=d\right) + p_{\mathsf{T}_{\rm dec}}(t|\mathsf{D}_{\rm dec}=d,\mathsf{H}=1 )P\left(\mathsf{H}=2|\mathsf{D}_{\rm dec}=d\right) \nonumber
\\ 
&=&\!\!  p_{\mathsf{T}_{\rm dec}}(t|\mathsf{D}_{\rm dec}=d,\mathsf{H}=1 ) \label{eq:57}
\end{eqnarray}  
with $d\in\left\{1,2\right\}$ and 
where we have used Theorem~\ref{Theorem1}.    Using (\ref{eq:29}) and (\ref{eq:57}) we find (\ref{eq:31}), which completes the proof.
\end{proof}

\section{Proof of Corollary~\ref{Corollary4}}\label{App_Corollary4}
\begin{proof}
The mutual information in (\ref{MutInfIndepTermTime_Cont2}) is given by
\begin{IEEEeqnarray}{rCL}
&&I(\mathsf{H};\mathsf{T}_{\rm dec}) = \mathrm{E}\left[\log \left(\frac{p_{\mathsf{T}_{\rm dec}}(\mathsf{T}_{\rm dec}|\mathsf{H})}{p_{\mathsf{T}_{\rm dec}}(\mathsf{T}_{\rm dec})}\right)\right] 
\nonumber\\ 
&&\quad= \mathrm{E}\left[\log \left(\frac{P(\mathsf{D}_{\rm dec}=1) p_{\mathsf{T}_{\rm dec}}(\mathsf{T}_{\rm dec}|\mathsf{H}, \mathsf{D}_{\rm dec}=1) +P(\mathsf{D}_{\rm dec}=2) p_{\mathsf{T}_{\rm dec}}(\mathsf{T}_{\rm dec}|\mathsf{H}, \mathsf{D}_{\rm dec}=2)}{p_{\mathsf{T}_{\rm dec}}(\mathsf{T}_{\rm dec})}\right)\right]. \nonumber\\\label{eq:corollary67}
\end{IEEEeqnarray}
We find 
\begin{IEEEeqnarray}{rCL}
\frac{P(\mathsf{D}_{\rm dec}=1)}{P(\mathsf{D}_{\rm dec}=2)} &=&\frac{P(\mathsf{D}_{\rm dec}=1|\mathsf{H}=1)P(\mathsf{H}=1) +P(\mathsf{D}_{\rm dec}=1|\mathsf{H}=2)P(\mathsf{H}=2) }{P(\mathsf{D}_{\rm dec}=2|\mathsf{H}=1)P(\mathsf{H}=1) +P(\mathsf{D}_{\rm dec}=2|\mathsf{H}=2)P(\mathsf{H}=2)} \\  
&=& \frac{P(\mathsf{D}_{\rm dec}=1|\mathsf{H}=1)}{P(\mathsf{D}_{\rm dec}=2|\mathsf{H}=1)}\:\cdot\: \frac{P(\mathsf{H}=1) +\frac{P(\mathsf{D}_{\rm dec}=1|\mathsf{H}=2)}{P(\mathsf{D}_{\rm dec}=1|\mathsf{H}=1)}P(\mathsf{H}=2) }{P(\mathsf{H}=1) +\frac{P(\mathsf{D}_{\rm dec}=2|\mathsf{H}=2)}{P(\mathsf{D}_{\rm dec}=2|\mathsf{H}=1)}P(\mathsf{H}=2)} 
\\ 
&=&  \frac{1-\alpha}{\alpha}\:\cdot\: \frac{P(\mathsf{H}=1) +\frac{\alpha}{1-\alpha}P(\mathsf{H}=2) }{P(\mathsf{H}=1) +\frac{1-\alpha}{\alpha}P(\mathsf{H}=2)} \label{eq:70}  \\ 
&=&  \frac{(1-\alpha)P(\mathsf{H}=1) + \alpha P(\mathsf{H}=2) }{\alpha P(\mathsf{H}=1) +(1-\alpha)P(\mathsf{H}=2)}   \\
&=& 1 \label{eq:corollary72}
\end{IEEEeqnarray}
where we have used that  the priors on $\mathsf{H}$ are identical, and that $\alpha_1=\alpha_2=\alpha$.      For (\ref{eq:70}) we have used (\ref{eq:27}) and (\ref{eq:56}).    As $\mathsf{D}_{\rm dec}$ is a binary random variable, it follows that $P(\mathsf{D}_{\rm dec}=1) = P(\mathsf{D}_{\rm dec}=2) = \frac{1}{2}$.

It also holds that 
\begin{IEEEeqnarray}{rCL}
p_{\mathsf{T}_{\rm dec}}(\mathsf{T}_{\rm dec}) &=& \sum_{h\in \left\{1,2\right\}}\sum_{d\in\left\{1,2\right\}} P(\mathsf{H}=h)P(\mathsf{D}_{\rm dec}=d|\mathsf{H}=h)\:p_{\mathsf{T}_{\rm dec}}(\mathsf{T}_{\rm dec}|\mathsf{H}=h, \mathsf{D}_{\rm dec}=d) \\
&=& \sum_{h\in \left\{1,2\right\}}\sum_{d\in\left\{1,2\right\}} P(\mathsf{H}=h)P(\mathsf{D}_{\rm dec}=d|\mathsf{H}=h)\:p_{\mathsf{T}_{\rm dec}}(\mathsf{T}_{\rm dec}|\mathsf{H}, \mathsf{D}_{\rm dec}=d) \label{eq:step1}\\ 
&=&p_{\mathsf{T}_{\rm dec}}(\mathsf{T}_{\rm dec}|\mathsf{H}, \mathsf{D}_{\rm dec}) \sum_{h\in \left\{1,2\right\}}\sum_{d\in\left\{1,2\right\}} P(\mathsf{H}=h)P(\mathsf{D}_{\rm dec}=d|\mathsf{H}=h) \label{eq:step2}\\ 
&=& p_{\mathsf{T}_{\rm dec}}(\mathsf{T}_{\rm dec}|\mathsf{H}, \mathsf{D}_{\rm dec}) \label{eq:corollary76}
\end{IEEEeqnarray}
where in (\ref{eq:step1}) we have used Theorem~\ref{Theorem1} and in (\ref{eq:step2}) we have used Theorem~\ref{Theorem2}.  

Using (\ref{eq:corollary72}) and (\ref{eq:corollary76}) we get for the argument of the $\log$ in (\ref{eq:corollary67})
\begin{IEEEeqnarray}{rCL}
\lefteqn{\frac{P(\mathsf{D}_{\rm dec}=1) p_{\mathsf{T}_{\rm dec}}(\mathsf{T}_{\rm dec}|\mathsf{H}, \mathsf{D}_{\rm dec}=1) +P(\mathsf{D}_{\rm dec}=2) p_{\mathsf{T}_{\rm dec}}(\mathsf{T}_{\rm dec}|\mathsf{H}, \mathsf{D}_{\rm dec}=2)}{p_{\mathsf{T}_{\rm dec}}(\mathsf{T}_{\rm dec})}}&\ \ \ \ \ \ \   &
\nonumber\\
&&=\frac{p_{\mathsf{T}_{\rm dec}}(\mathsf{T}_{\rm dec}|\mathsf{H}, \mathsf{D}_{\rm dec}=1) +p_{\mathsf{T}_{\rm dec}}(\mathsf{T}_{\rm dec}|\mathsf{H}, \mathsf{D}_{\rm dec}=2)}{2\,p_{\mathsf{T}_{\rm dec}}(\mathsf{T}_{\rm dec}|\mathsf{H}, \mathsf{D}_{\rm dec})} 
\\ 
&&=\frac{p_{\mathsf{T}_{\rm dec}}(\mathsf{T}_{\rm dec}|\mathsf{H}, \mathsf{D}_{\rm dec})}{p_{\mathsf{T}_{\rm dec}}(\mathsf{T}_{\rm dec}|\mathsf{H}, \mathsf{D}_{\rm dec})}  
\\ 
&&= 1
\end{IEEEeqnarray}
where we have applied again Theorem~\ref{Theorem2}.   This completes the proof.
\end{proof}

\section*{Acknowledgement} 
We acknowledge Yannis Kalaidzidis, Mostafa Khalili-Marandi, and Marino Zerial  for fruitful discussions.

\bibliographystyle{IEEEtran}
\bibliography{IEEEabrv,Bib_mod_IEEE}

\end{document}